\documentclass[opre,nonblindrev]{informs3rad}
\OneAndAHalfSpacedXI
\addtolength{\textwidth}{0.6cm}
\addtolength{\hoffset}{-0.3cm}
\addtolength{\textheight}{0.7cm}
\addtolength{\voffset}{-0.3cm}

\usepackage{comment}


\ECRepeatTheorems


\MANUSCRIPTNO{OPRE-2021-05-337} 

\usepackage{xfrac}
\usepackage{bbm}
\usepackage{tcolorbox}
\DeclareRobustCommand{\mybox}[2][gray!15]{
\begin{tcolorbox}[  
        left=0pt,
        right=0pt,
        top=0pt,
        bottom=0pt,
        colback=#1,
        colframe=#1,
        enlarge left by=0mm,
        boxsep=10pt,
        arc=2pt,outer arc=2pt,
        ]
        #2
\end{tcolorbox}
}

\usepackage{csquotes}
\makeatletter
\renewenvironment*{displayquote}
  {\begingroup\setlength{\leftmargini}{1cm}\csq@getcargs{\csq@bdquote{}{}}}
  {\csq@edquote\endgroup}
\makeatother

\usepackage{comment}

\usepackage{colortbl}
\usepackage{bbm}
\usepackage{hhline}
\usepackage{mathtools}
\usepackage{booktabs}

\usepackage{subfig}

\usepackage{ifthen}
\usepackage{algpseudocode,algorithm,algorithmicx}

\algrenewcommand\algorithmicrequire{\textbf{Precondition:}}
\algrenewcommand\algorithmicensure{\textbf{Postcondition:}}
\usepackage{graphicx,color}
\usepackage{fancybox}
\usepackage{array,float}
\usepackage{url}
\begingroup
\makeatletter
\@for\theoremstyle:=definition,remark,plain,observation\do{%
	\expandafter\g@addto@macro\csname th@\theoremstyle\endcsname{%
		\addtolength\thm@preskip\parskip
	}%
}
\endgroup
\usepackage{mathrsfs}

\definecolor{cornellred}{rgb}{0.7, 0.11, 0.11}
\definecolor{dgreen}{rgb}{0.0, 0.5, 0.0}
\definecolor{ballblue}{rgb}{0.13, 0.67, 0.8}
\definecolor{royalblue(web)}{rgb}{0.25, 0.41, 0.88}
\definecolor{bleudefrance}{rgb}{0.19, 0.55, 0.91}
\definecolor{royalazure}{rgb}{0.0, 0.22, 0.66}
\usepackage{hyperref}
\hypersetup{
	colorlinks = true,
	linkcolor=royalazure,
	citecolor=royalazure,
	linkbordercolor = {white}
}

\usepackage{thmtools}
\usepackage{thm-restate}
\usepackage{cleveref}
\usepackage{enumitem}
\usepackage{multirow}
\usepackage{changepage}
\usepackage{tikz}

\DeclareRobustCommand*\cal{\@fontswitch\relax\mathcal}

\usepackage{pgf,tikz,pgfplots}
\pgfplotsset{compat=1.15}
\usetikzlibrary{arrows}
\usepackage{pgfplots}
\usetikzlibrary{patterns}
\usepgfplotslibrary{fillbetween}
\usetikzlibrary{intersections}

\usetikzlibrary{arrows, decorations.markings}

\tikzstyle{vecArrow} = [thick, decoration={markings,mark=at position
	1 with {\arrow[semithick]{open triangle 60}}},
	double distance=1.4pt, shorten >= 5.5pt,
	preaction = {decorate},
postaction = {draw,line width=1.4pt, white,shorten >= 4.5pt}]
\tikzstyle{innerWhite} = [semithick, white,line width=1.4pt, shorten >= 4.5pt]

\usepackage{perpage}

\makeatletter

\usepackage{accents}
	
\usepackage{natbib}
 \bibpunct[, ]{(}{)}{,}{a}{}{,}%
 \def\bibsep{\smallskipamount}%

\setcitestyle{authoryear}

\TheoremsNumberedThrough 
\EquationsNumberedThrough

	\newcommand*{\R}{\mathbb{R}}

	\DeclareMathOperator{\OPT}{\texttt{OPT}}

	\providecommand{\given}{}
	\DeclarePairedDelimiterX{\set}[1]\{\}{\renewcommand\given{\nonscript\:\delimsize\vert\nonscript\:\mathopen{}}#1}
	\let\Pr\relax
	\DeclarePairedDelimiterXPP{\Pr}[1]{\mathbb{P}}[]{}{\renewcommand\given{\nonscript\:\delimsize\vert\nonscript\:\mathopen{}}#1}
	\DeclarePairedDelimiterXPP{\Ex}[1]{\mathbb{E}}[]{}{\renewcommand\given{\nonscript\:\delimsize\vert\nonscript\:\mathopen{}}#1}

\newcolumntype{P}[1]{>{\centering\arraybackslash}c{#1}}

	\newcommand{\RN}[1]{\textcolor{red}{\bf (Rad: #1)}}       
	
	\newcommand{\AS}[1]{\textcolor{blue}{\bf (\em Amin: #1)}}

\newcommand*{\rom}[1]{\expandafter\romannumeral #1}
\newcommand{\Rom}[1]{\uppercase\expandafter{\romannumeral #1\relax}}

\newcommand{\onlineUnweightedRatioText}
{(1-\sfrac{1}{e}+\sfrac{1}{e^2})}
\newcommand{\onlineUnweightedRatio}
{1-\frac{1}{e}+\frac{1}{e^2}}
\newcommand{\onlineWeightedRatio}{0.7613}

\newcommand{\primed}{^\dagger}

\newcommand{\combprob}{\lambda}
\newcommand{\matchprob}{q}
\newcommand{\matchprobi}{\matchprob_{\driver}}
\newcommand{\matchprobs}{\{\matchprob_\driver\}_{\driver\in[\totaldriver]}}
\newcommand{\msprob}{p}

\newcommand{\matchprobHat}{\hat\matchprob}
\newcommand{\msprobHat}{\hat{\msprob}}
\newcommand{\msc}{c}

\newcommand{\weight}{w}
\newcommand{\weightHat}{\hat{\weight}}
\newcommand{\weighti}{\weight_\driver}
\newcommand{\weights}{\{\weight_\driver\}_{\driver\in[\totaldriver]}}

\newcommand{\contribution}{\texttt{OBJ}}
\newcommand{\contributioni}{\contribution^{(\mspair)}}

\newcommand{\rider}{i}
\newcommand{\Rider}{D}
\newcommand{\RiderHat}{\hat \Rider}
\newcommand{\Rideri}{D^{(\mspair)}}
\newcommand{\RideriHat}{\hat D^{(\mspair)}}

\newcommand{\driver}{j}
\newcommand{\Driver}{S}
\newcommand{\DriverHat}{\hat S}
\newcommand{\Driveri}{S^{(\mspair)}}
\newcommand{\DriveriHat}{\hat S^{(\mspair)}}
\newcommand{\Driverfirst}{T}
\newcommand{\DriverfirstHat}{\hat \Driverfirst}
\newcommand{\Driverfirsti}{\Driverfirst^{(\mspair)}}
\newcommand{\Driversecond}{\bar\Driverfirst}
\newcommand{\Driversecondi}{\bar\Driverfirst^{(\mspair)}}
\newcommand{\Driversecondb}{\tilde\Driverfirst}

\newcommand{\OPTHat}{\hat{\OPT}}
\newcommand{\GD}{\texttt{GD}}
\newcommand{\GDi}{\GD^{(\mspair)}}

\newcommand{\MS}{\texttt{MS}}

\newcommand{\MSi}{\MS^{(\mspair)}}

\newcommand{\OPTi}{\OPT^{(\mspair)}}

\newcommand{\N}{\mathbb{N}}

\newcommand{\totaldriver}{n}
\newcommand{\totalmspair}{L}
\newcommand{\totalmspairHat}{\hat L}
\newcommand{\mspair}{l}
\newcommand{\msfirst}{s}
\newcommand{\msfirsti}{\msfirst^{(\mspair)}}

\newcommand{\mssecond}{k}
\newcommand{\mssecondi}{\mssecond^{(\mspair)}}
\newcommand{\msseconds}{\{\mssecondi\}_{\mspair\in[\totalmspair]}}

\newcommand{\msdriver}{n}
\newcommand{\msdriveri}{\msdriver^{(\mspair)}}

\newcommand{\msrider}{m}
\newcommand{\msrideri}{\msrider^{(\mspair)}}

\newcommand{\mspairc}{\msc}
\newcommand{\mspairci}{\mspairc^{(\mspair)}}

\newcommand{\mspairinstance}{\{(\msdriveri,\msfirsti,\msrideri)\}_{\mspair\in[\totalmspair]}}

\newcommand{\mspairinstancenew}{\{(\msdriveri,\msfirsti,\msrideri,\mssecondi)\}_{\mspair\in[\totalmspair]}}

\newcommand{\RaHat}{\hat D_{1}}
\newcommand{\RpHat}{\hat D_{2}}

\newcommand{\weightp}{\bar\weight}
\newcommand{\weightdp}{\tilde\weight}

\newcommand{\driverTilde}{\tilde{\driver}}
\newcommand{\driverBar}{\bar{\driver}}

\newcommand{\val}{v}
\newcommand{\thresh}{t}
\newcommand{\Thresh}{\mathcal{T}}
\newcommand{\price}{p}
\newcommand{\acceptprob}{q}

\newcommand{\assignprob}{x}
\newcommand{\assignprobi}{\assignprob_{\rider\driver}}
\newcommand{\assignprobs}{\{\assignprobi\}_{(\rider,\driver)\in E}}

\newcommand{\assignvertexprob}{y}
\newcommand{\assignvertexprobi}{\assignvertexprob_{\rider}}
\newcommand{\assignvertexprobj}{\assignvertexprob_{\driver}}
\newcommand{\Ra}{D_{1}}
\newcommand{\Rp}{D_{2}}
\newcommand{\Rpa}{\tilde{D}_{2}}
\newcommand{\Fu}{F_{i}}

\newcommand{\RP}{\R_{\geq 0}}
\newcommand{\opt}{\texttt{OPT}}
\newcommand{\optoff}{\texttt{OPT}_{\textrm{offline}}}
\newcommand{\optoffmp}{\texttt{OPT}_{\textrm{offline-MP}}}
\newcommand{\opton}{\texttt{OPT}_{\textrm{online}}}
\newcommand{\maxmatch}{\textsc{Max-WeightMatch}^{\mathbf{w}}}
\newcommand{\maxmatchboth}{\textsc{Max-WeightMatch}^{\mathbf{\hat w}}}
\newcommand{\compratioM}{\Gamma_{\textrm{weighted-matching}}}
\newcommand{\compratioMUW}{\Gamma_{\textrm{matching}}}
\newcommand{\compratioMP}{\Gamma_{\textrm{matching-pricing}}}
\newcommand{\objr}{\texttt{OBJ}^{\Ra\cup\Rp}}
\newcommand{\objd}{\texttt{OBJ}^{S}}

\newcommand{\feasibles}{\mathcal{X}}
\newcommand{\quant}{q}

\newcommand{\secondstageratio}{\beta}
\newcommand{\firststagepercent}{\gamma}

%
%

\newcommand{\prob}[2][]{\text{Pr}\ifthenelse{\not\equal{}{#1}}{_{#1}}{}\!\left[{\def\givenn{\middle|}#2}\right]}
\newcommand{\expect}[2][]{\mathbb{E}\ifthenelse{\not\equal{}{#1}}{_{#1}}{}\!\left[{\def\givenn{\middle|}#2}\right]}

\newcommand{\tparen}{\big}
\newcommand{\tprob}[2][]{\text{Pr}\ifthenelse{\not\equal{}{#1}}{_{#1}}{}\tparen[{\def\given{\tparen|}#2}\tparen]}
\newcommand{\texpect}[2][]{\mathbb{E}\ifthenelse{\not\equal{}{#1}}{_{#1}}{}\tparen[{\def\given{\tparen|}#2}\tparen]}

\newcommand{\sprob}[2][]{\text{Pr}\ifthenelse{\not\equal{}{#1}}{_{#1}}{}[#2]}
\newcommand{\sexpect}[2][]{\mathbb{E}\ifthenelse{\not\equal{}{#1}}{_{#1}}{}[#2]}

\newcommand{\indicator}[1]{{\mathbbm{1}\left\{ #1 \right\}}}

\newcommand{\revcolor}[1]{{#1}}
\begin{document}

\TITLE{Two-stage Stochastic Matching and Pricing with Applications to Ride Hailing}

\RUNAUTHOR{Feng, Niazadeh, Saberi}

\RUNTITLE{Two-stage Stochastic Matching and Pricing with Applications to Ride Hailing}

\ARTICLEAUTHORS{%
\AUTHOR{Yiding Feng}
\AFF{Microsoft Research New England, \EMAIL{yidingfeng@microsoft.com}}
\AUTHOR{Rad Niazadeh}
\AFF{University of Chicago Booth School of Business, Chicago, IL, \EMAIL{rad.niazadeh@chicagobooth.edu}}
\AUTHOR{Amin Saberi}
\AFF{Management Science and Engineering, Stanford University, Stanford, CA, \EMAIL{saberi@stanford.edu}}
}

\ABSTRACT{
Matching and pricing are two critical levers in two-sided marketplaces to connect demand and supply. The platform can produce more efficient matching and pricing decisions by batching the demand requests. We initiate the study of  the two-stage stochastic matching problem, with or without pricing, to enable the platform to make improved decisions in a batch with an eye toward the imminent future demand requests. This problem is motivated in part by applications in online marketplaces such as ride hailing platforms.

We design online competitive algorithms for vertex-weighted (or unweighted) two-stage stochastic matching for maximizing supply efficiency, and two-stage joint matching and pricing for maximizing market efficiency. In the former problem, using a randomized primal-dual algorithm applied to a family of ``balancing'' convex programs, we obtain the optimal $3/4$ competitive ratio against the optimum offline benchmark. Using a factor revealing program and connections to submodular optimization, we improve this ratio against the optimum online benchmark to $(1-1/e+1/e^2)\approx 0.767$ for the unweighted and $0.761$ for the weighted case. In the latter problem, we design optimal $1/2$-competitive joint pricing and matching algorithm by borrowing ideas from the ex-ante prophet inequality literature. We also show an improved $(1-1/e)$-competitive algorithm for the special case of demand efficiency objective using the correlation gap of submodular functions. Finally, we complement our theoretical study by using DiDi's ride-sharing dataset for Chengdu city and numerically evaluating the performance of our proposed algorithms in practical instances of this problem.\footnote{\footnotesize{\emph{A preliminary conference version of this work has appeared in the proceeding of the ACM-SIAM Symposium on Discrete Algorithm (SODA'21)~\citep{feng2021two}, which contains only parts of this paper. In the current paper, all of the results in Section~5, Section~6, Appendix~C, and Appendix~D are new and there are several additional insights throughout. Moreover, the current paper presents all the proofs and technical details, most of which were missing in the early conference version.}}}}

\maketitle

\setcitestyle{authoryear}

\let\theHalgorithm=\thealgorithm

\vspace{-4mm}
 \section{Introduction}
 \defcitealias{MSVV-07}{MSVV-07}

The recent growth of two-sided online marketplaces for allocating advertisements, rides, or other goods and services   has led to new interest in matching algorithms and enriched the field with exciting and challenging problems. This is mainly due to the real-time nature of these marketplaces which requires the planner to make matching decisions between the current demand and supply with limited or uncertain information about the future. This dynamic aspect of the matching decision and the uncertainty of the future  are modeled in different ways. One prominent modeling approach in a two-sided matching platform is using online bipartite matching~(\citealp{KVV-90}; \citealp{DJK-13}) or online stochastic bipartite matching (\citealp{FMMM-09}; \citealp{MGS-12}; \citealp{JL-14}). This way of modeling the problem captures scenarios when the matching platform has to match arriving demand to the available supply immediately. This is particularly relevant in the context of online advertising~(\citealp{MSVV-07}; \citealp{BJN-07}; \citealp{FMMS-10}), where arriving search keywords should be matched to available advertisers---with almost no delay.

In applications that allow for some latency, the platform can produce a more efficient matching by accumulating requests and making decisions in a \emph{batch}. For example, 
ride hailing and ridesharing platforms such as Uber~\citep{Uber}, Lyft~\citep{Lyft} and DiDi~\citep{ZHMWZFGY-17} report substantial improvements when shifting from match-as-you-go algorithms to batching, improving the efficiency of the matching. Despite the effectiveness and popularity of the batching paradigm in practice, quantifying its effectiveness from a theoretical lens---and in particular in the context of matching platforms facing future demand uncertainty---is less studied.

The goal of this paper is to extend the batching framework in order to enable the platform to make improved matching decisions in a batch, with an eye toward the immediate future. This is particularly useful when the planner has side information about the compatibility of the imminently arriving demand (e.g., the riders who opened their application and are about to make a ride request) to different available supplies, or has historical information about their demand distributions, maybe in the form of samples from the data. We are also interested in scenarios where the platform not only has an eye toward the immediate future, but also has a handle through pricing, to control the uncertain demand in that future period. We then aim to extend our matching decision making framework to also allow the platform to make improved matching decisions in a batch, jointly with improved pricing decisions for the imminently arriving demand.

As a canonical model to mathematically capture the above scenarios, we consider this two-stage stochastic optimization problem for matching the demand vertices $\Rider$ to the supply vertices $\Driver$:

\mybox{
\setlength{\itemsep}{2pt}
  \setlength{\parskip}{2pt}
  \setlength{\parsep}{2pt}
  \openup 0.6em
\begin{displayquote}
\emph{Two-stage stochastic matching:} We are given a bipartite graph $G=(\Rider,\Driver,E)$,  in which vertices in $\Rider$ are divided into two sets, $\Ra$ and $\Rp$. In the first stage, the algorithm chooses a matching $M_1$ between $\Ra$  and $S$. In the second stage, each vertex $i$ in $\Rp$ arrives independently at random with probability $\pi_i$. The algorithm can choose a matching $M_2$ between the unmatched vertices of $S$ and a subset of $\Rp$ that have arrived. The goal is to maximize $| M_1 \cup M_2 |$ (in expectation).
\end{displayquote}}
\noindent Our techniques apply to a more general version of the problem in which every supply vertex $j \in S$ has a  weight $w_j$ and these weights can be different. The new objective function is to maximize (the expected) sum of the weights of the vertices in $S$ matched by either $M_1$ or $M_2$. 
Naturally, we refer to this problem as the \emph{vertex-weighted two-stage stochastic matching} problem.
\vspace{-2mm}
\subsection{Connections to Ride Hailing Matching and Pricing}
\label{intro:ride hailing}
\noindent The reader may be interested in the formulation of the problem in the context of ride hailing. In that setting, the supply set $\Driver$ represents the set of available drivers in a neighborhood. The first stage demand set $\Ra$ represents the set of riders who have made a request for a ride and need to be matched to a driver. The second stage demand set $\Rp$ represents the set of riders who are entering their destinations in the application and are about to receive a price quote and may or may not make a request. Probability $\pi_i$ captures the rate at which each rider in $\Rp$ will be available for matching in the second stage. \revcolor{The edge set $E$ determines the compatibility of the riders and drivers, most commonly based on their distance and occasionally other factors, such as riders and drivers review scores. Note that in this context, the decision-maker (i.e., the platform) knows the graph in advance, meaning it knows the drivers that will be compatible to each second stage rider, but it does not know whether each second stage rider decides to make a ride request or not.}
\subsubsection{Application to Matching}

In real-world ride hailing applications, the number of available drivers is normally large enough to serve all the riders in $\Ra$; however, the platform still must take into consideration the combination of a long list of objectives to use the supply efficiently. This includes the probability that the supply accepts the ride request, the number of ride requests fulfilled by a nearby supply, or the total traveling distance, among others. The unweighted version of our problem can be seen as a simplification of this objective function in which the goal is to maximize the number of ride requests matched to a compatible supply within a given distance. The vertex weights in our model can be used to increase the priority of the supply nodes who have been idle longer, or to implement policies that prioritize supply nodes with higher ratings or elite status.  Now, the goal is to pick matchings $M_1$ and $M_2$ to maximize the total weight of matched drivers at the end of the second stage---an objective function which we refer to as \emph{supply efficiency}.\footnote{Here we implicitly assume prices are exogenous and known, therefore the matching algorithm knows the rider availability rates $\pi_i$ for each rider in $\Rp$.}





As a remark, one can also consider the edge-weighted version of this problem, in which the weight of an edge captures the degree of compatibility of the two nodes (e.g., as a function of their distance).  This case is uninteresting, at least 
theoretically; the simple algorithm that tosses a fair coin to optimize either for $M_1$ or $M_2$ gets at least $\frac{1}{2}$  of the total weight of the optimum offline and that is the best possible ratio that an algorithm can hope for (see Appendix~\ref{app:general-weight} for details).

\subsubsection{Application to Joint Matching and Pricing} Also important is the price lever available to the platform that can adjust the probability $\pi_i$ of being available for matching in the second stage for vertices $i \in \Rp$. In this scenario, the platform offers a personalized price to each vertex $i \in \Rp$ in the first stage. We consider a Bayesian model where each rider $i \in \Rider$ has an independent value drawn from a known prior distribution for the ride.  Rider $i$ then accepts her price quote -- and hence will be available for matching in the second stage -- if and only if her value for the ride exceeds the price. Therefore, the price offered to rider $i\in \Rp$ determines the probability $\pi_i$ that she is available for matching in the second stage. In the first 
stage, the platform chooses a matching between $\Ra$ and $\Driver$ and offers prices to riders in $\Rp$. In the second stage, it chooses a matching between the unmatched vertices in $\Driver$ and those vertices in $\Rp$ who have accepted the prices. The goal is to maximize a particular objective function through this process. 

The choice of the objective function for this matching/pricing problem requires some care. If we only aim to maximize supply efficiency, the best choice of prices is all zero; this is simply because with increasing prices there will be less riders accepting their price offers to be matched to the drivers. However, from the eye of a market designer who pays attention to both sides of the market, non-zero prices will help with increasing the \emph{demand efficiency}---i.e., increasing total valuation of matched riders or equivalently, their social welfare. At first glance, it might seem intuitive that setting prices to zero will also help with increasing the demand efficiency. This is indeed not the case, as the platform only gets to observe whether riders accepted or declined their price offers, and not their exact valuations. Therefore, non-zero prices will help with demand discovery and filtering out low value buyers from getting matched in the market, and hence can increase the demand efficiency. See Appendix~\ref{apx:onestagepricing} for details of how pricing can help with the demand efficiency of the matching. 

To capture the above trade-off between high prices for demand discovery and low prices for supply efficiency,  we consider maximizing an objective function which we refer to as \emph{market efficiency}. This objective function includes both the expected total weights of the matched drivers in $D$ (i.e., the supply efficiency objective) and the expected social welfare/total values of the matched riders in both $\Ra$ and $\Rp$, given the information of the platform regarding riders in $\Ra$ and $\Rp$ (i.e., the demand efficiency objective). See \Cref{sec:two-stage-matching-pricing-def} for a formal definition.



\subsection{Technical Contributions}
We initiate the study of both two-stage stochastic matching problem and two-stage joint matching and pricing problem with a focus on the worst case competitive analysis of these problems. For measuring the competitive ratio, we consider the  optimum solution in hindsight, which we refer to as the \emph{optimum offline}, and the solution of the computationally-unbounded optimum algorithm, which we refer to as the \emph{optimum online}. We now summarize our main contributions.

\vspace{2mm}
\noindent\textbf{Optimal competitive two-stage matching vs.\ optimum offline:} In \Cref{sec:adversarial}, we  present polynomial-time algorithms with the competitive ratio of $\frac{3}{4}$  compared to the optimum offline for both weighted and unweighted versions of the problem and show that this factor is optimum for both cases of two-stage stochastic matching. Somewhat interestingly, our optimum competitive algorithms do not use any information about the second stage graph (including the edges and availability probabilities $\pi_i$'s). In fact, we show a stronger result and prove that our algorithms achieve the competitive ratio of $\frac{3}{4}$ even when the second stage graph is picked by an oblivious adversary. 

\vspace{1mm}
\textbf{\tikz\draw[black,fill=black] (0,0) circle (.5ex);~Convex programming based weighted-balanced-utilization:} The main idea of both algorithms is to find a balanced allocation of demand $\Ra$ to supply $\Driver$, while also taking the priority weights into consideration in a principled way. We identify a family of convex programs to characterize such a balanced randomized allocation in the first stage---not necessarily of maximum total weight---that hedges against the uncertainty of the  demand in the second stage.  Using the strong duality of the convex programs, we introduce a primal-dual framework that bounds the competitive ratio of randomized integral matchings sampled from the optimal solution of these convex programs.

\vspace{1mm}
\textbf{\tikz\draw[black,fill=black] (0,0) circle (.5ex);~Primal-dual using the convex programming based graph decomposition:} The convex program family we consider also offers a decomposition of the first stage graph into pairs of supply and demand nodes, which in the special unweighted case coincides with the well-studied \emph{matching skeleton} introduced by~\cite{GKK-12}. Notably, this matching skeleton graph decomposition is closely related to the Edmonds-Gallai decomposition~\citep{edm-65,gal-64} of bipartite graphs. In this sense, we generalize the concept of matching skeleton to vertex weighted graphs as a byproduct of our primal-dual approach, which might be of independent interest. We should note that our convex programming based decomposition is constructed using strong duality and incorporating the KKT conditions of the convex program. As a result, it plays a critical role in setting up our improved primal-dual framework for this problem in a way that \emph{diverges from the standard primal-dual framework} in the literature on online bipartite allocations (e.g., \cite{MSVV-07,BJN-07,DJK-13,GNR-14,GU-19,FNS-19,MS-20}). Note that the competitive ratio $(1-1/e)$ is known for the fully online version of the problem, thanks to the classic work of Karp, Vazirani and Vazriani~\citeyearpar{KVV-90} and its extension to vertex-weighted online bipartite matching in \cite{AGKM-11}.
The new algorithmic construct of using convex programming based decomposition in the primal-dual analysis is the key to improve this competitive ratio from $(1-1/e)$ to  $\frac{3}{4}$, which is optimum for our two-stage problem. 


\vspace{2mm}
\noindent\textbf{Complexity of optimum online and an improved competitive algorithm:} In \Cref{sec:hardness-online} and \Cref{sec:stochastic}, we turn our attention to approximating the optimum online. In that case, the problem is in the realm of approximation algorithms and related to  maximizing the sum of monotone submodular and negative linear functions subject to matroid constraints~\citep{CCPV-11,SVW-17}. This connection offers a way to prove that the problem does not admit a Fully Polynomial Time Approximation Scheme (FPTAS) unless $\text{P}=\text{NP}$. Furthermore, it suggests an alternative matching algorithm for the first stage that relies on knowing the second stage graph and availability probabilities $\pi_i$'s. For our first-stage matching, we then output the better (in expectation) of the matchings computed by this alternative algorithm and the $\frac{3}{4}$-competitive algorithm by sampling/simulating the second-stage stochasticity---basically through Monte-Carlo simulation. Using a novel factor revealing program, we show that this approach results in improved competitive ratios of $(1-1/e+1/e^2)\approx 0.767$ and $0.761$ for the unweighted and weighted versions, respectively. 

Perhaps, an important takeaway from the comparison between optimum offline and online as benchmarks is not so much that one can get a better approximation ratio against the latter, but that it reveals so much more of the problem's underlying structure. The $\frac{3}{4}$-competitive algorithm exploits the graph in the first stage effectively through the convex program, but ignores the vertices arriving in the second stage or their probabilities. Nevertheless, it gets the optimum competitive ratio. Only when we switch to the optimum online as a more precise benchmark, can we exploit both the stochastic information and the combinatorial information given to us about the second stage, which in turn help us to design an improved algorithm. On a related note, our improved competitive ratios against optimum online---combined with the tightness of $\frac{3}{4}$-competitive ratio against optimum offline---establish a separation between these two benchmarks. Such a separation is rarely seen in online allocation problems (e.g., see \citealp{anari2019nearly}) and might be of independent interest. 

\vspace{2mm}
\noindent\textbf{Optimal competitive two-stage joint matching and pricing:} In \Cref{sec:matching and pricing}, we discuss the joint matching and pricing problem. There, it is crucial to use both supply efficiency and demand efficiency in the objective function. As mentioned earlier, an algorithm that optimizes only supply efficiency will give a price of zero to all pending riders to allocate as many of them as possible. On the other hand, maximizing demand efficiency will require the platform to price the scarce demand nodes accordingly. To combine these two objectives, we introduce an offline convex optimization which we refer to as \emph{ex ante relaxation}. The optimum solution of this offline program (i) suggests a vector of prices and a randomized matching that is only feasible in-expectation, and (ii) signifies its market efficiency is no less than the optimum offline. Obtaining a feasible two-stage joint matching and pricing policy from the solution of this convex program requires techniques in the prophet inequality literature, originated from the seminal work of \cite{KS-77} and \cite{sam-84}. In particular, we incorporate an algorithmic construct known as the \emph{online contention resolution} scheme~\citep{ala-14,FSZ-16}.  We obtain an optimal $\frac{1}{2}$-competitive algorithm for market efficiency by an adaptation of these techniques to our specific problem. We further show this is the best ratio possible for this objective function. \revcolor{Some of the technical ideas used in this part resemble other work that connect prophet inequalities to more sophisticated online Bayesian allocation problems in revenue management, e.g., see \cite{AHL-12,FNS-19,FNS-20,manshadi2020online,ezra2020online}.} As a remark, our bounds are with respect to the optimum offline benchmark. Whether it is possible to obtain an improved bound against optimum online benchmark is an open problem.

\vspace{2mm}
\noindent\textbf{Optimal competitive single-stage joint matching and pricing:} As a variation of our model, we consider the single-stage joint matching and pricing problem for the objective of demand efficiency in Appendix~\ref{apx:onestagepricing}. Here, we are only considering demand nodes in $\Rp$ and supply nodes in $\Driver$. The goal is to pick a vector of prices for demand nodes in $\Rp$, so that after matching the demand nodes who accept their prices to supply nodes, the generated social welfare for the demand nodes is maximized. By setting up a similar ex ante relaxation as for the market efficiency objective---but this time for only the demand efficiency objective---and incorporating standard tools in the submodular optimization literature such as correlation gap~\citep{ADSY-10,yan-11}, we obtain a $(1-1/e)$ competitive algorithm for demand efficiency.

\revcolor{
\vspace{2mm}
\noindent\textbf{Robustness to approximations and modeling assumptions:} Finally, we study the robustness of our results to various aspects of the two-stage stochastic matching problem in Appendix~\ref{apx:robustness}. In particular, we consider replacing the second stage maximum matching with an approximation algorithm (Appendix~\ref{apx:robustness-approx}) and adding structural assumptions on the first stage graph (Appendix~\ref{apx:robustness-first}).}

We have summarized our results and their comparison to the existing literature in \Cref{tab:summary}.
\vspace{-2mm}
\begin{table}[ht]
\smallskip
\small\addtolength{\tabcolsep}{-2pt}
 	\caption{Competitive ratios for supply efficiency (top) and market efficiency (bottom).}
 	\label{tab:summary}
 	\smallskip
 	\smallskip
\renewcommand\arraystretch{1.8}

\renewcommand{\thefootnote}{\ifcase\value{footnote}\or(\textasteriskcentered)\or(\textasteriskcentered\textasteriskcentered)\or(\textdagger)\or(\textdaggerdbl)\or(\textsection)\or(\textbardbl)\or(\textdollar)\or(\textdollar\textdollar)\or(\pounds)\or(\texteuro)\or(\#\#\#)\or(\#\#\#\#)\or($\infty$)\fi}
 	\small
\centering
 
 	\begin{tabular}{|c|c|c|c|c|c|}
\hhline{~~~---} \multicolumn{3}{c|}{\cellcolor{white!25}}  & \shortstack{\\\footnotesize{Lower-bound}\\ \footnotesize{(previous work)}}& \shortstack{\\\footnotesize{Upper-bound}\\ \footnotesize{(this paper)}}& \shortstack{\\\footnotesize{Lower-bound}\\ \footnotesize{(this paper)}}

 		\tabularnewline\hhline{------}  \multirow{4}{*}{\shortstack{\\ Two-stage stochastic matching \\(\Cref{sec:matching})}} &\multirow{2}{*}{\shortstack{\footnotesize{Optimum offline}}} &\shortstack{\footnotesize{Weighted}} &$(1-\tfrac{1}{e})$~\footnotemark[1] &\shortstack{$\tfrac{3}{4}$}& \shortstack{$\tfrac{3}{4}$}
 			\tabularnewline\hhline{~~----}  & &  \shortstack{\footnotesize{Unweighted}}&$\tfrac{2}{3}$~\footnotemark[2] &\shortstack{$\tfrac{3}{4}$}& \shortstack{$\tfrac{3}{4}$}
 		\tabularnewline\hhline{~-----}  & \multirow{2}{*}{\shortstack{\footnotesize{Optimum online}}} & \shortstack{\footnotesize{Weighted}} &$(1-\tfrac{1}{e})~\footnotemark[1] $ &\shortstack{\footnotesize{FPTAS-hard}}& \shortstack{$\left(1-\tfrac{1}{e}+\tfrac{1}{e^2}\right)\approx 0.767$}
 		\tabularnewline\hhline{~~----}  & & \shortstack{\footnotesize{Unweighted}} &$\tfrac{2}{3}$~\footnotemark[2] &\shortstack{\footnotesize{FPTAS-hard}}& \shortstack{$0.7613$}
 		\tabularnewline\hhline{------}\multicolumn{6}{c}{\cellcolor{white!25}}
 			\tabularnewline\hhline{------}\multicolumn{3}{|c|}{\shortstack{\\Two-stage stochastic joint matching/pricing\\for market efficiency (\Cref{sec:matching and pricing})}}&\shortstack{\\-\\~}&\shortstack{\\$\tfrac{1}{2}$\\~}&\shortstack{\\$\tfrac{1}{2}$\\~}
 			\tabularnewline\hhline{------}\multicolumn{3}{|c|}{\shortstack{\\Single-stage stochastic joint matching/pricing\\for demand efficiency (Appendix~\ref{apx:onestagepricing})}}&\shortstack{\\-\\~}&\shortstack{\\$1-\tfrac{1}{e}$\\(ex-ante)}&\shortstack{\\$1-\tfrac{1}{e}$\\~}
 			\tabularnewline\hhline{------}\multicolumn{3}{|c|}{\shortstack{\\Two-stage stochastic matching\\with general edge weights (Appendix~\ref{app:general-weight})}}&\shortstack{\\-\\~}&\shortstack{\\$\tfrac{1}{2}$~\\~}&\shortstack{\\$\tfrac{1}{2}$~\\~}
 		\\
 		\hhline{------}
 \end{tabular}
 	\smallskip
 	
\noindent\par
 	\begin{minipage}{0.7\textwidth}
 		{
 			\center
 			\footnotesize
 			\footnotemark[1]~{\cite{AGKM-11,DJK-13}};~ 
 			\footnotemark[2]~{\cite{LS-17}};
 		}
 	\end{minipage}
 \end{table}
 \vspace{-8mm}
 \subsection{Data-driven numerical simulations.} To complement our theoretical study of the two-stage stochastic matching problem, we use a ride-sharing dataset from \citet{didi-20} in \Cref{sec:numerical}. This rich dataset contains anonymized trajectory data and ride request data of DiDi Express and DiDi Premier drivers within the “Second Ring Road” of city of Chengdu in China during November 2016. We run numerical simulations to evaluate the performance of our algorithms in practical instances of our problem generated from this data. We evaluate the performance of 
other policies too, and compare
their performance ratios against 
optimum offline using Monte Carlo simulation. See \Cref{sec:numerical} for details. In a nutshell, we conclude that our optimal competitive algorithm against optimum offline outperforms the myopic greedy algorithm,
where both policies require no knowledge
of the second stage to pick their first stage matchings.
We also observe that by utilizing the distributional knowledge 
of the second stage in our improved competitive algorithm against optimum online,
it further outperforms 
both our previous algorithm and the myopic greedy algorithm. Notably, our algorithms obtain near-optimal performances in our experiments. 

\revcolor{
Besides the related literature that we have discussed so far, our work connects to the vast literature on online matching and revenue management. We provide an overview of further related work in Appendix~\ref{appendix-further-related}. We should mention, since the appearance of an early online version of our paper, there has been a few follow up work. Most importantly,  \cite{feng2020batching} extend our result for the two-stage (fractional) matching under adversarial arrival to $K\in\mathbb{N}$ stages using the idea of convex programming-based matching, showing an optimal competitive ratio of $\Gamma(K)=1-(1-\frac{1}{K})^K$ by carefully picking the convex program at each stage. Interestingly, as it has been shown in this paper, there is a uniquely identifiable sequence of polynomials of decreasing degrees, one for each stage, that can be used to regularize the maximum matching linear program to obtain this result. We should add that in our paper we basically consider both two-stage adversarial and stochastic arrivals (to compare our algorithms to optimum online as well), and we focus on the more difficult case of integral matchings.}

 \section{Preliminaries}
 \newcommand{\alg}{\texttt{ALG}}

Suppose we have a bipartite graph $G=(\Rider,\Driver,E)$,  where $\Rider$ denotes the set of demands and $\Driver$ denotes the set of supplies. An edge $(i,j)\in E$ indicates that demand vertex $i\in \Rider$ can be matched to supply vertex $j\in \Driver$. The demand set is partitioned into $\Rider=\Ra\cup\Rp$, where $\Ra$ is the set of first stage demands and $\Rp$ is the set of potential second stage demands. Each demand vertex $i\in\Rider$ has a private value $\val_i\sim\Fu$ for the service, drawn independently from their valuation distribution $\Fu$. We assume valuation distributions $\{\Fu\}_{i\in \Rider}$ are common knowledge.

Each demand vertex $i\in\Rp$ is offered a price $p_i$, and accepts the price offer if their value is at least the price, i.e., $\val_i\geq p_i$. As a result, each demand vertex $i\in\Rp$ will be available in the second stage with probability $\pi_i\triangleq \prob{\val_i\geq p_i}$ and independently from other vertices in $\Rp$. We use $\Rpa\subseteq \Rp$ to denote the (stochastic) subset of second stage available demand vertices. See \Cref{fig:two-stage} for a demonstration.

\begin{figure}[htb]
\centering
\includegraphics[width=0.85\textwidth,trim={2cm 0cm 5cm 0cm},clip]{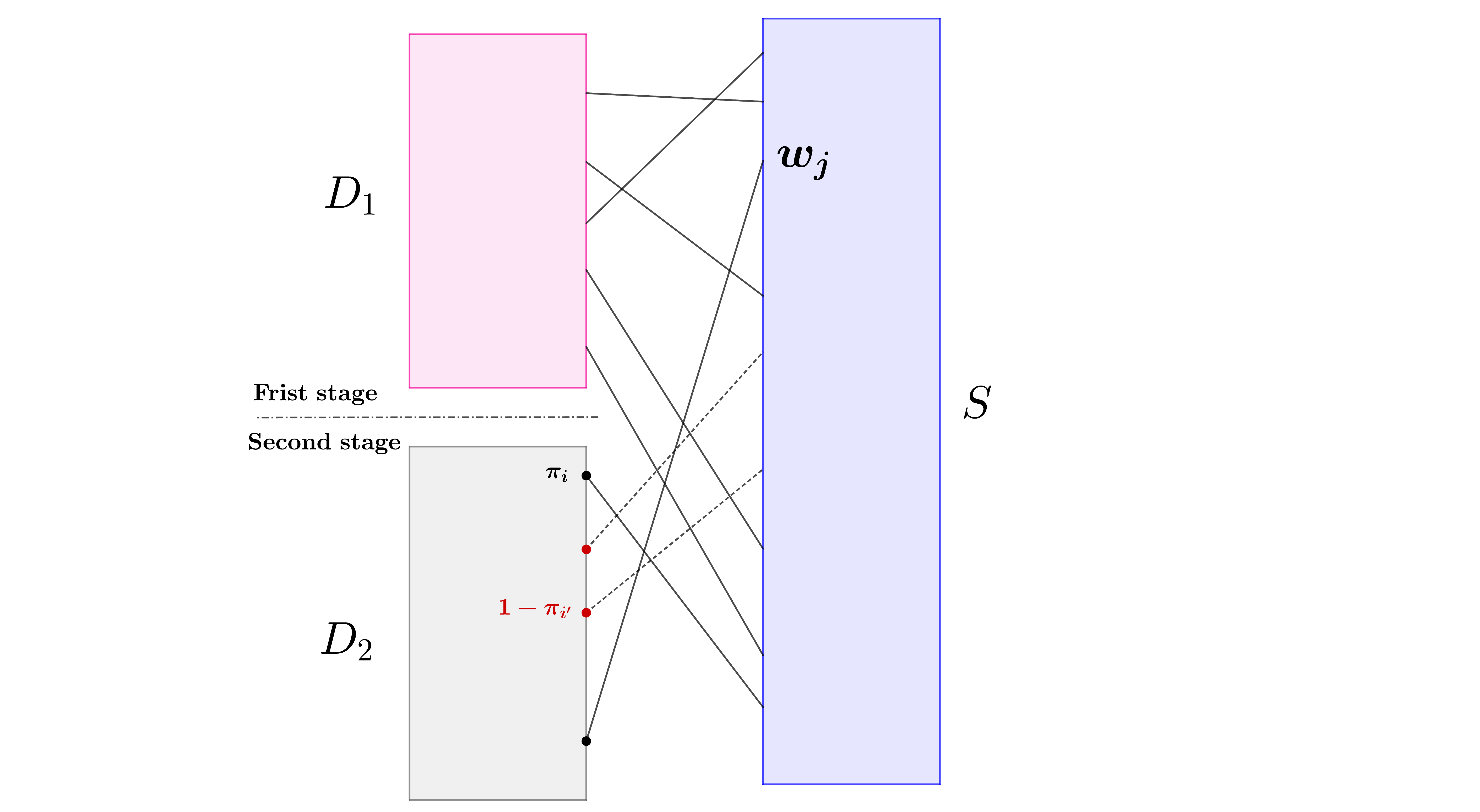}
\qquad
\caption{\label{fig:two-stage}Bipartite graph $G=(\Ra\cup\Rp,\Driver)$ in the (vertex) weighted two-stage stochastic matching problem}
\end{figure}

\begin{definition}
 \label{def:matching}
 Given probabilities $\{\pi_i\}_{i\in\Rp}$, the \emph{matching process} is a two-stage stochastic process, defined as following:
\begin{enumerate}[leftmargin=15pt]
\item[\Rom{1}.]\emph{First stage:}~the platform picks a matching $M_1$ between demand vertices $\Ra$ and supply vertices $\Driver$. 
\item[\Rom{2}.]~\emph{Second stage:}~each vertex $i \in \Rp$ flips an independent coin with success probability $\pi_i$ to determine its availability in this stage. After $\Rpa$ is formed (defined above), the platform picks a matching $M_2$ between $\Rpa$ and remaining unmatched supply vertices from the first stage.
\end{enumerate}

\end{definition}


  
  We will mainly consider the \emph{(vertex) weighted two-stage stochastic matching problem} and its variant, the \emph{two-stage stochastic joint matching and pricing problem}. We describe these two problems below.
\subsubsection*{(i) (Vertex) Weighted two-stage stochastic matching}

Suppose each supply vertex $j\in \Driver$ has a non-negative weight $w_j\in\RP$. This weight indicates the priority level of matching this supply to a demand. For example---in the practice of ride hailing---this weight can depend on driver's location and availability of supply in that area, average utilization so far in the day, or elite status. We assume these weights are given parameters to the platform. We also assume the probabilities $\{\pi_i\}_{i\in \Rp}$ are known and given parameters to the platform. This is a valid assumption if the prices are determined by a fixed formula (e.g., due to regulations) or if they are computed using a different algorithm. Also, this assumption holds in a slightly different model where these probabilities are not determined by prices, and instead are known parameters modeling the demand uncertainty in the second stage. All of our results for the (vertex) weighted two-stage stochastic matching  can be applied to this model too.

In this problem, the goal is to maximize the supply efficiency, that is, the expected total weight of supplies $\Driver$ matched in the matching process of \Cref{def:matching}, where the expectation is over the internal randomness of the matching algorithm and the uncertainty in demand vertices of $\Rp$ being available in the second stage. When $w_j=1$ for all $j\in \Driver$, this objective is the cardinality of the final matching.


\begin{definition}
\label{def:competitive-ratio}
The \emph{competitive ratio} of algorithm $\alg$ against a benchmark $\opt$ in the (vertex) weighted two-stage stochastic matching problem is defined as:
$$
\compratioM(\alg,\opt)\triangleq\underset{G,\{w_j\}_{\driver\in\Driver},\{\pi_i\}_{i\in\Rp}}{\min}\frac{\expect{\sum_{j\in \Driver}w_j\mathbbm{1}\{\textrm{$j$ is matched in $M_1\cup M_2$}\}}}{\opt}~,
$$
where $M_1$ and $M2$ are the matchings picked by $\alg$ in the first and second stage, respectively. For the special unweighted case, we define $\compratioMUW(\cdot,\cdot)$ similarly by setting $w_j=1$ for all $j\in\Driver$.
\end{definition}
After choosing a matching $M_1$ and observing $\Rpa$ in the first stage, in the second stage the platform should always pick a maximum supply-weighted matching $M_2$ between $\Rpa$ and unmatched supplies left by $M_1$. So,  the only question is how it should pick the first stage matching $M_1$.

\paragraph{Benchmarks.} We consider competitive ratios against two candidate benchmarks as $\opt$ for this problem, defined formally below.
\begin{enumerate}
    \item \emph{Optimum offline ($\optoff$)}:  an omniscient offline policy that knows the exact realization of available second stage demands $\Rpa$. This offline policy picks the maximum supply-weighted matching between $\Ra\cup\Rpa$ and $\Driver$. We use $\optoff$ to denote its expected total weight, i.e.,
    \begin{equation}
    \label{eq:opt-off}
    \optoff\triangleq \expect[{\Rpa}]{\maxmatch(\Ra\cup\Rpa,\Driver)},
    \end{equation}
    where $\maxmatch(X,Y)$ denotes the weight of the supply-weighted maximum matching between $X\subseteq \Rider$ and $Y\subseteq \Driver$ given weights $\{w_j\}_{j\in \Driver}$.
    \item \emph{Optimum online ($\opton$)}: an algorithm that searches over all possible matchings $M_1$ between $\Ra$ and $\Driver$, and picks the one that maximizes the expected total weight of supplies matched in the matching process of \Cref{def:matching}. We use $\opton$ to denote the expected total weight of matched supply vertices by this algorithm, i.e.,
    \begin{equation}
    \label{eq:opt-online-def-1}
    \opton\triangleq \underset{M_1}{\max}\left(\sum_{j\in \Driver}{w_j\mathbbm{1}\{\textrm{$j$ is matched in $M_1$}\}}+ \expect[{\Rpa}]{\maxmatch(\Rpa,\Driver\setminus \Driver_1)}\right),
    \end{equation}
    where $\Driver_1$ denotes the set of supply vertices matched by $M_1$.
    
\end{enumerate}


\subsubsection*{(ii) Two-stage stochastic joint matching and pricing}
\label{sec:two-stage-matching-pricing-def}

The setup of this problem is similar to two-stage stochastic matching, with two important modifications: (i)~the platform not only picks matching $M_1$ in the first stage, but it also selects prices $\{p_i\}_{i\in\Rp}$ jointly with $M_1$ for potential demand vertices in $\Rp$, and  (ii)~the platform's goal is to maintain market efficiency, i.e., the expected total value (welfare) of matched demand vertices plus the total weight of matched supply vertices, where the expectation is over the randomness in the algorithm and the valuations of demand vertices.

To define market efficiency mathematically, a few extra definitions are in order. For each demand vertex $i$ and price $p$, define $w_i(p)\triangleq \expect[\val\sim \Fu]{\val|\val\geq p}$. Note that for $i\in\Ra$, as the platform knows this demand vertex has already accepted her price $p_i$, $w_i(p_i)$ is considered to be her contribution to the objective if gets matched by $M_1$, and zero otherwise. This is in contrast to $i\in\Rp$, where her contribution is her actual value $\val_i$ if she accepts her price \emph{and} further gets matched by $M_2$, and zero otherwise.
\begin{definition}
The competitive ratio of algorithm $\alg$ against a benchmark $\opt$ in the two-stage stochastic joint matching and pricing problem is defined as
\begin{align*}
\compratioMP(\alg,\opt)&\triangleq\underset{\substack{G,\{w_j\}_{j\in \Driver},\\ \{\Fu\}_{i\in \Rider}, \{p_i\}_{i\in\Ra}}}{\min}\frac{\expect{\objr+\objd}}{\opt}\\
\objr&\triangleq \sum_{i\in\Ra}{w_i(p_i)\mathbbm{1}\{\textrm{$i$ is matched by $M_1$}\}}+\sum_{i\in\Rp}{\val_i\mathbbm{1}\{\textrm{$\val_i\geq p_i$ and $i$ is matched by $M_2$}\}}\\
\objd&\triangleq \sum_{j\in \Driver}w_j\mathbbm{1}\{\textrm{$j$ is matched in $M_1\cup M_2$}\}
\end{align*}
where $M_1$ and $M_2$ are the matchings picked by $\alg$ in the first and second stage, respectively.
\end{definition}
\paragraph{Benchmarks.} We consider the competitive ratio against the optimum offline benchmark for this problem ($\optoffmp$). This benchmark is essentially an omniscient policy that knows the exact realization of valuations $\{\val_i\}_{i\in\Rp}$, and given this information, selects the maximum vertex-weighted matching between $\Ra\cup\Rp$ and $\Driver$ by considering weights $\hat{w}_i=w_i(p_i)$ for $i\in\Ra$, $\hat{w}_i=\val_i$ for $i\in\Rp$ and $\hat{w}_j=w_j$ for $j\in \Driver$ in order to maximize market efficiency. Formally,
 \begin{equation}
    \label{eq:opt-off-mp}
    \displaystyle\optoffmp\triangleq \expect[\val_i\sim \Fu,i\in\Rp]{\maxmatchboth(\Rider,\Driver)},
    \end{equation}
where $\maxmatchboth(\Rider,\Driver)$ denotes the weight of the vertex-weighted maximum matching in $G$ given weights $\{\hat{w}_i\}_{i\in \Rider\cup \Driver}$ as described above.

 \section{Optimal Competitive Weighted Two-stage Stochastic Matching}
\label{sec:matching}

 
We start by designing an optimal competitive algorithm against optimum offline. The main idea behind this algorithm is to pick a (randomized) matching that distributes the demand vertices $\Ra$ as \emph{balanced} as possible among the supply vertices $\Driver$. This matchings hedges against the second stage uncertainty.

In \Cref{sec:adversarial}, we propose a family of convex programs that  produce fractional matchings for the first stage, according to a certain notion of balancedness that we will define later. Then, we introduce a primal-dual approach to bound the competitive ratio of randomized integral matchings sampled from the optimal solution of these convex programs. Our approach suggests decomposing the first-stage graph into particular parts, and picking a specific fractional matching in each part as our sampling probabilities. We dig deeper into this decomposition in Appendix~\ref{sec:skeleton}. 




\subsection{A \texorpdfstring{$\mathbf{\frac{3}{4}}$}{}-Competitive algorithm}
\label{sec:adversarial}
First, it is easy to observe that no algorithm can obtain a worst-case competitive ratio better than $\tfrac{3}{4}$ against optimum offline. See Appendix~\ref{apx:optoffline-upper} for the proof of this proposition.

\begin{restatable}{proposition}{thmoptofflineupper}
\label{prop:optoffline-upper}
In the two-stage stochastic matching for maximizing supply efficiency, no policy obtains a competitive ratio better than $\tfrac{3}{4}$ against $\optoff$, even when all the weights are equal.
\end{restatable}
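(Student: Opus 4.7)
The plan is to exhibit a family of instances, parametrized by a large integer $n$, on which every randomized policy loses at least a factor $3/4$ against $\optoff$ in the limit. For each $n$, I would take $\Ra = \{a_1,\ldots,a_n\}$ and $\Driver = \{d_1,\ldots,d_{2n}\}$ with every possible edge between them (so the first-stage subgraph is the complete bipartite $K_{n,2n}$). In the second stage I would put $\Rp = \{p_1,\ldots,p_{2n}\}$ where $p_j$ is adjacent only to $d_j$, each arriving independently with $\pi_j = \tfrac{1}{2}$, and all weights equal to $1$. The key structural features are that $\Driver$ is twice as large as $\Ra$ and that each driver has a unique potential second-stage partner arriving with constant probability, so no first-stage decision can anticipate which half of $\Driver$ will be needed later.

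The next step is to bound the expected value of any policy by $\tfrac{3n}{2}$. For any realized first-stage matching $M_1$, each driver $d_j\notin M_1$ contributes $\mathbf{1}\{p_j\in\Rpa\}$ to the optimal second-stage matching $M_2$, because $p_j$ is the only second-stage neighbor of $d_j$. Taking expectation over the independent arrivals gives
\[
\mathbb{E}\!\left[|M_1|+|M_2|\,\middle|\,M_1\right] \;=\; |M_1| + \tfrac{1}{2}(2n-|M_1|) \;=\; n + \tfrac{1}{2}|M_1|.
\]
Since $|M_1|\le |\Ra|=n$, this is at most $\tfrac{3n}{2}$ pointwise, hence also after any randomization over $M_1$.

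I would then lower-bound $\optoff$. Let $K\sim \mathrm{Binomial}(2n,\tfrac{1}{2})$ be the number of arriving pending riders. For any realization, the offline maximum matching between $\Ra\cup \Rpa$ and $\Driver$ equals $\min(n+K,2n)$: arrivals in $\Rpa$ form a perfect matching into a subset of $\Driver$, and the complete bipartite $\Ra\to\Driver$ edges extend this to cover all $n+K$ riders whenever $n+K \le 2n$. Hence $\optoff = 2n - \mathbb{E}[(n-K)^+]$, and by a standard mean-absolute-deviation bound for the symmetric Binomial (or Hoeffding), $\mathbb{E}[(n-K)^+] = O(\sqrt{n})$, so $\optoff = 2n - O(\sqrt{n})$.

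Combining the two estimates, the competitive ratio on this instance is at most $\frac{3n/2}{2n - O(\sqrt{n})} \to \tfrac{3}{4}$ as $n\to\infty$, which, since it holds for a sequence of valid instances, proves the upper bound of $\tfrac{3}{4}$. The only place requiring care is the concentration estimate in the third step: one must ensure that the $O(\sqrt{n})$ correction is genuinely sublinear so that $\optoff$ is $(2-o(1))n$ rather than something of order $\tfrac{3n}{2}$; this is straightforward but should be stated explicitly. Steps one and two are structural and go through cleanly.
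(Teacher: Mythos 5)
Your proposal is correct and uses essentially the same construction and argument as the paper: the identical instance ($K_{n,2n}$ in the first stage, $2n$ degree-one pending riders with $\pi=\tfrac12$), the same $\tfrac{3n}{2}$ upper bound on any policy, and the same concentration-based lower bound $\optoff = 2n - o(n)$. The only (harmless) differences are that you bound the policy pointwise over $M_1$ rather than invoking symmetry, and you use a mean-absolute-deviation bound in place of the paper's Chernoff bound.
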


Next, we present an algorithm that obtains this competitive ratio. In fact, we show an even stronger result, where our algorithm does not need to know anything about the graph structure between $\Rp$ and $\Driver$, and still obtains the optimal competitive ratio of $\tfrac{3}{4}$ against $\optoff$ in an adversarial scenario. In such a scenario, the second stage graph can be picked arbitrarily by an (oblivious) adversary and not necessarily through our specific stochastic process.

\subsubsection{Convex programming for  balanced supply utilization}
\label{sec:skeleton-weighted}
\newcommand{\xbf}{\mathbf{x}}
\newcommand{\ybf}{\mathbf{y}}
\newcommand{\Lag}{\mathcal{L}}
\newcommand{\alphabf}{\boldsymbol{\alpha}}
\newcommand{\betabf}{\boldsymbol{\beta}}
\newcommand{\lambdabf}{\boldsymbol{\lambda}}
\newcommand{\thetabf}{\boldsymbol{\theta}}
\newcommand{\gammabf}{\boldsymbol{\gamma}}

Intuitively, our goal is to find a (fractional) matching in the first stage that makes a balanced allocation of supply to demand, while also taking supply weights into consideration. To combine these two criteria, and inspired by notions of fairness such as Rawlsian social welfare~\citep{raw-71} and Nash social welfare~\citep{nas-50}, we turn our attention to the following convex minimization program \ref{eq:matching-convex-program} to obtain a desired fractional matching $\{x^*_{ij}\}$ in $G[\Ra, \Driver]$ as our first stage solution:
\begin{align}
\label{eq:matching-convex-program}\tag{$\mathscr{P}^g$}
\begin{array}{llll}
	\{x^*_{ij}\}\in\argmin~~&
	\displaystyle\sum_{j\in \Driver}\frac{1}{w_j}g\left(w_j\left(1-\sum_{i\in N(j)}x_{ij}\right)\right)
	&\text{s.t.}&\\[1em]
	&\displaystyle\sum_{j\in N(i)}x_{ij}\leq 1 &
	 i\in \Ra~,&\\[1em]
	&\displaystyle\sum_{i\in N(j)}x_{ij}\leq 1 &
    j\in \Driver~,&\\[1em]
	&x_{ij}\geq 0&
	 i\in\Ra,j\in \Driver,(i,j)\in E,&
\end{array}
\end{align}
where $g(.):\R\rightarrow \R$ can be \emph{any} differentiable, increasing, and strictly convex function, and $N(i)$ denotes the neighbors of vertex $i$ in $G[\Ra, \Driver]$.
After computing $\{x^*_{ij}\}$, we sample the matching $M_1$ so that 
\begin{equation}
\label{eq:random-matching}
\prob{(i,j)\in M_1}=x^*_{ij},~\forall i\in\Ra,j\in \Driver,(i,j)\in E~.
\end{equation}
Note that because of the integrality of the bipartite matching polytope~\citep{sch-03}, the fractional matching $\{x^*_{ij}\}$ can be written down as a convex combination of integral bipartite matchings in $G[\Ra,\Driver]$. Moreover, this can be done in polynomial time by using standard algorithmic versions of the Carathéodory theorem, resulting in an efficient randomized rounding to sample $M_1$. Alternatively, faster dependent randomized rounding techniques for bipartite matching polytope, e.g., \citealt{gandhi2006dependent}, can be used in a blackbox fashion.

\begin{algorithm}
\caption{\textsc{Weighted-Balanced-Utilization}}
\label{alg:weighted-skeleton}
\begin{algorithmic}[1]
\State{\textbf{input:} bipartite graph $G=(\Rider,\Driver,E)$, non-negative weights $\{w_j\}_{j\in \Driver}$, convex function $g(\cdot)$~.}
\State{\textbf{output:} bipartite matching $M_1$ in $G[\Ra,\Driver]$,  bipartite matching $M_2$ in $G[\Rpa,\Driver]$.}
\vspace{2mm}
\State{Solve convex program \ref{eq:matching-convex-program} to obtain $\{x^*_{ij}\}$.}
\State{Sample matching $M_1$ with edge marginal probabilities $\{x^*_{ij}\}$.}
\State{In the second stage, return the maximum supply-weighted matching $M_2$ between $\Rpa$ and the remaining vertices of $\Driver$.}
\end{algorithmic}

\end{algorithm}

\begin{restatable}{theorem}{optofflinecompratio}
\label{thm:opt-offline-comp-ratio}
For any differentiable, monotone increasing and strictly convex function $g:\R\rightarrow \R$,
$$
\compratioM(\textrm{\Cref{alg:weighted-skeleton}},\opton)\geq
 \compratioM(\textrm{\Cref{alg:weighted-skeleton}},\optoff)\geq\tfrac{3}{4}.
$$
\end{restatable}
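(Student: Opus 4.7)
Because $\opton\le\optoff$ in general (any online policy is weakly dominated by one that additionally sees the realization of $\Rpa$), the first inequality in the theorem is immediate, so the task reduces to proving $\compratioM(\textrm{Algorithm~\ref{alg:weighted-skeleton}},\optoff)\ge \tfrac{3}{4}$. The plan is a primal-dual analysis in which the dual certificate bounding $\optoff$ is built from the KKT multipliers of $(\mathscr{P}^g)$, rather than from the max-matching LP itself.

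First, I set up the convex duality. Attach $\lambda^*_i\ge 0$ to each demand constraint, $\mu^*_j\ge 0$ to each supply constraint, and $\theta^*_{ij}\ge 0$ to nonnegativity. Write $y^*_j:=\sum_i x^*_{ij}$ and $\phi_j:=g'(w_j(1-y^*_j))$; strict convexity and monotonicity of $g$ make $\phi_j$ well-defined and strictly decreasing in $y^*_j$. Stationarity yields $\phi_j+\theta^*_{ij}=\lambda^*_i+\mu^*_j$, so $\phi_j\le \lambda^*_i+\mu^*_j$ for every edge $(i,j)\in E$ with equality on $\{x^*_{ij}>0\}$, and complementary slackness gives the usual tight-or-zero-dual dichotomy. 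These identities partition $G[\Ra,\Driver]$ into level sets of common $\phi$-value: within each level, every saturated demand is fully allocated inside the level, the supply loads $y^*_j$ agree, and $x^*$ restricts to a perfectly balanced fractional matching. This is the vertex-weighted analog of the Goel--Kapralov--Khanna matching skeleton, and it is the object that drives the analysis.

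Second, I will upper bound $\optoff$ and lower bound $\mathbb{E}[\textrm{Alg.}]$ level by level. For each realization of $\Rpa$, paste $\lambda^*,\mu^*,\phi$ into a feasible dual $(\alpha_i,\gamma_j)$ of the offline max supply-weighted matching LP on $G[\Ra\cup\Rpa,\Driver]$, choosing $\gamma_j$ as a function of $w_j(1-y^*_j)$ and $\alpha_i$ as an appropriate max of $w_j y^*_j$ over neighbors; the level ordering is what makes $\alpha_i+\gamma_j\ge w_j$ hold on every edge. Taking expectation over $\Rpa$, this gives an upper bound on $\optoff$ that cleanly separates into a $\Ra$-contribution and a $\Rpa$-contribution. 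On the algorithm's side, the first-stage expected supply weight equals $\sum_j w_j y^*_j$, which I pair against the $\Ra$-part of the dual; it then remains to show, within each level, that the expected second-stage matching $M_2$ between $\Rpa$ and the leftover supply $\Driver\setminus S_1(M_1)$ harvests enough of the $\Rpa$-charge to close the gap at ratio $3/4$.

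The second-stage bound is the main obstacle. $M_2$ depends jointly on the randomized $M_1$ (sampled from $x^*$ via, e.g., pipage or dependent rounding, which gives negative correlation of the indicators $\mathbbm{1}\{j\in S_1(M_1)\}$) and on the independent availabilities of vertices in $\Rp$. The plan is to localize the analysis to a single level---where the leftover supply has identical marginal $1-y^*$ and the fractional matching is perfectly balanced---and to combine negative association with a Hall-type argument to lower bound the expected size of the maximum matching from the realized subset of $\Rpa$ into the residual supply inside the level. The constant $\tfrac{3}{4}$ will emerge as the worst case of this per-level inequality and is tight on the two-edge gadget underlying Proposition~\ref{prop:optoffline-upper}; summing over levels and taking expectations delivers $\mathbb{E}[\textrm{Alg.}]\ge \tfrac{3}{4}\,\optoff$.
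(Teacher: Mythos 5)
Your overall architecture (reduce to the $\optoff$ bound, extract the KKT-based decomposition of $(\mathscr{P}^g)$, and run a primal-dual argument with $\tfrac34$-approximate feasibility via $1-y+y^2\ge\tfrac34$) matches the paper's proof. But there is a genuine gap in how you handle the second stage, and it is precisely the step you flag as "the main obstacle." You propose to directly lower-bound the expected weight of $M_2$ by localizing to a level of the first-stage decomposition and combining negative association of the rounding with a Hall-type argument. This route does not go through: the levels $(\Rider^{(l)},\Driver^{(l)})$ are defined only by the first-stage graph $G[\Ra,\Driver]$, while the edges from $\Rp$ to $\Driver$ are arbitrary (indeed adversarial in the stronger version the paper proves), so a realized vertex of $\Rpa$ can be adjacent to residual supply across many levels and no per-level Hall condition is available. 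Moreover, directly bounding the expected size of a maximum matching in the random residual graph is exactly the kind of combinatorial estimate the primal-dual method is designed to avoid, and it is not clear it can be pushed to the tight constant $\tfrac34$.

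The missing idea is to let the second stage certify itself through LP duality rather than be estimated. For each realization of $M_1$ and $\Rpa$, take the \emph{optimal dual} $\{\hat\alpha_i\}_{i\in\Rpa},\{\hat\beta_j\}_{j\in\Driver\setminus\Driver_1}$ of the residual max-weight matching LP on $G[\Rpa,\Driver\setminus\Driver_1]$; since $M_2$ is the optimal primal, strong duality gives $w(M_2)=\sum_i\hat\alpha_i+\sum_j\hat\beta_j$ exactly. One then embeds these into the global dual by setting $\alpha_i\leftarrow\hat\alpha_i$ for $i\in\Rpa$, $\alpha_i\leftarrow c^{(l)}$ for $i\in\Rider^{(l)}$, and $\beta_j\leftarrow w_j(y^*_j)^2+\hat\beta_j\mathbbm{1}\{j\notin\Driver_1\}$. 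Feasibility on second-stage edges then follows from the residual dual feasibility $\hat\alpha_i+\hat\beta_j\ge w_j$ together with only the marginal $\Pr[j\notin\Driver_1]=1-y^*_j$; no negative association, no independence of arrivals, and no Hall-type counting are needed, which is also why the paper's bound holds for every (adversarial) realization of $\Rpa$ rather than merely in expectation over independent availabilities. Without this device, your proposal as written does not close.
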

\revcolor{
\begin{remark}
\label{remark:uniqueness of g}
The first-stage marginal probabilities $\mathbf{x}^*$ in
\Cref{alg:weighted-skeleton}
do not depend on the choice of the convex function $g$. See Appendix~\ref{apx:identical solution for different}
for a detailed proof. From a computational point of view, however, we may prefer
strongly convex, 
smooth convex functions $g$,
since 
iterative methods for solving convex optimization (e.g., first-order methods)
have generally faster convergence rates
in that case.
\end{remark}}

To prove \Cref{thm:opt-offline-comp-ratio}, we start by identifying structural properties of the optimal solution of the convex program \ref{eq:matching-convex-program} in the form of a particular bipartite graph decomposition. See \Cref{fig:skeleton}.


\begin{figure}[htb]
\centering
\includegraphics[trim={0cm 5cm 0cm 1.5cm},clip,width=0.8\textwidth]{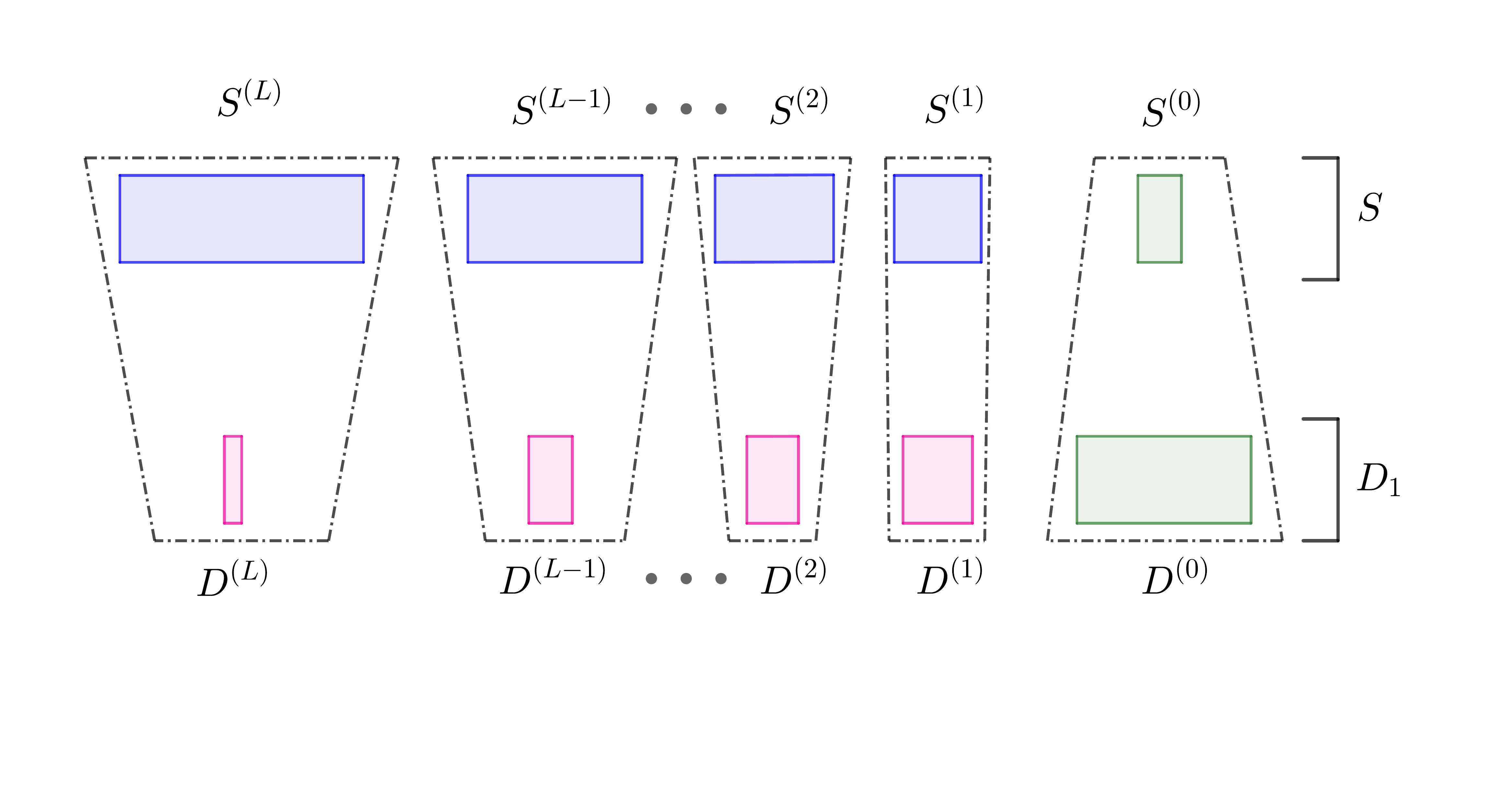}
\caption{Structural decomposition in \Cref{lemma:structure}; the pairs $(\Rider^{(l)},\Driver^{(l})$ are sorted such that $0=c^{(0)}< c^{(1)}<\ldots< c^{(L)}$. Red dashed-lines are forbidden \emph{non-existent} edges, i.e., any edge between $\Rider^{(l)}$ and $\Driver^{(l')}$ when $l<l'$.\label{fig:skeleton}}
\end{figure}
\vspace{-4mm}
\begin{restatable}[\emph{Structural decomposition}]{lemma}
{structural}
\label{lemma:structure}
Let $\xbf^*=\{x^*_{ij}\}$ be the optimal solution of the program \ref{eq:matching-convex-program}. Consider the subgraph $G'=(\Ra,\Driver,E')$ of $G$, where $E'=\{(i,j)\in \Ra\times \Driver:x^*_{ij}>0\}$. Let $\Driver^{(0)}$ be the set of vertices in $\Driver$ fully matched by $\xbf^*$,  and $\Rider^{(0)}$ be the set of demand vertices who are neighbors of $\Driver^{(0)}$ in $G'$, that is, 
$$\Driver^{(0)}\triangleq \{j\in \Driver: \sum_{i\in N(j)}x^*_{ij}=1\}~~\textrm{and}~~\Rider^{(0)}\triangleq \{i\in\Ra:\exists j\in \Driver^{(0)}, x^*_{ij}>0\}~.$$
Moreover, let the pairs $\{\left(\Rider^{(l)},\Driver^{(l)}\right)\}_{l=1}^{L}$ identify the $L\geq 1$ connected components of the induced subgraph $G'[\Ra\setminus\Rider^{(0)},\Driver\setminus\Driver^{(0)}]$ of $G'$. Then:
\begin{enumerate}[label=\roman*.]
    \item \underline{Uniformity}: ~$\forall~ l\in[0:L],j,j'\in \Driveri: w_j\left(1-\sum_{i\in N(j)}x^*_{ij}\right)=w_{j'}\left(1-\sum_{i\in N(j')}x^*_{ij'}\right)\triangleq c^{(l)}$~.
    \item \underline{Monotonicity}:  $\forall~ l,l'\in[0:L]$: there exists no edge in $E$ between $\Rider^{(l)}$ and $\Driver^{(l')}$ if $c^{(l)}<c^{(l')}$. 
    \item \underline{Saturation}: $\forall~ l\in[1:L]$: all vertices in $\Rider^{(l)}$ are fully matched by $\xbf^*$, i.e., $\sum_{j\in N(i)}x^*_{ij}=1, i\in\Rider^{(l)}$~.
\end{enumerate}
\end{restatable}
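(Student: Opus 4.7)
The plan is to derive all three properties from the KKT conditions of~\ref{eq:matching-convex-program}. I introduce multipliers $\alpha_i\ge 0$ for the demand capacity constraints, $\beta_j\ge 0$ for the supply capacity constraints, and $\lambda_{ij}\ge 0$ for non-negativity. Writing $c_j:=w_j(1-\sum_{i'\in N(j)}x^*_{i'j})$ and observing that only the $j$-th summand of the objective depends on $x_{ij}$, the chain rule yields $\partial/\partial x_{ij}=-g'(c_j)$, so stationarity reads $g'(c_j)=\alpha_i+\beta_j-\lambda_{ij}$ for every $(i,j)\in E$. Complementary slackness then gives two facts I would use repeatedly: along any edge of $E'$ (where $x^*_{ij}>0$, so $\lambda_{ij}=0$) stationarity becomes the \emph{equality} $g'(c_j)=\alpha_i+\beta_j$, and whenever $c_j>0$ (equivalently, $j\notin \Driver^{(0)}$) the supply capacity is slack, so $\beta_j=0$. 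For Uniformity, the case $l=0$ is immediate since $c_j=0$ on $\Driver^{(0)}$ by definition; for $l\ge 1$ each $j\in\Driver^{(l)}$ has $\beta_j=0$, so any $G'$-edge $(i,j)$ inside $\Rider^{(l)}\times \Driver^{(l)}$ collapses to $\alpha_i=g'(c_j)$. Walking along a $G'$-path between any two $j,j'\in\Driver^{(l)}$ (which exists by connectedness of the component) and invoking strict monotonicity of $g'$ (from strict convexity of $g$) forces $c_j=c_{j'}$, a common value I denote by $c^{(l)}$.

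For Saturation, fix $i\in\Rider^{(l)}$ with $l\ge 1$. By construction $i$ has a $G'$-neighbor $j''\in\Driver^{(l)}$, so the equality above gives $\alpha_i=g'(c^{(l)})$. Because $g$ is increasing, $g'(0)\ge 0$, and strict convexity makes $g'$ strictly increasing, so $c^{(l)}>0$ implies $g'(c^{(l)})>g'(0)\ge 0$ and hence $\alpha_i>0$; complementary slackness on the demand constraint then forces $\sum_{j\in N(i)}x^*_{ij}=1$. For Monotonicity, suppose for contradiction that $(i,j)\in E$ with $i\in\Rider^{(l)}$, $j\in\Driver^{(l')}$ and $c^{(l)}<c^{(l')}$. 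Since $c^{(l)}\ge 0$, the case $l'=0$ is ruled out; hence $c^{(l')}>0$, $\beta_j=0$, and stationarity gives $g'(c^{(l')})\le\alpha_i$. To upper-bound $\alpha_i$ I would split on $l$: if $l\ge 1$, Saturation's derivation gives $\alpha_i=g'(c^{(l)})$; if $l=0$, any $G'$-neighbor $j'\in\Driver^{(0)}$ of $i$ satisfies $c_{j'}=0$, so $\alpha_i=g'(0)-\beta_{j'}\le g'(0)=g'(c^{(0)})$. Either way, $g'(c^{(l')})\le g'(c^{(l)})$, which by strict monotonicity contradicts $c^{(l)}<c^{(l')}$.

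The main technical point is to thread complementary slackness carefully through three regimes of vertices and edges simultaneously: the $\Driver^{(0)}$ side, where $\beta_j$ may be strictly positive; the $\Driver^{(l)}$ sides for $l\ge 1$, where $\beta_j=0$; and the non-support edges in $E\setminus E'$, where $\lambda_{ij}$ may be strictly positive. The other subtle ingredient, and the one that makes the whole argument work, is that \emph{strict} convexity of $g$ is precisely what turns propagation equalities $g'(c_j)=g'(c_{j'})$ into $c_j=c_{j'}$ and turns $g'(c^{(l')})\le g'(c^{(l)})$ into the contradiction needed in Monotonicity; merely convex $g$ would not suffice.
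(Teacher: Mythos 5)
Your proposal is correct and follows essentially the same route as the paper's proof: writing the KKT conditions of~\ref{eq:matching-convex-program}, using complementary slackness to kill the supply multipliers off $\Driver^{(0)}$ and the non-negativity multipliers on the support $E'$, propagating $g'(c_j)=\alpha_i$ along connected components for uniformity, deducing $\alpha_i>0$ for saturation, and comparing the two stationarity equations at a common demand vertex (with the same $l'=0$ versus $l'\geq 1$ case split) for monotonicity. The paper's argument is organized directly rather than by contradiction, but the substance is identical.
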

Note that if $c^{(l)}=c^{(l')}$ for $l\neq l'$, we can simply merge the two pairs $\left(\Rider^{(l)},\Driver^{(l)}\right)$ and $\left(\Rider^{(l')},\Driver^{(l')}\right)$ to $\left(\Rider^{(l)}\cup \Rider^{(l')},\Driver^{(l)}\cup \Driver^{(l')}\right)$, and still our decomposition satisfies the three properties of \Cref{lemma:structure}; these properties are all we need for our technical arguments. Therefore, without loss of generality, we can assume $c^{(l)}$'s are non-identical. 

The proof of \Cref{lemma:structure} is technical and uses convex analysis (applying the Karush–Kuhn–Tucker conditions of the convex program~\ref{eq:matching-convex-program}). We defer this proof to Appendix~\ref{apx:structure}.


\begin{proof}{\emph{Proof of \Cref{thm:opt-offline-comp-ratio}.}}
We prove a stronger statement, which is the desired competitive ratio against $\optoff$ for any realization of $\Rpa$ (essentially any adversarial second stage graph). Fix such a realization.  Consider the linear program of maximum supply-weighted bipartite matching in $G[\Ra\cup\Rpa,\Driver]$ as the primal linear program, and its dual: 
\begin{equation}
\tag{LP-1}
\label{eq:LP-max-weight}
\arraycolsep=1.4pt\def\arraystretch{1}
\begin{array}{llllllll}
\max  &\displaystyle\sum_{i\in\Ra\cup\Rpa}\displaystyle\sum_{j\in N(i)}x_{ij}w_j &~~\text{s.t.}&
&\text{min} &\displaystyle\sum_{i\in \Ra\cup\Rpa }{\alpha_i}+\displaystyle\sum_{j\in \Driver}{\beta_j}&~~\text{s.t.} \\[1.4em]
 &\displaystyle\sum_{j\in N(i)}{x_{ij}}\leq1 &  i\in\Ra\cup\Rpa~,& 
& &\alpha_i+\beta_j\geq w_j& i\in\Ra\cup\Rpa,j\in N(i)~,\\[1.4em]
 &\displaystyle\sum_{i\in N(j)}{x_{ij}}\leq 1 &j\in \Driver~, &
& &\alpha_i\geq 0 &i\in \Ra\cup\Rpa~, \\
 &x_{ij} \geq 0 &i\in\Ra\cup\Rpa,j\in N(i)~. &
& &\beta_j\geq 0  &j\in \Driver~. 
\end{array}
\end{equation}
Note that the optimal objective value of the above LP is what $\optoff$ obtains for this particular realization of $\Rpa$. Now let $M_2$ be the maximum supply-weighted matching between $\Rpa$ and remaining unmatched vertices in $\Driver$ by $M_1$. We construct a randomized dual assignment $\{\alpha_i\},\{\beta_j\}$ for the above dual LP by using matching $M_1$ so that it is feasible in expectation and 
\begin{equation}
\label{eq:primal-dual-ratio}
\expect[M_1]{\sum_{j\in \Driver}w_j\mathbbm{1}\{\textrm{$j$ is matched by $M_1\cup M_2$}\}}= \tfrac{3}{4}\cdot\expect[M_1]{\displaystyle\sum_{i\in \Ra\cup\Rpa }{\alpha_i}+\displaystyle\sum_{j\in \Driver}{\beta_j}}~,
\end{equation}
where $M_1$ is sampled as in \Cref{alg:weighted-skeleton} (with marginal edge probabilities $\xbf^*$). This finishes the proof because $\{\expect{\alpha_i}\},\{\expect{\beta_j}\}$ is a deterministic feasible dual assignment whose objective value is at most $\tfrac{4}{3}$ times the expected weight of the matching $M_1\cup M_2$, and the objective value of any feasible dual assignment is an upper bound on the optimal primal objective value.

Let $\Driver_1$ be the set of matched supply vertices by $M_1$. Consider the linear program for the maximum supply-weighted bipartite matching in the residual graph $G[\Rpa, (\Driver\setminus \Driver_1)]$. Let $\{\hat{\alpha}_i\}_{i\in\Rpa}$, $\{\hat\beta_j\}_{j\in \Driver\setminus \Driver_1}$ be the optimal dual solution of this LP. Therefore:
$$\forall i\in\Rpa,j\in \Driver\setminus \Driver_1, (i,j)\in E:~~~\hat\alpha_i+\hat\beta_j\geq w_j~.$$
Moreover, as $M_2$ is an optimal solution for the primal LP, we have:
$$\sum_{j\in \Driver}w_j\mathbbm{1}\{\textrm{$j$ is matched by $ M_2$}\}=\sum_{i\in\Rpa}\hat\alpha_i+\sum_{j\in \Driver\setminus  \Driver_1}\hat\beta_j~.$$
Now, given the decomposition in \Cref{lemma:structure}, consider this randomized dual assignment for \ref{eq:LP-max-weight}:
\begin{itemize}
\item Demand vertices in $\Ra\cup\Rpa:$
\begin{itemize}
    \item $ i\in\Rpa:~~\alpha_i\leftarrow\hat{\alpha}_i$~,
    \item $ i\in \Rider^{(l)},~l=0,1,\dots,L:~\alpha_i\leftarrow c^{(l)}=w_j(1-\sum_{i'\in N(j)}x^*_{i'j}),$ for any supply vertex $j\in\Driver^{(l)}$\\ (\emph{see the uniformity, \Cref{lemma:structure}}).
\end{itemize}
\item Supply vertices in $\Driver:$
\begin{itemize}
    \item $ j\in \Driver^{(l)},  l=0,1,\dots,L:~~\beta_j\leftarrow w_j\left(\sum_{i\in N(j)}x^*_{ij}\right)^2+\hat\beta_j\mathbbm{1}\{j\in \Driver\setminus \Driver_1\}$~.
\end{itemize}
\end{itemize}
Let $y^*_j\triangleq \sum_{i\in N(j)}x^*_{ij}$ for $j\in \Driver$ denote the probability that $j$ is matched in $M_1$. First, we have:
\begin{align*}
    \sum_{i\in\Ra\cup\Rpa}\alpha_i+\sum_{j\in \Driver}\beta_j&=\sum_{i\in\Ra}\alpha_i+\sum_{j\in \Driver}w_j{y^*_j}^2+\left(\sum_{i\in\Rpa}\hat{\alpha_i}+\sum_{j\in \Driver\setminus \Driver_1}\hat{\beta}_j\right)\\
    &=\sum_{l=0}^{L}\left(\sum_{i\in\Rider^{(l)}}c^{(l)}+\sum_{j\in \Driver^{(l)}}w_j{y^*_j}^2\right)+\sum_{j\in \Driver}w_j\mathbbm{1}\{\textrm{$j$ is matched by $ M_2$}\}\\
    &\overset{(1)}{=}\sum_{l=0}^{L}\left(\sum_{i\in\Rider^{(l)}}\left(\sum_{j\in N(i)}x^*_{ij}\right)c^{(l)}+\sum_{j\in \Driver^{(l)}}w_j{y^*_j}^2\right)+\sum_{j\in \Driver}w_j\mathbbm{1}\{\textrm{$j$ is matched by $ M_2$}\}\\
    &\overset{(2)}{=}\sum_{l=0}^{L}\left(\sum_{i\in\Rider^{(l)}}\sum_{j\in N(i)}x^*_{ij}w_j(1-y^*_j)+\sum_{j\in \Driver^{(l)}}\sum_{i\in N(j)}x^*_{ij}w_j{y^*_j}\right)+\sum_{j\in \Driver}w_j\mathbbm{1}\{\textrm{$j$ is matched by $ M_2$}\}\\
    &=\sum_{l=0}^{L}\sum_{i\in\Rider^{(l)}}\sum_{j\in N(i)}x^*_{ij}w_j+\sum_{j\in \Driver}w_j\mathbbm{1}\{\textrm{$j$ is matched by $ M_2$}\}\\
    &=\sum_{j\in \Driver}w_j\left(\sum_{i\in N(j)}x^*_{ij}+\mathbbm{1}\{\textrm{$j$ is matched by $ M_2$}\}\right)~,
\end{align*}
where (1) holds as either $c^{(l)}=0$ (when $l=0$) or $\sum_{j\in N(i)}x^*_{ij}=1$ (when $l>1$) because of saturation property (\Cref{lemma:structure}), and (2) holds as $c^{(l)}=w_j(1-y^*_j)$ for any $j\in \Driver^{(l)}$ because of uniformity (\Cref{lemma:structure}). Therefore, $\expect[M_1]{ \sum_{i\in\Ra\cup\Rpa}\alpha_i+\sum_{j\in \Driver}\beta_j}=\expect[M_1]{\sum_{j\in \Driver}w_j\mathbbm{1}\{\textrm{$j$ is matched by $ M_1\cup M_2$}\}}$. To check the feasibility, note that the dual constraints in \ref{eq:LP-max-weight} correspond to two types of edges:
\begin{itemize}
    \item $\Big[$\emph{Type \Rom{1}}: $(i,j)\in E, i\in\Ra,j\in \Driver\Big]$ Suppose $i\in\Rider^{(l)}$ and $j\in \Driver^{(l')}$, for $l,l'\in\{0,1,\dots,L\}$. Then:
    \begin{align}
    \label{eq:dual approx feasibility 1}
    \alpha_i+\beta_j\geq c^{(l)}+w_j{y^*_j}^2\overset{(1)}{\geq}c^{(l')}+w_j{y^*_j}^2\overset{(2)}{=}w_j(1-y^*_j)+w_j{y^*_j}^2\overset{(3)}{\geq} \tfrac{3}{4}w_j~,
    \end{align}
    where (1) and (2) hold because of monotonicity and uniformity properties, respectively (\Cref{lemma:structure}), and (3) holds as $y^*_j\in[0,1]$, and $\underset{x\in [0,1]}{\min}1-x+x^2=\tfrac{3}{4}$. So, $\expect[M_1]{\alpha_i+\beta_j}\geq \tfrac{3}{4}w_j$.
     \item $\Big[$\emph{Type \Rom{2}}: $(i,j)\in E, i\in\Rpa,j\in \Driver\Big]$ In this case, we have:
     \begin{align}
      \label{eq:dual approx feasibility 2}
     \begin{split}
     \expect[M_1]{\alpha_i+\beta_j}&= \expect[M_1]{\hat\alpha_i+w_j{y^*_j}^2+\hat\beta_j\mathbbm{1}\{\textrm{$j$ is not matched in $M_1$}\}}\\
     &\geq\expect[M_1]{\left(\hat\alpha_i+\hat\beta_j\right)\mathbbm{1}\{\textrm{$j$ is not matched in $M_1$}\}+w_j{y^*_j}^2}\\
     &\overset{(1)}{\geq} \expect[M_1]{w_j\mathbbm{1}\{\textrm{$j$ is not matched in $M_1$}\}+w_j{y^*_j}^2}= 
     w_j(1-{y_j^*})+w_j{y^*_j}^2\overset{(2)}{\geq}\tfrac{3}{4}w_j~,
     \end{split}
    \end{align}
    where (1) holds as $\hat\alpha_i+\hat\beta_j\geq w_j$ when $i\in\tilde{D}_2$ and $j\in S\setminus S_1$ (as it is optimal dual solution of the residual LP), and again (2) holds as $1-y_j^*+{y_j^*}^2\geq 3/4$ similar to the previous case.
\end{itemize}
Multiplying the current $\{\alpha_i\}$, $\{\beta_j\}$ by $\tfrac{4}{3}$ gives a feasible in expectation dual solution satisfying \Cref{eq:primal-dual-ratio}, which finishes the proof. \hfill\Halmos
\end{proof}
\begin{remark}
Note that \Cref{alg:weighted-skeleton} does not use any information about the vertices in $\Rp$, their probabilities $\{\pi_i\}_{i\in\Rpa}$ of being available in the second stage, or even connections to vertices in $\Driver$ when it chooses $M_1$.  However, it still obtains the optimal competitive ratio of $\tfrac{3}{4}$ against the optimum offline. In \Cref{sec:stochastic}, we show  how to incorporate this information to obtain slightly improved competitive ratio against the optimum online policy.
\end{remark}
\revcolor{
\begin{remark}
\label{remark:robustness main text}
In Appendix~\ref{apx:robustness-approx}, we 
discuss the robustness of \Cref{alg:weighted-skeleton}
(i.e., how the competitive ratio changes)
when we modify the matching decision 
in the second stage, so that instead of outputting
the maximum matching between realized 
demand vertices  
with unmatched supply vertices in the second stage,
the algorithm outputs a  
$\secondstageratio$-approximately optimal
matching. We also discuss how extra assumptions on the first-stage graph can lead to an improved competitive ratio in Appendix~\ref{apx:robustness-first}.
\end{remark}}

\section{Competitive Algorithms for Optimum Online}
\label{sec:improvedCR}
In this section, we change our benchmark to the optimum online. We first discuss the hardness of computing optimum online in \Cref{sec:hardness-online}, and show it can be characterized as a particular submodular maximization. By using approximation algorithms for this submodular maximization~\citep{SVW-17}, we obtain an alternative first stage matching. We then pick the better of this matching and the one returned by the algorithm in \Cref{sec:adversarial}. By introducing a factor revealing program and using the properties of the decomposition and the submodular formulation, we bound the final competitive ratio and show it is strictly better than $3/4$ by a constant.
\subsection{The complexity of computing optimum online}
\label{sec:hardness-online}
\newcommand{\M}{\mathcal{M}}
\newcommand{\I}{\mathcal{I}}
\newcommand{\B}{\mathcal{B}}
\newcommand{\wbf}{\mathbf{w}}
\newcommand{\Cvector}{\mathcal{C}}
 The first observation is that  the optimum online policy is always (weakly) better off by not leaving a supply vertex $j$ unmatched if there is a possibility to match this vertex in the first stage. 

\begin{remark} 
\label{rem:observation}
There exists an optimum online policy picking a matching $M^*_1$ in the first stage, where $M_1^*$ is a maximum \underline{unweighted} matching in the induced subgraph $G[\Ra, \Driver]$.
\end{remark}

The above observation in \Cref{rem:observation} helps us to state the problem of finding the optimum online policy as a \emph{matroid optimization problem}.\footnote{In what follows, we assume the reader is familiar with the general concept of a matroid as a downward-closed set system and its rank function, the concept of dual matroid, and some of the important special cases of matroids such as transversal matroids. For background materials regarding matroidal set systems, please refer to \cite{sch-03}.} Consider the transversal matroid $\M=(\Driver,\I,\B)$, in which independent sets in $\I$ are all subsets of $\Driver$ that can be matched in $G[\Ra, \Driver]$ by a matching, and the bases in $\B$ are all subsets that can be matched in $G[\Ra, \Driver]$ by a maximum matching. Denote the dual of this matroid with $\bar{\M}=(\Driver,\bar{\I},\bar{\B})$, where $T\in\bar{\B}$ if and only if $\Driver\setminus T\in\B$ (or equivalently, $T\in \bar{\I}$ if and only if $\exists T'\subseteq \Driver : T\subseteq T',\Driver\setminus T'\in \B$). 

Given the above definitions, the problem of finding the optimum online policy can be re-framed as finding a base $T_1 \in \B$ to maximize the sum of the weights of the vertices in $T_1$, plus the expected weight of the matching in $G[\Rpa , \Driver \setminus T_1]$. The first term can be written as a linear function. The second term can be written as a convex combination of transversal matroid rank functions~\citep{sch-03}. To see this, consider the weighted rank function  $\rho^\wbf:2^\Driver\rightarrow \R$ in the transversal matroid corresponding to bipartite subgraph $G[\Rpa,\Driver]$:
$$ \forall T\subseteq \Driver:~~~\rho^\wbf(T)\triangleq 
\maxmatch(\Rpa,T).$$

Note that each vertex $i\in\Rp$ will be placed in $\Rpa$ independently with probability $\pi_i$, and therefore $\rho^\wbf(.)$ is a stochastic function. Let 
\begin{equation}
\label{eq:expected-rank-def}
\forall T\subseteq \Driver: f^{\wbf}(T)\triangleq \expect[{\Rpa}]{\rho^\wbf(T)}.
\end{equation}
Given set $T_1$ of matched supply vertices of the first stage, the objective value of any policy  from the matching process (as in \cref{eq:opt-online-def-1}) can be written as $\sum_{j\in (\Driver \setminus \bar T_1)} w_j+f^{\wbf}( \bar T_1)$, where $\bar T_1\triangleq \Driver\setminus T_1$. Putting all the pieces together, the optimum online policy can be characterized as follows.
\begin{proposition}
\label{prop:opt-online-characterization}
If $M^*_1$ is the first-stage matching of the optimum online policy and $\bar T_1$ is the set of supplies that are left unmatched by $M^*_1$, then:
\begin{equation}
\label{eq:opt-sm}
    \bar T_1\in \underset{T\in\bar\B}{\argmax}~ \left(\sum_{j\in \Driver\setminus T}w_j + f^\wbf(T)\right),
\end{equation}
where $\bar\B$ is the set of bases of the dual transversal matroid corresponding to bipartite subgraph $G[\Ra, \Driver]$, and $f^\wbf(\cdot)$ is defined as in \cref{eq:expected-rank-def}. Moreover, $M^*_1$ is the matching that (fully) matches $\Driver\setminus \bar{T}_1$ to $\Ra$.
\end{proposition}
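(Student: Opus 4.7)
I would reduce the problem of computing the optimum online policy to a matroid-optimization over the bases $\bar{\B}$ of the dual transversal matroid by (i) using \Cref{rem:observation} to restrict attention to first-stage matchings that saturate a base of $\M$, and (ii) expressing the online policy's expected objective as a sum of a linear term in the first-stage matched supply vertices plus the expected maximum-weight residual matching, which is exactly $f^\wbf$ evaluated at the unmatched set.

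First, I would invoke \Cref{rem:observation} to assume that the optimum online's first-stage matching $M_1^*$ is a maximum unweighted matching in $G[\Ra,\Driver]$. By the definition of the transversal matroid $\M=(\Driver,\I,\B)$ induced by $G[\Ra,\Driver]$, the set $\Driver_1$ of supply vertices saturated by $M_1^*$ is then a basis of $\M$; equivalently, $\bar{T}_1 \triangleq \Driver \setminus \Driver_1$ belongs to $\bar{\B}$. Conversely, for any $\bar T_1 \in \bar{\B}$ the set $\Driver \setminus \bar T_1 \in \B$ admits a matching that fully saturates it to $\Ra$, so parameterizing over $\bar T_1 \in \bar{\B}$ captures (without loss of objective value) every choice of $M_1^*$ permitted by \Cref{rem:observation}.

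Second, I would compute the expected objective of the matching process as a function of $\bar T_1$. The contribution from $M_1^*$ is $\sum_{j \in \Driver \setminus \bar T_1} w_j$, a linear function of the complement. For the second stage, given any realization $\Rpa$, the optimum online must pick a maximum supply-weighted matching between $\Rpa$ and the unmatched supplies $\bar T_1$ (this is optimal pointwise, hence optimal in expectation), and by the definition of $\rho^\wbf$ this matching has weight $\maxmatch(\Rpa,\bar T_1) = \rho^\wbf(\bar T_1)$. Taking expectation over $\Rpa$ and using \eqref{eq:expected-rank-def} yields $f^\wbf(\bar T_1)$, so the total expected objective of the policy is
$$\sum_{j \in \Driver \setminus \bar T_1} w_j \;+\; f^\wbf(\bar T_1).$$

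Finally, the optimum online selects $\bar T_1$ to maximize this expression over $\bar T_1 \in \bar{\B}$, which is exactly \eqref{eq:opt-sm}, and then chooses $M_1^*$ to be any matching fully matching $\Driver \setminus \bar T_1$ to $\Ra$ (which exists because $\Driver \setminus \bar T_1 \in \B$). The only non-routine step is the reduction via \Cref{rem:observation}; once that is in hand, the remainder is straightforward bookkeeping using matroid duality and the definition of the expected rank function $f^\wbf$.
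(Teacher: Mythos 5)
Your proposal is correct and follows essentially the same route as the paper, which establishes this proposition through the discussion preceding it: restrict to maximum unweighted first-stage matchings via \Cref{rem:observation}, identify the unmatched supply set with a base of the dual transversal matroid, and decompose the objective as the linear first-stage term plus the expected weighted rank function $f^\wbf$ of the second-stage residual graph. No gaps.
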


\subsubsection*{Impossibility of FPTAS} Characterization in \Cref{prop:opt-online-characterization} casts optimum online as a set function optimization. Note that computing even one \emph{exact} value query to function $f^\wbf(.)$ requires adding up exponentially many terms. However, we should be able to obtain $\epsilon$-close value queries in time polynomial in $\max(\lvert \Rider \rvert,\lvert \Driver \rvert)$ and $\frac{1}{\epsilon}$ using sampling, which guarantees to obtain a value in $[f^\wbf(T)-\epsilon,f^\wbf(T)+\epsilon]$ for each set $T$ with high probability~\footnote{Without loss of generality, we assume weights are bounded by $1$ for a polynomial sample complexity.}. Can these $\epsilon$-close value queries be used to obtain a near optimal approximation for the optimum online policy in polynomial time?

 We show that computing optimum online does not admit a \emph{Fully Polynomial Time Approximation Scheme (FPTAS)}\footnote{An FPTAS is a $(1-\epsilon)$-approximation algorithm with running time polynomial in the size of the problem and $\tfrac{1}{\epsilon}$.}, even with access to exact value queries to function $f^\wbf$. The proof is based on a reduction from the \emph{max $k$-cover} problem, for which the hardness of approximations is known~\citep{fei-98}. See Appendix~\ref{apx:no-fptas} for proof.
 

\begin{restatable}{theorem}{thmfptas}
\label{thm:no-fptas}
Computing $\opton$ as in \cref{eq:opt-online-def-1}  admits no FPTAS, unless $P=NP$.
\end{restatable}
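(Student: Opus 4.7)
The plan is to reduce from the max $k$-cover problem, which is strongly NP-hard and therefore admits no FPTAS unless $P = NP$. Given a max $k$-cover instance with universe $U = \{u_1, \ldots, u_n\}$, family $\mathcal{F} = \{F_1, \ldots, F_m\}$, and an integer $k$, I would construct the following two-stage stochastic matching instance. Set $\Driver = \{d_1, \ldots, d_m\}$, one supply vertex per subset, with weights $w_{d_i} = 1$. Take $\Ra = \{r_1, \ldots, r_{m-k}\}$ and make $G[\Ra, \Driver]$ a complete bipartite graph; then every base of the transversal matroid has size $m-k$ and every base $\bar T \in \bar{\B}$ of the dual transversal matroid is precisely an arbitrary $k$-subset of $\Driver$. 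Finally, set $\Rp = \{v_1, \ldots, v_n\}$, one demand per element, with $v_j$ adjacent to exactly those $d_i$ with $u_j \in F_i$, and set $\pi_{v_j} = p$ for all $j$, where $p$ is a parameter of order $1/\mathrm{poly}(n,m)$ to be chosen below.

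The next step is to compute $\opton$ via \Cref{prop:opt-online-characterization}. With unit supply weights and $|\bar T| = k$, the expression in \cref{eq:opt-sm} simplifies to
\begin{equation*}
\opton \;=\; (m-k) \;+\; \max_{T \subseteq \Driver,\; |T|=k}\, f^\wbf(T).
\end{equation*}
I would then show that under the small-$p$ regime, $f^\wbf(T) = p\cdot c(T) \pm O(n^3 p^2)$, where $c(T) = |\bigcup_{i:\, d_i \in T} F_i|$ is the coverage of $T$. The leading term isolates the event $|\Rpa| = 1$: if a single element $v_j \in \Rpa$ is realized, the max matching between $\{v_j\}$ and $T$ equals $\mathbbm{1}[u_j \text{ covered by } T]$. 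Summing over $j$ and weighting by $p(1-p)^{n-1}$ gives the leading term $p\cdot c(T)$; events with $|\Rpa| \ge 2$ contribute at most $\sum_{\ell \ge 2} \binom{n}{\ell} p^\ell \cdot \ell = O(n^3 p^2)$. Consequently,
\begin{equation*}
\opton \;=\; (m-k) \;+\; p \cdot \mathrm{MaxCover}_k \;\pm\; O(n^3 p^2),
\end{equation*}
where $\mathrm{MaxCover}_k \in \{0, 1, \ldots, n\}$ denotes the max $k$-cover value.

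To finish, suppose for contradiction that an FPTAS exists that, for any $\epsilon > 0$, returns a value $V$ with $(1-\epsilon)\opton \le V \le \opton$ in time $\mathrm{poly}(n, m, 1/\epsilon)$. I would choose $p = 1/(Cn^4)$ and $\epsilon = p/(4m)$ for a sufficiently large constant $C$. The FPTAS runs in polynomial time in $n, m$, and from its output $V$ compute $\widehat c \triangleq (V - (m-k))/p$. A direct accounting gives
\begin{equation*}
\bigl|\widehat c - \mathrm{MaxCover}_k\bigr| \;\le\; \tfrac{|V - \opton|}{p} \;+\; \tfrac{O(n^3 p^2)}{p} \;\le\; \tfrac{\epsilon\, \opton}{p} + O(n^3 p) \;<\; \tfrac{1}{2},
\end{equation*}
so rounding $\widehat c$ to the nearest integer recovers $\mathrm{MaxCover}_k$ exactly in polynomial time, contradicting the NP-hardness of max $k$-cover.

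The main technical obstacle I anticipate is controlling the $O(n^3 p^2)$ error term with sharp enough constants, and simultaneously balancing the two sources of error (the FPTAS slack $\epsilon\,\opton$ and the expansion error) so that both are strictly below $p/4$ while keeping $1/p$ and $1/\epsilon$ polynomial. Choosing $p$ polynomially small absorbs the first source; choosing $\epsilon$ polynomially small relative to $p/(m-k)$ absorbs the second. The rest of the argument is routine Bernoulli expansion and bipartite-matching reasoning.
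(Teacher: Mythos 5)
Your construction is exactly the paper's reduction gadget: a complete bipartite first-stage graph on $m-k$ demand vertices so that the dual transversal matroid's bases are precisely the $k$-subsets of $\Driver$, element--set incidence edges in the second stage, and a uniformly small availability probability so that $f^{\mathbf{w}}(T)$ linearizes to $p$ times the coverage of $T$ up to an $O(\mathrm{poly}(n)\,p^2)$ error. Where you diverge is the endgame. The paper keeps the guarantee multiplicative throughout: it shows the set $\hat T$ left unmatched by a hypothetical FPTAS would be a $(1-O(c))$-approximate solution to max $k$-cover, and then invokes Feige's theorem that max $k$-cover cannot be approximated beyond $1-1/e$ unless $P=NP$. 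You instead exploit the ``fully'' in FPTAS: since $\epsilon$ may be taken inverse-polynomial, the additive error in your estimate $\widehat{c}$ of the coverage value drops below $1/2$, rounding recovers $\mathrm{MaxCover}_k$ exactly, and you need only the plain NP-completeness of the set-cover decision problem rather than a PCP-based inapproximability result. Both arguments require $\epsilon = 1/\mathrm{poly}$ and hence rule out precisely an FPTAS (neither extends to ruling out a PTAS), so the conclusions are equally strong, but your hardness ingredient is more elementary. One small caveat: you assume the algorithm returns a numerical value $V$ with $(1-\epsilon)\opton \le V \le \opton$, whereas the paper assumes it returns a first-stage matching and works with the unmatched set $\hat T$; with your parameters either reading goes through (if a set $\hat T$ is returned, $c(\hat T)$ is directly computable and the same accounting gives $c(\hat T) > \mathrm{MaxCover}_k - 1$, hence equality by integrality), but you should state explicitly which object the hypothetical algorithm outputs.
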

We highlight that while \Cref{thm:no-fptas} essentially rules out near-optimal approximations for the optimum online, as we established in \Cref{sec:adversarial}, getting a constant approximation with respect to the optimum offline, and hence with respect to the optimum online, is still possible.

\revcolor{\subsubsection*{Connection to submodular maximization} The optimization in \cref{eq:opt-sm} can be rewritten as: 
\begin{equation}
\label{eq:opt-sm2}
 \bar T_1\in \underset{T\in\bar\B}{\argmax}~ \left(f^\wbf(T)-\sum_{j\in T}w_j+\sum_{j\in \Driver}w_j\right)\equiv  \underset{T\in\bar\B}{\argmax}~ \left(f^\wbf(T)-\sum_{j\in T}w_j\right) 
\end{equation}
Moreover, we can show the following simple structures of the function $f^\wbf$.
\begin{lemma} The set function $f^\wbf(.)$ (in \cref{eq:expected-rank-def}) satisfies the following properties,
\begin{itemize}
    \item Monotonicity:~  $\forall T\subseteq T'\subseteq \Driver: f^\wbf(T)\leq f^\wbf(T')~,$
    \item Submodularity:~$\forall T,T'\subseteq \Driver: f^\wbf(T\cup T')+f^\wbf(T\cap T')\leq f^\wbf(T)+f^\wbf(T')~.$
\end{itemize}
\end{lemma}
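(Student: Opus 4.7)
The plan is to verify both properties of $f^{\wbf}$ by first establishing them pointwise for the random function $\rho^{\wbf}(\cdot)$ (for every realization of $\tilde{D}_2$), and then lifting them to $f^{\wbf}$ by taking expectation, since both monotonicity and submodularity are preserved under non-negative convex combinations of set functions.

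For monotonicity, I would fix an arbitrary realization of $\tilde{D}_2$ and observe that for $T \subseteq T' \subseteq \Driver$, every matching in $G[\tilde{D}_2, T]$ is also a feasible matching in $G[\tilde{D}_2, T']$, so $\rho^{\wbf}(T) \leq \rho^{\wbf}(T')$ pointwise; taking expectation then gives $f^{\wbf}(T) \leq f^{\wbf}(T')$.

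For submodularity, the key observation is that for each realization of $\tilde{D}_2$, the function $\rho^{\wbf}(T) = \maxmatch(\tilde{D}_2, T)$ is exactly the weighted rank function (restricted to $T$) of the transversal matroid on ground set $\Driver$ induced by $G[\tilde{D}_2, \Driver]$. Writing $r_{\tilde{D}_2}(\cdot)$ for its unweighted rank and using that the weights $w_j$ are non-negative, I would invoke the layer-cake / threshold decomposition
\[
\rho^{\wbf}(T) \;=\; \int_0^\infty r_{\tilde{D}_2}\!\bigl(T \cap \{j \in \Driver : w_j \geq t\}\bigr)\, dt,
\]
which holds because the optimal weighted independent set within any level set $\{j : w_j \geq t\}$ may be chosen to be of maximum cardinality there. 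The unweighted matroid rank $r_{\tilde{D}_2}$ is submodular, and for any fixed $A \subseteq \Driver$ the map $T \mapsto r_{\tilde{D}_2}(T \cap A)$ remains submodular (submodularity is preserved under intersection with a fixed set). Integrating a non-negative family of submodular functions over $t$ preserves submodularity, so $\rho^{\wbf}(\cdot)$ is submodular for every realization of $\tilde{D}_2$. Finally, $f^{\wbf} = \mathbb{E}_{\tilde{D}_2}[\rho^{\wbf}]$ is a non-negative convex combination of submodular functions, hence submodular.

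The only non-routine step is the submodularity of the weighted matroid rank; the layer-cake identity above reduces it cleanly to the classical submodularity of the unweighted rank function, and the rest is just taking expectation. An alternative, if one prefers to avoid the integral representation, is to prove $\rho^{\wbf}(T \cup \{j\}) - \rho^{\wbf}(T) \geq \rho^{\wbf}(T' \cup \{j\}) - \rho^{\wbf}(T')$ for $T \subseteq T'$ directly by a matroid exchange argument between a maximum-weight basis of $T' \cup \{j\}$ and one of $T$, using that exchanges in matroids can be chosen to preserve weight inequalities in the desired direction.
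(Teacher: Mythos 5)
Your proof is correct and follows essentially the same route as the paper: the paper's argument is simply that $\rho^{\mathbf{w}}$ is the weighted rank function of a transversal matroid, hence monotone and submodular, and that these properties survive the convex combination (expectation) over realizations of $\Rpa$. The only difference is that you additionally prove the standard fact that weighted matroid rank functions are submodular (via the layer-cake reduction to the unweighted rank), which the paper invokes without proof; that reduction is valid.
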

The proof of the above lemma is immediate, as matroid rank functions are monotone and submodular, and hence a convex combination of them will also be monotone and submodular.
Given this lemma, the formulation in \cref{eq:opt-sm2} casts our problem as maximizing the sum of a non-negative monotone submodular function and a (negative) linear/modular function~\citep{SVW-17}, which immediately implies the following result.} We heavily use this result in \Cref{sec:opt-online} (proof sketch in Appendix~\ref{apx:sm-proof})
\begin{proposition}
\label{prop:cont-greedy}
For every $\epsilon>0$, there exists a greedy-style algorithm returning $\bar{T}\in \bar{\B}$, such that:
$$\forall T\in\bar{\B}: f^\wbf(\bar{T})+\sum_{j\in \Driver\setminus \bar{T}}w_v\geq  (1-\tfrac{1}{e}-\epsilon)f^\wbf(T)+\sum_{j\in \Driver\setminus T}w_j~.$$
Moreover, the algorithm's running time is polynomial in $\lvert \Driver\rvert,\lvert \Rider\rvert$ and $\tfrac{1}{\epsilon}$.
\end{proposition}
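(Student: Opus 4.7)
The plan is to reduce the claim to a well-studied form of constrained submodular maximization and then invoke the distorted continuous greedy of \citet{SVW-17}.

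\textbf{Step 1: Reformulate as difference of submodular and linear.} Setting $W \triangleq \sum_{j \in \Driver} w_j$ (a constant), the target inequality rearranges to
$$f^\wbf(\bar T) - c(\bar T) \;\geq\; (1 - \tfrac{1}{e} - \epsilon)\, f^\wbf(T) - c(T), \qquad \forall\, T \in \bar\B,$$
where $c(T) \triangleq \sum_{j \in T} w_j$ is a non-negative modular function. Thus the task becomes approximately maximizing $f^\wbf(\cdot) - c(\cdot)$ over the bases $\bar \B$ of the dual transversal matroid $\bar\M$, with the distinguishing feature that no $(1-1/e)$ factor may be lost on the linear piece.

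\textbf{Step 2: Distorted continuous greedy and rounding.} I would run the distorted (``regularized'') continuous greedy of \citet{SVW-17} on the multilinear extension $F$ of $f^\wbf$ over the matroid polytope $P(\bar\M)$. At each infinitesimal time step $t \in [0,1]$, the algorithm moves along a base of $\bar\M$ that maximizes a time-weighted combination of $\nabla F$ and $c$, with weights chosen so that the linear part is captured losslessly while the submodular part accumulates the usual $(1-e^{-1})$ fractional improvement. Their analysis then yields a fractional point $x \in P(\bar\M)$ satisfying $F(x) - c(x) \geq (1-1/e-\epsilon)\,F(x^*) - c(x^*)$ for any feasible $x^*$. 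Rounding $x$ to an integral base $\bar T \in \bar\B$ via pipage or swap rounding preserves the objective in expectation, and by standard concentration the rounding loss is absorbed into an additive $\epsilon W$ slack (which can be rescaled back into $\epsilon$).

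\textbf{Step 3: Sampled value oracle.} Since $f^\wbf(T) = \expect[\Rpa]{\rho^\wbf(T)}$ sums over exponentially many realizations of $\Rpa$, I would feed the algorithm an approximate oracle via Monte Carlo: for each query, draw $N = O(\epsilon^{-2} \log(|\Driver|/\delta))$ independent realizations of $\Rpa$ and return the empirical mean of $\rho^\wbf(T)$. Because $\rho^\wbf(T) \leq W$, Hoeffding's inequality yields additive $\epsilon W$ accuracy per query with probability $1 - \delta$, and a union bound across the $\mathrm{poly}(|\Driver|, 1/\epsilon)$ queries performed by the algorithm makes all estimates simultaneously accurate. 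Both the distorted continuous greedy and the subsequent rounding are robust to this level of oracle noise, yielding a total runtime polynomial in $|\Driver|$, $|\Rider|$, and $1/\epsilon$.

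\textbf{Main obstacle.} The delicate ingredient is the lossless treatment of the modular cost $c$. Applying a vanilla continuous greedy directly to $F - c$ would produce only $(1-1/e)\bigl(f^\wbf(T) - c(T)\bigr)$, strictly weaker whenever $c(T)$ is a nontrivial fraction of $f^\wbf(T)$. The time-dependent reweighting trick of \citet{SVW-17} is exactly what recovers the cost term in full, and verifying that it composes cleanly with (i) the dual transversal matroid constraint and (ii) the sampled value oracle (propagating the noise only polynomially through the analysis) is the step that merits the most care. Once this is in hand, the desired inequality follows by rearranging and adding $W$ back to both sides.
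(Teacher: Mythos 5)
Your proposal matches the paper's proof essentially step for step: the paper likewise recasts the problem as maximizing $f^\wbf(T)-\sum_{j\in T}w_j$ over the bases $\bar\B$ of the dual transversal matroid, invokes the modified (distorted) continuous greedy of \citet{SVW-17} on the multilinear extension estimated via sampling, applies pipage rounding, and then adds $\sum_{j\in\Driver}w_j$ to both sides to recover the stated inequality. Your additional detail on the Monte Carlo oracle and on why the vanilla continuous greedy would lose a $(1-1/e)$ factor on the linear term is consistent with, and a correct elaboration of, what the paper leaves implicit.
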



\subsection{Improved  competitive ratios against optimum online}
\label{sec:opt-online}
\newcommand{\maxalgo}{\textsc{Hedge-and-Greed}}
\label{sec:stochastic}

In this section, we show how a randomized combination of \Cref{alg:weighted-skeleton} and the greedy-style algorithm in \Cref{prop:cont-greedy} beats the $3/4$ competitive ratio against the optimum online benchmark. 

\begin{algorithm}[htb]
\begin{algorithmic}[1]
\State{\textbf{input:} bipartite graph 
$G=(\Rider,\Driver,E)$, non-negative weights $\{w_j\}$, probabilities  $\{\pi_i\}_{i\in\Rp}$, $\combprob\in[0,1]$.}
\State{\textbf{output:} bipartite matching $M_1$ in $G[\Ra, \Driver]$, bipartite matching $M_2$ in $G[\Rpa,\Driver]$}
\vspace{2mm}
\State{In the first stage, run 
\Cref{alg:weighted-skeleton} (for any increasing strictly convex function $g$)
with probability $\combprob$ and return its matching as $M_1$.}
\State{Otherwise, run the greedy-style algorithm 
in \Cref{prop:cont-greedy} to return $\bar{T}\in \bar{\B}$, and return the maximum unweighted matching $M_1$ between $\Ra$ and $\Driver\setminus \bar{T}$ in the first stage.}
\State{In the second stage, return the maximum weighted matching $M_2$ between $\Rpa$ and the remaining vertices of $\Driver$.}
\end{algorithmic}
\caption{\maxalgo}
\label{alg:online}
\end{algorithm}

\begin{theorem}
\label{thm:opt-online-comp-ratio}
For unweighted supply vertices,
\Cref{alg:online} with 
$\combprob = \sfrac{1}{(e - 1)}$
is $\onlineUnweightedRatioText$-competitive
against the optimum online, i.e.,
\begin{align*}
\compratioMUW(\Cref{alg:online},\opton) &\geq 
\onlineUnweightedRatio
\approx 0.7674. 
\intertext{
For weighted case,
\Cref{alg:online}
with 
$\combprob = 0.7$
is $\onlineWeightedRatio$-competitive
against the optimum online, i.e.,}
\compratioM(\Cref{alg:online},\opton) &\geq 
\onlineWeightedRatio.
\end{align*}
\end{theorem}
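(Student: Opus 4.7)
The plan is to analyze both components of \Cref{alg:online} separately against $\opton$, and then combine them via a factor-revealing program whose optimum over the mixing parameter $\combprob$ yields the claimed ratios.

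\emph{Step 1 (refining \Cref{alg:weighted-skeleton} against $\opton$).} The primal-dual argument in the proof of \Cref{thm:opt-offline-comp-ratio} produced per-edge slack $(1 - y_j^* + (y_j^*)^2)\, w_j \geq \tfrac{3}{4} w_j$, tight only at the uniform utilization $y_j^* = \tfrac{1}{2}$. Using the structural decomposition of \Cref{lemma:structure}, I keep this slack symbolic on each component $\Driver^{(l)}$ and derive a refined bound of the form $\alg_1 \geq \Phi(\{y_j^*,w_j\})\cdot \opton$, where $\Phi \geq \tfrac{3}{4}$ is strictly larger than $\tfrac{3}{4}$ whenever the utilization profile is bounded away from the value $\tfrac{1}{2}$. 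The key coupling observation is that tightness of $\Phi$ at $\tfrac{3}{4}$ forces $y_j^* \approx \tfrac{1}{2}$ for many drivers, which in turn---via uniformity and saturation in \Cref{lemma:structure} together with \Cref{rem:observation} (that $\opton$'s first-stage matching is a maximum-cardinality matching)---constrains the first-stage contribution of $\opton$ to be of a specific size.

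\emph{Step 2 (greedy sub-algorithm).} Decompose $\opton = A + B$, where $A := \sum_{j \notin \bar T_1^*} w_j$ and $B := f^{\wbf}(\bar T_1^*)$ are, respectively, its first-stage and expected second-stage contributions (\Cref{prop:opt-online-characterization}). Instantiating \Cref{prop:cont-greedy} with $T = \bar T_1^*$ gives $\alg_2 \geq A + (1 - \tfrac{1}{e} - \epsilon) B$; the full preservation of the linear term $A$ comes from the algorithm of \cite{SVW-17}, and the $\epsilon$ is absorbed in the limit since the running time is polynomial in $1/\epsilon$.

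\emph{Step 3 (factor-revealing program and optimal $\combprob$).} Combining the two bounds,
\[
\mathbb{E}[\maxalgo] \;\geq\; \combprob \cdot \Phi(\{y_j^*,w_j\})\cdot \opton \;+\; (1-\combprob)\bigl(A + (1-\tfrac{1}{e})B\bigr).
\]
I set up a factor-revealing program that, for a fixed $\combprob$, minimizes this ratio over all feasible parameter vectors $(y_j^*, w_j, A, B)$ consistent with the constraints from \Cref{lemma:structure}, \Cref{prop:opt-online-characterization}, and the transversal matroid structure $\M$. In the unweighted case, symmetry reduces the program to a low-dimensional problem whose minimizer is a two-point profile equalizing the ``$\alg_1$-tight'' and ``$\alg_2$-tight'' regimes; optimizing $\combprob$ yields the equalizer $\combprob = 1/(e-1)$ and the closed-form ratio $1-1/e+1/e^2$. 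In the weighted case, the extra freedom in $w_j$ yields a higher-dimensional program, optimized numerically to $\combprob = 0.7$ and ratio $0.7613$.

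\emph{Main obstacle.} The crux is capturing in the factor-revealing program the coupling between the instance-dependent slack $\Phi$ and the decomposition $(A, B)$ of $\opton$: precisely the observation that instances which saturate $\alg_1$'s $\tfrac{3}{4}$ bound are exactly those where $A$ is large enough for $\alg_2$ to compensate (and conversely, when $B$ dominates, $\alg_2$'s $(1-1/e)B$ is already comfortably above $\tfrac{3}{4}$). Articulating this coupling through a small set of valid convex constraints and verifying that the resulting program's optimum matches $1-1/e+1/e^2$ in closed form for the unweighted case is where most of the technical work goes. The weighted case inherits this difficulty plus the non-trivial dependence on arbitrary supply weights, which is why the tight bound $0.7613$ is only numerical.
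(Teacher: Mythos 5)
Your overall architecture is right --- hedge between the two sub-algorithms, bound the greedy branch by $A + (1-\tfrac{1}{e})B$ via \Cref{prop:cont-greedy} evaluated at the optimum online's own base, and feed everything into a factor-revealing program whose equalizing solution at $\combprob = 1/(e-1)$ gives $1-1/e+1/e^2$; your Step~2 matches the paper exactly. The gap is in Step~1: the coupling you correctly identify as the crux is not actually delivered by the refined primal-dual slack. The bound $\alg_1 \geq \min_j\bigl(1-y_j^*+(y_j^*)^2\bigr)\cdot\optoff$ relates \Cref{alg:weighted-skeleton} to the \emph{offline} optimum and the utilization profile, but it carries no information about how $\opton$ splits its value between the two stages. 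Concretely, take two components in the decomposition of \Cref{lemma:structure}: a vanishingly small one with $y_j^*=\tfrac{1}{2}$ (so the global slack is pinned at $\tfrac{3}{4}$) and a large one with no first-stage riders at all, where $\opton$ collects everything in the second stage. Your program then admits a feasible point of value $\combprob\cdot\tfrac{3}{4}+(1-\combprob)(1-\tfrac{1}{e})\approx 0.70<\tfrac{3}{4}$, even though \Cref{alg:weighted-skeleton} is near-optimal on that instance; so the certificate fails. Repairing this with per-component slacks requires a multiplicative bound of the form $\alg_1^{(l)}\geq\bigl(1-y^{(l)}+(y^{(l)})^2\bigr)\,\opton^{(l)}$ against a per-component split of $\opton$, which you neither state precisely nor prove, and which is delicate because the monotonicity property of \Cref{lemma:structure} permits cross-component edges.

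The paper's coupling is a different and more elementary observation: in the second stage \Cref{alg:weighted-skeleton} computes a maximum matching, which in particular dominates the restriction of the optimum online's second-stage matching to the supply vertices left unmatched in the first stage. Writing $\hat{p}_j$ for the probability that $j$ is matched by \Cref{alg:weighted-skeleton} in stage one (determined by the uniformity property of \Cref{lemma:structure}) and $\hat{q}_j$ for the probability that the optimum online matches $j$ in stage two, this gives $\Ex{\alg_1}\geq\sum_j \hat{p}_j w_j+\sum_j(1-\hat{p}_j)\hat{q}_j w_j$. This inequality is exactly what ties \Cref{alg:weighted-skeleton}'s performance to the same quantities $\hat{q}_j$ (hence to $B$) that appear in the greedy bound, and it is what makes the factor-revealing program close. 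The remaining work --- restricting to $\hat{q}_j\in\{0,1\}$, tightening the monotonicity constraints, collapsing to $L=1$, and in the weighted case reducing to three weight classes $1,\bar{w},\tilde{w}$ --- is the bulk of the proof of \Cref{lem:online final} and is not supplied by your sketch. On the single-component unweighted slice your bound happens to coincide with the paper's at the minimizer (since $1-\tfrac{sk}{s+k}=1-s+s^2$ when $s+k=1$), which is why your final numbers come out right; but as a general certificate the proposed program does not establish them.
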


In principle, given the second stage arrival probabilities $\pi_i$,
we can estimate the supply efficiency (i.e., the expected total weight of the final matching) of \Cref{alg:weighted-skeleton} and
the greedy-style algorithm with sampling and the Monte Carlo method. We can then run the better of the two. With high enough estimation precision, this
algorithm beats \Cref{alg:online}; nevertheless, \Cref{alg:online}
is sufficient in our analysis to obtain the improved competitive ratio against the optimum online policy.

\paragraph{Overview of the proof.} To prove \Cref{thm:opt-online-comp-ratio}, we use a \emph{factor revealing (FR) program}, that is, a non-linear program whose optimal solution is a lower-bound for the competitive ratio of \Cref{alg:online} against the optimum online. We start from a basic factor revealing program, in which every feasible instance of the 
weighted two-stage stochastic matching problem can be mapped to a feasible instance of this program, and the objective value of the mapped instance will be no more than the ratio $\alg/\opton$ over this instance (where $\alg$, by notation abuse, is the expected total weight of the matching of \Cref{alg:online}). By a series of relaxations, through mapping each feasible solution of the current program to a feasible solution of the next program while only lowering the objective value, we essentially relax our primary program to this simple program:

\begin{align}
\label{eq:online final}
\tag{$\mathscr{P}^{\textrm{FR-final}}$}
\begin{array}{lll}
    \min_{
    \msfirst,
    \mssecond,
    \weightp,
    \weightdp
    }
    &\displaystyle
    1-
    \frac
    {\combprob(
    (1-\mssecond)\msc
    -
    (1-\msfirst-\mssecond)\,\weightdp
    )
    +
    \frac{1}{e}(1 - \combprob)
    \mssecond\weightp}
    {\msfirst + 
    \mssecond\weightp}
    & \text{s.t.} \\
     &\displaystyle
     \msc =
     \frac{1-\msfirst}{
     \msfirst
     +
     \frac{\mssecond}{\weightp}
     +
     \frac{1-\msfirst-\mssecond}{\weightdp}
     } 
     & 
     \\~&~&~\\
     &
     \msfirst,\mssecond\in[0, 1],\
     \msfirst + \mssecond \leq 1
     & 
     \\
     &
     \weightdp \leq 1\leq  \weightp,\
     (1-\msfirst-\mssecond)\,\weightdp \geq (1-\msfirst-\mssecond)\,\msc
     &
\end{array}
\end{align}
Roughly speaking, a feasible solution $\left(\msfirst,
    \mssecond,
    \weightp,
    \weightdp\right)$ of program~\ref{eq:online final} corresponds to a scenario in the weighted two-stage stochastic matching problem, where these three events happen:
\begin{enumerate}
    \item The bipartite graph $(\Ra,\Driver,E)$ is such that $L=1$, $\Rider^{(0)}=\Driver^{(0)}=\emptyset$, and $\msfirst =  \tfrac{|\Rider^{(1)}|}
{|\Driver^{(1)}|}$ after applying the structural decomposition of \Cref{lemma:structure}.
\item All supply vertices can be partitioned into three groups: group $A$, which are supply vertices matched by the optimum online in the first stage; group $B$, which are supply vertices matched in the second stage of the optimum online policy w.p.\ $1$; and group $C$, which are supply vertices that remained unmatched through the process w.p.\ $1$. Moreover, we have
$\mssecond = 
\tfrac{|B|}{|\Driver^{(1)}|}$ (note that $\lvert A\rvert=\lvert \Rider^{(1)}\rvert$, and thus $\msfirst = 
\tfrac{|A|}{|\Driver^{(1)}|}$).
\item all supply vertices in group $A$ (resp.\ $B$, 
$C$)
have the same weight 1 (resp.\ $\weightp$, $\weightdp$)
where $\weightdp \leq 1 \leq 
\weightp$ (for unweighted setting, $\weightp = \weightdp = 1$).
\end{enumerate}
If all the three events happen, the worst-case  ratio of $\alg/\opton$ is exactly what program~\ref{eq:online final} captures. In this sense, the ``possible scenario''~\footnote{To be completely accurate, as we will see in the proof of \Cref{lem:online final} in the appendix, not every feasible solution of the program~\ref{eq:online final} corresponds exactly to a feasible 
weighted two-stage stochastic matching instance; nonetheless, program~\ref{eq:online final} is a valid relaxation.} described above is the worst-case scenario for our original factor revealing program. We finish the proof by solving program~\ref{eq:online final} for weighted and unweighted cases, with our specific values of $\combprob$. The above approach is summarized in following key proposition. \revcolor{In what follows, we provide a formal proof of this proposition for the unweighted setting. We defer the proof of the the weighted version to Appendix~\ref{apx:online final}.
\begin{restatable}{proposition}{lemmaoptonline}
\label{lem:online final}
The competitive ratio of \Cref{alg:online}
with arbitrary 
$\combprob$ 
(resp.\ $\combprob = 0.7$)
for unweighted (resp.\ weighted) setting
against the optimum online policy
is at least
the value of program~\ref{eq:online final}.
\end{restatable}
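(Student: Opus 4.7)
The plan is to derive the claim via a chain of factor-revealing relaxations, starting from a program whose feasible solutions are in bijection with (instance, optimum-online-policy) pairs and ending at program~\ref{eq:online final}. I would fix an instance and its optimum-online policy, apply the structural decomposition of \Cref{lemma:structure} to obtain component pairs $(\Rider^{(\mspair)}, \Driver^{(\mspair)})$ with uniformity constants $\mspairci$, and then use the characterization in \Cref{prop:opt-online-characterization} to partition each $\Driver^{(\mspair)}$ into three groups: $A^{(\mspair)}$ matched by OPTon in the first stage, $B^{(\mspair)}$ matched by OPTon in the second stage with positive probability, and $C^{(\mspair)}$ never matched. The local parameters of component $\mspair$ are $\msfirsti = |A^{(\mspair)}|/|\Driver^{(\mspair)}|$ and $\mssecondi = |B^{(\mspair)}|/|\Driver^{(\mspair)}|$, together with normalized average weights $\weightp^{(\mspair)}$ and $\weightdp^{(\mspair)}$ of groups $B^{(\mspair)}$ and $C^{(\mspair)}$ measured relative to the average weight of $A^{(\mspair)}$.

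Next I would lower bound the algorithm's expected supply efficiency component-by-component by splitting on the coin flip of \Cref{alg:online}. For the $\combprob$ branch, I reuse the primal-dual certificate from the proof of \Cref{thm:opt-offline-comp-ratio}: combining the uniformity identity $\mspairci = w_j(1 - \sum_{i \in N(j)} x^*_{ij})$ with the saturation and monotonicity properties yields a closed-form lower bound on the expected matched weight inside component $\mspair$ across both stages. For the $(1-\combprob)$ branch, I invoke \Cref{prop:cont-greedy} applied to the per-component sub-instance: the greedy algorithm matches all of $A^{(\mspair)}$ in the first stage and recovers at least a $(1 - \sfrac{1}{e})$ fraction of $\weightp^{(\mspair)}|B^{(\mspair)}|$ in the second stage, while OPTon's per-component value is exactly $|A^{(\mspair)}| + \weightp^{(\mspair)}|B^{(\mspair)}|$. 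Since the global ratio $\text{ALG}/\opton$ is then a weighted average (over $\mspair$) of per-component ratios depending only on $(\msfirsti, \mssecondi, \mspairci, \weightp^{(\mspair)}, \weightdp^{(\mspair)})$, it suffices to minimize the per-component function, i.e., to reduce to a single component ($L = 1$, $\Rider^{(0)} = \Driver^{(0)} = \emptyset$).

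To finish I would uniformize the weights within each group $A, B, C$: the per-component bound is affine in the individual weights once the convex-program marginals are fixed via the uniformity constants (for the $\combprob$ branch) and is monotone in the weights (for the greedy branch), so replacing weights by group averages can only weaken the bound, making the uniform case the worst case. This reduces the instance to four scalars $(\msfirst, \mssecond, \weightp, \weightdp)$. The defining identity of $\mspairc$ in program~\ref{eq:online final} emerges by writing $\mspairc = w_j(1 - y_j^*)$ at a single $j$ in each of the three groups and combining with the saturation identity $\sum_{j \in \Driver^{(1)}} y_j^* = |\Rider^{(1)}|$; the constraint $(1-\msfirst-\mssecond)\weightdp \geq (1-\msfirst-\mssecond)\mspairc$ encodes the fact that no vertex of $C$ can have weight exceeding the uniformity constant $\mspairc$, since otherwise OPTon would have matched it in the second stage. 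After these substitutions, the rearranged lower bound on $\text{ALG}/\opton$ is exactly one minus the objective of program~\ref{eq:online final}, proving the claim.

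The main obstacle I anticipate is the intra-group uniformization step, especially in the weighted case: the primal-dual bound of \Cref{thm:opt-offline-comp-ratio} is tight with respect to individual supply weights, so it takes careful bookkeeping to verify that replacing weights by group averages is genuinely a worst-case reduction, keeping in mind that the convex-program marginals themselves rescale with the weights. One must also check that all constraints of program~\ref{eq:online final}, in particular $(1-\msfirst-\mssecond)\weightdp \geq (1-\msfirst-\mssecond)\mspairc$, are preserved after uniformization. The unweighted case is considerably simpler since $\weightp = \weightdp = 1$ is forced throughout, eliminating this obstacle and reducing everything to a direct computation of the per-component bound on a single connected component with three uniform-weight groups of supply vertices.
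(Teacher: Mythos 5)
Your architecture is the same as the paper's --- a factor-revealing program assembled from the structural decomposition (\Cref{lemma:structure}), the characterization of the optimum online (\Cref{prop:opt-online-characterization}), and the greedy guarantee (\Cref{prop:cont-greedy}), followed by a reduction to a single component and then to uniform weights within the groups $A,B,C$ --- but three load-bearing steps are missing or asserted incorrectly. First, you define $B^{(\mspair)}$ as the supply vertices matched by the optimum online in the second stage ``with positive probability'' and then write its per-component value as $|A^{(\mspair)}|+\weightp^{(\mspair)}|B^{(\mspair)}|$; that is only valid when every vertex of $B^{(\mspair)}$ is matched with probability exactly one. The paper earns this all-or-nothing structure by first keeping the second-stage matching probabilities $\matchprobi\in[0,1]$ as variables and observing that the objective is fractional-linear in each $\matchprobi$, so the partial derivative has constant sign and the minimum is attained at $\matchprobi\in\{0,1\}$; without this step the groups $B$ and $C$ are not even well defined. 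Second, the ``weighted average of per-component ratios'' step does not by itself reduce to program~\ref{eq:online final}: the constant $\mspairci$ is determined by $\msrideri=|\Rideri|$, which is linked to $\msfirsti$ only through suffix-sum constraints and the equality of totals, so an individual component may have $\msrideri<\msfirsti$ and hence a larger $\mspairci$ than the value $\msc$ forced by the constraint of program~\ref{eq:online final}; the minimum over such components can fall strictly below the value of program~\ref{eq:online final}, and your chain would then only prove a weaker bound. The paper first makes the monotonicity constraints tight (setting $\msfirsti=\msrideri$ and shifting the second-stage mass $\mssecondi$ toward components with smaller $\mspairci$, which only lowers the objective) before applying the mediant inequality. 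Relatedly, the lower bound for the \Cref{alg:weighted-skeleton} branch is not obtained by reusing the primal-dual certificate of \Cref{thm:opt-offline-comp-ratio} (which certifies a ratio against the optimum offline of the realized graph); it comes from projecting the optimum online's second-stage matching onto the supply vertices left unmatched by \Cref{alg:weighted-skeleton}, which is what produces a bound expressed in the variables $\msfirsti,\mssecondi,\mspairci$ of the factor-revealing program.

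For the weighted case you correctly identify intra-group uniformization as the obstacle, but the resolution you propose does not close it. The objective is a ratio in which both numerator and denominator move when a weight changes, and $\msc$ rescales through $\sum_\driver 1/\weight_\driver$, so ``affine in the weights, hence group-averaging is worst-case'' is not available. The paper needs swap arguments to order the group weights as $\weightdp\le 1\le\weightp$, followed by averaging steps justified through \Cref{lem:program restriction}: a modification is admissible if it pushes the objective monotonically toward a value that is strictly below (or above) the eventual optimum, which in turn requires already knowing that the final value is $0.7613>1-\combprob$. Finally, a small but telling slip: you justify $(1-\msfirst-\mssecond)\weightdp\ge(1-\msfirst-\mssecond)\msc$ by saying no vertex of $C$ can have weight exceeding $\msc$, which is the reverse inequality; the constraint states that such weights are at least $\msc$, and it follows from $\msc=\weight_\driver(1-y^*_\driver)\le\weight_\driver$ because $y^*_\driver\ge0$, not from any optimality property of the online benchmark.
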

Before the proof of \Cref{lem:online final} for the unweighted setting, we first introduce the following technical lemma. The proof is immediate by applying the uniformity and saturation properties 
in \Cref{lemma:structure}.
\begin{lemma}
\label{lem:msc closed form}
In the structural decomposition 
(\Cref{lemma:structure}),
$\mspairci = 
\frac{|\Driveri| - |\Rideri|}{\sum_{j\in\Driveri}\frac{1}{\weight_j}}$.
\end{lemma}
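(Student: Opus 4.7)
The plan is to combine the uniformity and saturation properties of \Cref{lemma:structure} via a double-counting argument on the total first-stage mass flowing into $\Driveri$, and then solve for $\mspairci$. First, I would rewrite the uniformity condition $\weight_j\bigl(1 - \sum_{i\in N(j)} x^*_{ij}\bigr) = \mspairci$ at each $j \in \Driveri$ as $\sum_{i\in N(j)} x^*_{ij} = 1 - \frac{\mspairci}{\weight_j}$, and then sum this identity over all $j \in \Driveri$ to obtain
\begin{equation*}
\sum_{j\in\Driveri}\sum_{i\in N(j)} x^*_{ij}
\;=\;
|\Driveri| \;-\; \mspairci \sum_{j\in\Driveri}\frac{1}{\weight_j}.
\end{equation*}
This already produces the denominator that appears in the target formula; the remaining task is to show that the left-hand side equals $|\Rideri|$.

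Next, I would interchange the order of summation on the left to rewrite it as $\sum_{i \in \Ra}\sum_{j \in N(i)\cap \Driveri} x^*_{ij}$, and then argue that only the vertices $i \in \Rideri$ contribute anything. Here I would invoke the support graph $G'$ from \Cref{lemma:structure} together with monotonicity: for $i \notin \Rideri$, any edge with $x^*_{ij} > 0$ to some $j \in \Driveri$ would either (a) put $i$ in $\Rider^{(0)}$ when $j \in \Driver^{(0)}$, which is inconsistent with $j \in \Driveri$ for $l \ge 1$, or (b) place $i$ in the same connected component as $j$ in $G'[\Ra \setminus \Rider^{(0)}, \Driver \setminus \Driver^{(0)}]$, forcing $i \in \Rideri$. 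Thus for every $j \in \Driveri$, the support of $x^*_{\cdot j}$ lies inside $\Rideri$, and the double sum reduces to $\sum_{i \in \Rideri}\sum_{j \in \Driveri} x^*_{ij}$.

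Finally, I would apply the saturation property, which says every $i \in \Rideri$ satisfies $\sum_{j\in N(i)} x^*_{ij} = 1$; combined with the preceding step (the support of $x^*_{i \cdot}$ for $i \in \Rideri$ lies in $\Driveri$, which follows again by monotonicity ruling out edges from $\Rideri$ to $\Driver^{(l')}$ with $c^{(l')} > c^{(l)}$, and by the component definition ruling out edges to lower-indexed parts having positive weight leaving $\Rideri$ in $G'$), this sum becomes exactly $|\Rideri|$. Substituting back gives $|\Rideri| = |\Driveri| - \mspairci \sum_{j\in\Driveri} \tfrac{1}{\weight_j}$, and solving for $\mspairci$ yields the stated expression.

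\textbf{Main obstacle.} The bookkeeping step of showing $\sum_{j \in \Driveri}\sum_{i \in N(j)} x^*_{ij} = |\Rideri|$ is the nontrivial part: it requires carefully pinning down that the fractional matching support between $\Rideri$ and the rest of $\Driver$ stays inside $\Driveri$ (and symmetrically, that all positive mass entering $\Driveri$ comes from $\Rideri$). This uses the combination of monotonicity, saturation, and the connected-component definition of the decomposition in a way that is intuitive but must be stated carefully to cover both the $l = 0$ boundary interaction and the monotonicity direction between levels.
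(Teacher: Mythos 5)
Your proposal is correct and matches the paper's intended argument: the paper dismisses this lemma as ``immediate by applying the uniformity and saturation properties,'' and your write-up is exactly that argument carried out in full --- sum the uniformity identity over $j\in\Driveri$, then use saturation together with monotonicity and the connected-component structure to identify the total incident mass with $\lvert\Rideri\rvert$. The only nitpick is that your case (a) should be stated as: monotonicity with $c^{(0)}=0<c^{(l)}$ directly forbids any edge from $\Rider^{(0)}$ to $\Driveri$ for $l\ge 1$ --- but you flag precisely this boundary issue yourself, so there is no gap.
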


\begin{proof}{\emph{Proof of \Cref{lem:online final}
for unweighted setting.}}
The proof is done in three major steps:

\noindent{\textbf{Step 1- writing a factor revealing program:}} Consider the following optimization program, which is parameterized by $\totaldriver,\totalmspair\in\N$, and has variables $ \matchprobs,  \mspairinstance$:
\begin{align}
    \label{eq:online initial unweighted}
    \tag{$\mathscr{P}^{\textrm{FR-first}}$}
\begin{array}{llll}
    \min\limits_{
    \substack{
    \matchprobs,\\
    \mspairinstance}
    }
    &\displaystyle
    \frac{
    \combprob
    \left(
    \MS
    \right)
    +
    (1-\combprob)
    \left(
    \GD
    \right)
    }{
    \OPT
    }
    & \text{s.t.} & \\
    &
    \msfirsti \leq \msdriveri,
    \ 
    \msrideri \leq \msdriveri
    & 
    \mspair\in [\totalmspair] &\textit{\footnotesize{(Feasibility-1)}} \\
    &
    \displaystyle\sum_{\mspair\in[\totalmspair]}\msdriveri = \totaldriver,
    \ 
     \displaystyle\sum_{\mspair\in[\totalmspair]}
    \msfirsti = 
     \displaystyle\sum_{\mspair\in[\totalmspair]}
    \msrideri
    & & \textit{\footnotesize{(Feasibility-2)}}
    \\
    &
    \matchprobi = 0 
    &
    \driver \in \Driverfirst & \textit{\footnotesize{(Feasibility-3)}}
    \\
    ~&~&~&~
    \\
    &
    \mspairci < \mspairc^{(\mspair + 1)}
    &
    \mspair\in [\totalmspair - 1] &  \textit{\footnotesize{(Monotonicity-1)}}
    \\
    &
     \displaystyle\sum_{\mspair' = \mspair}^{\totalmspair}
    \msfirst^{(\mspair')} \leq 
     \displaystyle\sum_{\mspair' = \mspair}^{\totalmspair}
    \msrider^{(\mspair')}
    & 
    \mspair \in [\totalmspair] &  \textit{\footnotesize{(Monotonicity-2)}}
    \\
    &\displaystyle
    \mspairci  
    =
    \frac{\msdriveri - \msrideri}{\msdriveri}
    &
    \mspair \in [\totalmspair] & 
    \\
    ~&~&~&~
    \\
    &
    \msdriveri, \msfirsti, \msrideri\in \N 
    &
    \mspair\in[\totalmspair] & 
    \\
    &
    \matchprobi \in [0, 1]
    &
    \driver\in[n] & 
\end{array}
\end{align}
where the index sets are defined as $$\Driveri  \triangleq\left[\sum_{\mspair' = 1}^{\mspair - 1}\msdriver^{(\mspair')}+1: \sum_{\mspair' = 1}^{\mspair }\msdriver^{(\mspair')}\right]~~~\textrm{and}~~~ \Driverfirst\triangleq \displaystyle\bigcup\limits_{\mspair \in[\totalmspair]}\left[
    \sum_{\mspair' = 1}^{\mspair - 1}\msdriver^{(\mspair')}+1
    :
    \sum_{\mspair' = 1}^{\mspair-1}\msdriver^{(\mspair')}
    +\msfirsti
    \right],$$
and auxiliary variables $\MS,\GD$, and $\OPT$ are defined as:

\begin{equation}
\label{eq:auxilaryvars}
\MS \triangleq\msdriver-\sum_{\mspair\in[\totalmspair]}\sum_{\driver\in\Driveri}\mspairci(1-\matchprobi),~~~~ \GD  \triangleq \left(1 - \frac{1}{e}\right)\sum_{\driver\in[\totaldriver]}\matchprobi+ \sum_{\mspair\in[\totalmspair]}\msfirsti,~~~~\OPT  \triangleq\sum_{\driver\in[\totaldriver]}\matchprobi+\sum_{\mspair\in[\totalmspair]}\msfirsti
\end{equation}

We show that every unweighted 
two-stage matching instance
can be mapped to a feasible 
solution in program~\ref{eq:online initial unweighted} (after settings parameters $\totaldriver$ and $\totalmspair$ appropriately),
and the competitive ratio
of \Cref{alg:online} against 
the optimum online
in this instance is at least the objective value of 
this feasible solution.

\emph{The desired mapping.} Consider any unweighted 
two-stage stochastic matching instance
$\left\{(\RaHat\cup \RpHat, \DriverHat,
\hat E), \{\hat\pi_u\}_{u\in \RpHat}\right\}$.~\footnote{We use notation \^{} to denote the two-stage
matching instance} Let $\DriverfirstHat\subseteq \DriverHat$ be the subset of
supply vertices matched by the optimum online policy 
in the first stage, and $\{(\RideriHat, \DriveriHat)\}_{\mspair = 0}^{ 
\totalmspairHat}$ be the structural decomposition of this instance as in \Cref{lemma:structure} (the pairs of the decomposition are indexed so that $0=\hat\mspairc^{(0)}<\hat\mspairc^{(1)}<\ldots<\hat\mspairc^{(\totalmspairHat)}$).
Let $\totaldriver \leftarrow\lvert\DriverHat \backslash \DriverHat^{(0)}\rvert$ and $\totalmspair\leftarrow\totalmspairHat$. We now construct the following solution 
for program~\ref{eq:online initial unweighted}:
\begin{gather*}
\text{for all }
\mspair\in[\totalmspair]:\quad
\msdriveri \leftarrow
\lvert\DriveriHat\rvert,\quad
\msfirsti \leftarrow
\lvert\DriveriHat \cap \DriverfirstHat\rvert,\quad
\msrideri \leftarrow
\lvert\RideriHat\rvert~.
\end{gather*}
To define the assignment of $\matchprobs$, consider a bijection $\sigma$ from all supply vertices in 
$\DriverHat \backslash \DriverHat^{(0)}$ to $[\totaldriver]$ such that for any two supply vertices $j \in \DriveriHat$ and  $j' \in \DriverHat^{(\mspair')}$,
if either (a) $\mspair < \mspair'$,
or (b) $\mspair = \mspair'$, 
$j \in \DriveriHat \cap \DriverfirstHat$ and $j' \notin \DriveriHat \cap \DriverfirstHat$,
then $\sigma(j) < \sigma(j')$.
Trivially, such a bijection exists. Let $\matchprobHat_j$ be the probability that supply vertex $j$ is matched
in the second stage by the optimum online policy. We finish our construction of a solution by assigning:
\begin{gather*}
\text{for all $j \in \DriverHat \setminus \DriverHat^{(0)}$}: \quad
\matchprob_{\sigma(j)} \leftarrow \matchprobHat_j
\end{gather*}

\emph{Objective value of the constructed solution.}
We first formulate the competitive ratio of \Cref{alg:online} against the optimum online
policy on the original instance. The expected size of the matching of the optimum online policy is 
the number of supply vertices matched in the first stage, plus the expected total number of the supply vertices matched
in the second stage, that is, 
\begin{equation}
\label{eq:opt-bound}
    \lvert\DriverfirstHat\rvert+\sum_{j\in\DriverHat\setminus\DriverfirstHat}\matchprobHat_j=\sum_{\mspair\in[0:\totalmspairHat]}\lvert\DriveriHat\cap\DriverfirstHat\rvert 
    +
    \sum_{j \in \DriverHat \setminus \DriverHat^{(0)}}\matchprobHat_j
    =
    \lvert\DriverHat^{(0)}\rvert
    +\OPT~,
\end{equation}
where $\OPT$ is defined in \eqref{eq:auxilaryvars}. In the above equation, we used the fact that $\DriverHat^{(0)}\subseteq \DriverfirstHat$, simply because vertices in $\DriverHat^{(0)}$ should be matched in every first stage unweighted maximum matching, and that $\matchprobHat_j=0$ for $j\in\DriverfirstHat$. 

By applying \Cref{prop:cont-greedy}, the expected size of the matching suggested by the greedy-style algorithm (which is used in \Cref{alg:online}) is at least 
\begin{equation}
\label{eq:gd-bound}
    \lvert\DriverfirstHat\rvert+\left(1-\frac{1}{e}
    \right)\sum_{j\in\DriverHat\setminus\DriverfirstHat}\matchprobHat_j=\sum_{\mspair\in[0:\totalmspairHat]}\lvert\DriveriHat\cap\DriverfirstHat\rvert 
    +
    \left(1-\frac{1}{e}
    \right)\sum_{j \in \DriverHat}\matchprobHat_j
    =
    \lvert\DriverHat^{(0)}\rvert
    +\GD~,
\end{equation}
where $\GD$ is defined in \eqref{eq:auxilaryvars}, and again we use the facts that $\DriverHat^{(0)}\subseteq \DriverfirstHat$ and $\matchprobHat_j=0$ for $j\in\DriverfirstHat$. 

Let $\msprobHat_j$ be the probability 
that supply vertex $j \in \DriveriHat$ is matched by 
\Cref{alg:weighted-skeleton} in the first stage.
Then, by the structural decomposition in \Cref{lemma:structure},
$\msprobHat_j = 1$ if $\mspair = 0$
and $(1-\msprobHat_j) = 
\hat\mspairc^{(\mspair)}
$ otherwise.
To find a lower-bound for the expected total
number of supply vertices matched in the second stage
of \Cref{alg:weighted-skeleton}, note that
this algorithm indeed finds a maximum 
matching between realized second stage demand vertices and unmatched supply vertices. Such a matching is no smaller than the projection of the matching picked by the optimum online policy onto the supply vertices not matched by \Cref{alg:weighted-skeleton} during the first stage. The expected size of such a projected matching is $\sum_{j \in \DriverHat}
(1 - \msprobHat_j) \matchprobHat_j$, due to the linearity of the expectation. Therefore, the expected total size of the final matching picked by
\Cref{alg:weighted-skeleton} at the end of the second stage will be at least
\begin{equation}
\label{eq:ms-bound}
    \sum_{j\in \DriverHat}\msprobHat_j
  +
  \sum_{j \in \DriverHat}
(1 - \msprobHat_j) 
\matchprobHat_j
= 
  \lvert\DriverHat\rvert
  -
   \sum_{\mspair\in[\totalmspairHat]}
   \sum_{j\in\DriveriHat}
   \hat\mspairc^{(\mspair)}(1-\matchprobHat_j)
   =
   \lvert\DriverHat^{(0)}\rvert
    +\MS~,
\end{equation}
where $\MS$ is defined in \eqref{eq:auxilaryvars}, and we use the fact that every supply vertex in $\DriverHat^{(0)}$ is matched in the first stage matching of \Cref{alg:weighted-skeleton} (and therefore $\msprobHat=1$ for such a supply vertex).

Putting the bounds in \eqref{eq:opt-bound}, \eqref{eq:gd-bound}, and \eqref{eq:ms-bound} together, the competitive ratio 
of \Cref{alg:online}
against the optimum online policy on the original
two-stage matching instance
is at least 
\begin{align*}
    \frac{
    \combprob\left(\MS+\lvert\DriverHat^{(0)}\rvert\right)) + (1-\combprob)\left((\GD+\lvert\DriverHat^{(0)}\rvert\right)
    }{
    \OPTHat} 
    &=
    \frac{
    \lvert\DriverHat^{(0)}\rvert
    +
    \combprob
    \MS
    +
    (1-\combprob)
    \GD
    }{
    |\DriverHat^{(0)}|
    +
    \OPT
    }
    \geq
    \frac{\combprob
    \MS
    +
    (1-\combprob)
    \GD
    }{\OPT}~,
\end{align*}
where the last ratio is the objective value of
our constructed solution in program~\ref{eq:online initial unweighted}.

\emph{Feasibility of the constructed solution:}
Constraints (Feasibility-1), (Feasibility-2), and (Feasibility-3) of program~\ref{eq:online initial unweighted} hold by construction.
Because of \Cref{lem:msc closed form}, 
for any $\mspair \in [\totalmspair]$ and 
any supply vertex $j\in \DriveriHat$,
$\mspairci$
defined in program~\ref{eq:online initial unweighted} equals 
to $\hat\mspairc^{(\mspair)}$ in the original two-stage matching instance.
Therefore, constraint (Monotonicity-1) holds by construction, as $\hat\mspairc^{(\mspair)}<\hat\mspairc^{(\mspair+1)}$.
Constraint (Monotonicity-2) also holds because of the following argument:
$\sum_{\mspair'=\mspair}^{\totalmspair}
\msfirst^{(\mspair')}$ is the 
number of supply vertices in 
$\bigcup_{\mspair'=\mspair}^{\totalmspair}
\DriverHat^{(\mspair')}$
whom are matched in the optimum online policy 
during the first-stage. Moreover,
$\sum_{\mspair'=\mspair}^{\totalmspair}
\msrider^{(\mspair')}$ is the number
of the demand vertices in $\bigcup_{\mspair'=\mspair}^{\totalmspair}
\RiderHat^{(\mspair')}$.
Since ``Monotonicity'' property of the structural decomposition in \Cref{lemma:structure} for the original two-stage matching instance
guarantees that there is no edge from 
the demand vertices in 
$\RiderHat \setminus \bigcup_{\mspair'=\mspair}^{\totalmspair}
\RiderHat^{(\mspair)}
=
\bigcup_{\mspair'=0}^{\mspair-1}
\RiderHat^{(\mspair')}$
to supply vertices in 
$\bigcup_{\mspair'=\mspair}^{\totalmspair}
\DriverHat^{(\mspair')}$,
the number of supply vertices in $
\bigcup_{\mspair'=\mspair}^{\totalmspair}
\DriverHat^{(\mspair')}$
at the first stage
is at most the number of demand vertices 
in $
\bigcup_{\mspair'=\mspair}^{\totalmspair}
\RiderHat^{(\mspair')}$.
Therefore, for every $\mspair\in[\totalmspair]$, we have 
$\sum_{\mspair'=\mspair}^{\totalmspair}
\msfirst^{(\mspair')} \leq 
\sum_{\mspair'=\mspair}^{\totalmspair}
\msrider^{(\mspair')}$.
 \vspace{2mm}
 
\noindent\textbf{Step 2- restricting probabilities $\mathbf{\matchprobs}$:}
In this step, we first argue that program~\ref{eq:online initial unweighted} has an optimal solution where
$\matchprobi \in \{0, 1\}$ for all $\driver \in [n]\backslash\Driverfirst$. To see this, note that for every $\driver\in [n]\backslash\Driverfirst$, the partial derivative of the objective function with respect to $\matchprob_\driver$ has the same sign for all $\matchprobi \in [0, 1]$. Therefore, moving $\matchprob_\driver$ to one of the extreme points $\{0,1\}$ weakly decreases the objective function. Hence, there exists an optimal solution of
program~\ref{eq:online initial unweighted} in which
$\matchprobi \in \{0, 1\}$ for all $\driver\in [n]\backslash\Driverfirst$. 

Having this restriction on $\matchprobs$, we simplify
program~\ref{eq:online initial unweighted} by replacing  $\matchprobs$ 
with new variables $\msseconds$, where $\mssecondi \triangleq
\lvert\{\driver\in\Driveri:\matchprobi=1\}\rvert$, 
and dropping the constraint (Feasibility-3):
\begin{align}
    \label{eq:online all matched unweighted}
    \tag{$\mathscr{P}^{\textrm{FR-second}}$}
\begin{array}{llll}
    \min\limits_{
    \substack{
    \mspairinstancenew}
    }
    &\displaystyle
    \frac{
    \combprob
    \left(
    \MS
    \right)
    +
    (1-\combprob)
    \left(
    \GD
    \right)
    }{
    \OPT
    }
    & \text{s.t.} &\\
    &
    \msfirsti + \mssecondi \leq \msdriveri,
    \ 
    \msrideri \leq \msdriveri
    & 
    \mspair\in [\totalmspair]
    & \footnotesize{\textit{(Feasibility-1)}}
    \\
    &
    \displaystyle\sum_{\mspair\in[\totalmspair]}\msdriveri = \totaldriver,
    \ 
     \displaystyle\sum_{\mspair\in[\totalmspair]}
    \msfirsti = 
     \displaystyle\sum_{\mspair\in[\totalmspair]}
    \msrideri
    & & \footnotesize{\textit{(Feasibility-2)}}
    \\
    ~&~&~&~
    \\
    &
    \mspairci < \mspairc^{(\mspair + 1)}
    &
    \mspair\in [\totalmspair - 1] & \footnotesize{\textit{(Monotonicity-1)}}
    \\
    &
    \displaystyle\sum_{\mspair' = \mspair}^{\totalmspair}
    \msfirst^{(\mspair')} \leq 
     \displaystyle\sum_{\mspair' = \mspair}^{\totalmspair}
    \msrider^{(\mspair')}
    & 
    \mspair \in [\totalmspair] & \footnotesize{\textit{(Monotonicity-2)}}
    \\
    &\displaystyle
    \mspairci  
    \triangleq
    \frac{\msdriveri - \msrideri}
    {\msdriveri}
    &
    \mspair \in [\totalmspair] & 
    \\
    ~&~&~&~
    \\
    &
    \msdriveri, \msfirsti, \msrideri,
    \mssecondi\in \N 
    &
    \mspair\in[\totalmspair] & 
\end{array}
\end{align}
where auxiliary variables $\MS,\GD$, and $\OPT$ are defined as:
\begin{equation}
\label{eq:auxilaryvars-second}
 \MS \triangleq\totaldriver-\sum_{\mspair\in[\totalmspair]}\mspairci(\msdriveri-\mssecondi),~~~~\GD \triangleq\sum_{\mspair\in[\totalmspair]}\msfirsti+(1-\frac{1}{e})\sum_{\mspair\in[\totalmspair]}\mssecondi,~~~~\OPT  \triangleq\sum_{\mspair\in[\totalmspair]}\msfirsti+\sum_{\mspair\in[\totalmspair]}\mssecondi~.
\end{equation}
\vspace{2mm}

\noindent\textbf{Step 3- reduction to the case $\totalmspair=1$:}
In this step, we first argue that it is sufficient
to consider only solutions of program~\ref{eq:online all matched unweighted}, where the
constraints (Monotonicity-2) are tight for all $\mspair \in [\totalmspair]$.
Consider any feasible solution of program~\ref{eq:online all matched unweighted}.
We modify this solution as follows, so that (Monotonicity-2) will be tight and the objective value weakly decreases: suppose $l^*\in[\totalmspair]$ is the largest index such that $\sum_{l=1}^{l^*}(\msdriveri-\msfirsti)\leq \sum_{l=1}^{l^*}\mssecondi$. Now set $\msfirsti_{\textrm{new}}\leftarrow\msrideri$ for $l\in[\totalmspair]$, $\mssecondi_{\textrm{new}}\leftarrow \msdriveri-\msfirsti$ for $l\in[1:l^*]$, $\mssecond_{\textrm{new}}^{l^*+1}\leftarrow \sum_{l=1}^{l^*+1}\mssecondi- \sum_{l=1}^{l^*+1}(\msdriveri-\msfirsti)$, $\mssecondi_{\textrm{new}}\leftarrow 0$ for $l\in[l^*,\totalmspair]$, and keep all other variables unchanged. The feasibility of the modified solution is by construction. Moreover, because of (Feasibility-2) and the fact that $\sum_{l\in[\totalmspair]}\mssecondi_{\textrm{new}}=\sum_{l\in[\totalmspair]}\mssecondi$, quantities $\GD$ and $\OPT$ remain unchanged. As the total mass in $\sum_{l\in[\totalmspair]}\mssecondi$ moves to lower $l$'s in $\{\mssecondi_{\textrm{new}}\}$, $\MS$ weakly decreases due to (Monotonicity-1), and so the objective value.

If (Monotonicity-2) is tight, we can essentially remove the variables $\msrideri$ from program~\ref{eq:online all matched unweighted}(as $\msfirsti=\msrideri$). Now, we can restrict our attention to $L=1$ to have a relaxation. To see this, note that:
\begin{equation}
\label{eq:lower-bound-one}
 \frac{\combprob\MS+(1-\combprob)\GD}{\OPT}= 
 \frac{\sum_{l\in\totalmspair}{\left(\combprob\MS^{(l)}+
    (1-\combprob)
    \GD^{(l)}\right)}
    }{\sum_{l=1}^{\totalmspair}\OPT^{(l)}}\geq \underset{l\in[\totalmspair]}{\min}~ \frac{\combprob\MS^{(l)}+(1-\combprob)\GD^{(l)}}{\OPT^{(l)}}~,
\end{equation}
where $\MS^{(l)}\triangleq\msdriveri-\mspairci(\msdriveri-\mssecondi)$, $\GD^{(l)}\triangleq \msfirsti+(1-\frac{1}{e})\mssecondi$, and $\OPT^{(l)}\triangleq\msfirsti+\mssecondi$. Now, fix a feasible solution $\{\msfirsti,\mssecondi\}$. Suppose, w.l.o.g., the minimum above is attained at $l=1$. Consider the special case of program~\ref{eq:online all matched unweighted} for $\totalmspair=1$ and $\totaldriver=\totaldriver^{(1)}$ (after dropping variable $m$) as our new program:
\begin{align}
    \label{eq:online hypermatchable unweighted}
    \tag{$\mathscr{P}^{\textrm{FR-third}}$}
\begin{array}{lll}
    \min\limits_{
    \substack{
    \msfirst\in\N,
    \mssecond\in\N}
    }
    &\displaystyle
    \frac{
    \combprob
    \left(
    \MS
    \right)
    +
    (1-\combprob)
    \left(
    \GD
    \right)
    }{
    \OPT
    }
    & \text{s.t.} \\
    &
    \msfirst + \mssecond \leq \msdriver
    & 
    \\
    &\displaystyle
    \mspairc  
    =
    \frac{\msdriver - \msfirst}{\msdriver}
    &
\end{array}~,
\end{align}
where $\MS \triangleq\totaldriver-\mspairc(\msdriver-\mssecond)$, $ \GD\triangleq \msfirst+\left(1 -\frac{1}{e}\right)\mssecond$, and $ \OPT \triangleq\msfirst+\mssecond$. After setting $\msfirst\leftarrow\msfirst^{(1)}$ and $\mssecond\leftarrow\mssecond^{(1)}$ (which is clearly feasible in program~\ref{eq:online hypermatchable unweighted}), the objective value of program~\ref{eq:online hypermatchable unweighted} under the assignment $\{\msfirst,\mssecond\}$ is no larger than that of program~\ref{eq:online all matched unweighted} under the assignment $\{\msfirsti,\mssecondi\}$ due to \cref{eq:lower-bound-one}. We finish the proof by relaxing variables $\msfirst, \mssecond$ to be real numbers, and then normalizing them by $\msdriver$ to be in the range $[0,1]$. This proves that program~\ref{eq:online final} is a relaxation to the original program, as desired.\hfill \Halmos
\end{proof}

\begin{proof}{\emph{Proof of \Cref{thm:opt-online-comp-ratio}}.}
We evaluate the 
program~\ref{eq:online final} in both 
unweighted and weighted settings. In the former, we further restrict to $\weightp = \weightdp = 1$
in program~\ref{eq:online final}. Therefore, it becomes
\begin{align*}
    \min_{
    \msfirst,\mssecond\in[0, 1]:
    \msfirst +\mssecond \leq 1
    }
    1-
    \frac{
    \combprob\,\msfirst\,\mssecond
    +
    \frac{1}{e}(1-\combprob)\mssecond
    }{\msfirst+\mssecond}~,
\end{align*}
with the optimal value $\onlineUnweightedRatioText$
at $\msfirst =
\tfrac{\combprob-\frac{1}{e}(1-\combprob)}
{2\combprob},
\mssecond = 
\tfrac{\combprob+\frac{1}{e}(1-\combprob)}{2\combprob}$
for $\combprob=\tfrac{1}{(e-1)}$. In the weighted setting, we numerically solve
the program for
 $\msfirst + \mssecond = 1$
and $\msfirst + \mssecond < 1$.
The optimum value is  $\approx\onlineWeightedRatio$.\hfill\Halmos
\end{proof}

}

 \section{Two-stage Stochastic Joint Matching and Pricing}
 \label{sec:matching and pricing}
In this section, we study the joint matching and pricing problem to maximize market efficiency. Here, potential demand vertices (i.e., demand vertices in $\Rp$) have private valuations for the service (drawn independently from common knowledge distributions). A naive economic intuition might suggest that prices should be set to zero to maximize market efficiency. Interestingly, this intuition breaks here. Note that the platform only gets to see whether each pending demand vertex has accepted her price or not (and not the actual valuations). Therefore, \emph{prices can potentially discover new information about the demand vertices' valuations} to extract more welfare, as the example below also suggests.

\begin{example} Consider an instance with one supply vertex (whose weight is $0$), no first stage demand, and two second stage demand vertices $1$ and $2$. Moreover, let $v_1\sim\textrm{uniform}[0,1]$ and $v_2\sim\textrm{uniform}[0,2]$. If prices are zero, then demand vertex $2$ should always be matched, which gives an expected welfare of $1$. Now suppose $p_1=0$ and $p_2=\tfrac{1}{2}$. Consider a matching policy as follows: if demand vertex $2$ accepts her price, then she gets matched. Otherwise, demand vertex $1$ gets matched. The expected welfare of such a matching is $\tfrac{3}{4}\times\tfrac{5}{4}+\tfrac{1}{4}\times \tfrac{1}{2}=\tfrac{17}{16}>1$.
\end{example}

Given the above intuition, we propose an optimal $\tfrac{1}{2}$-competitive algorithm
for this problem against the optimum offline benchmark ($\optoffmp$). We first introduce a convex program~\ref{eq:ex ante} that essentially is an \emph{ex ante relaxation} to the optimum offline, that is, its optimal solution suggests a vector of prices and a randomized matching that is only feasible in-expectation and its market efficiency is no less than the optimum offline. We then show how to obtain a feasible policy from this optimal solution that only loses a factor $\tfrac{1}{2}$ in the market efficiency. 

We further study a special case of the above problem in Appendix~\ref{apx:onestagepricing}, where it is just a single stage --- or equivalently, there is no demand vertex in the first stage. We propose an optimal $(1-\tfrac{1}{e})$-competitive algorithm against the ex-ante relaxation. The main technique is to reduce the problem to the problem of bounding the correlation gap of submodular functions~\citep{ADSY-10}.


\subsection{Ex ante relaxation}
\label{sec:ex ante relaxation}
First, as a reminder, consider the optimum offline benchmark $\optoffmp$, that is,
 \begin{equation*}
    \displaystyle\optoffmp\triangleq \expect[\val_\rider\sim F_\rider,\rider\in\Rp]{\maxmatchboth(\Rider,\Driver)},
    \end{equation*}
where 
$\weightHat_\rider=w_\rider(\price_\rider)\triangleq
\expect[\val\sim\Fu]{\val|\val \geq \price_\rider}
$ for $\rider\in\Rp$, $\weightHat_\rider=\val_\rider$ for $\rider\in\Ra$ and $\weightHat_\driver=\weight_\driver$ for $\driver\in \Driver$.
For each demand vertex $\rider$, define the threshold function as
$\Thresh_\rider(\acceptprob) \triangleq
\argmax\{\thresh : 1 - F_\rider(\thresh) 
\geq \acceptprob\}$.
Now consider this convex program as an ex ante relaxation for $\optoffmp$:
\begin{align}
\tag{$\mathscr{P}^{\textrm{EAR}}$}
\label{eq:ex ante}
\begin{array}{llll}
	\max~~&
	\displaystyle\sum_{\rider\in \Ra}
	\weightHat_\rider
	\assignvertexprobi
	+
	\displaystyle\sum_{\rider\in \Rp}
	\weight_\rider\left(\Thresh_\rider(\assignvertexprobi)\right)
	\assignvertexprobi
	+
	\displaystyle\sum_{\driver\in \Driver}
	\weightHat_\driver
	\assignvertexprobj
	&\text{s.t.}&\\[1em]
	&\assignvertexprobi = 
	\displaystyle\sum_{\driver\in N(\rider)}\assignprobi\leq 1 &
	 i\in \Rider~,&\\[1em]
	&
	\assignvertexprobj = 
	\displaystyle\sum_{\rider\in N(\driver)}
	\assignprobi\leq 1 &
    \driver\in \Driver~,&\\[1em]
	&\assignprobi\geq 0&
	(\rider,\driver)\in E~~.&
\end{array}
\end{align}
Notably, $\weight_\rider(\Thresh_\rider(\assignvertexprob))\,
\assignvertexprob
= \displaystyle\int_{0}^{\assignvertexprob}\Thresh_\rider(q)dq$, and therefore it is concave as a function of $\assignvertexprob$ (as $\Thresh_\rider$ is monotone non-increasing). 

\begin{lemma}
\label{lem:ex ante}
The optimal value of 
program~\ref{eq:ex ante} is at least 
$\optoffmp$.
\end{lemma}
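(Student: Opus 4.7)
The plan is to take the optimum offline policy $\optoffmp$ and extract from its (random) matching a feasible assignment for program~\ref{eq:ex ante} whose objective is at least $\optoffmp$. Concretely, let $M^*$ denote the (random) matching chosen by the offline benchmark after observing the realized values $\{v_i\}_{i\in\Rp}$, and define
\[
    x^*_{ij} \;\triangleq\; \Pr[(i,j)\in M^*], \qquad
    y^*_i \;\triangleq\; \sum_{j\in N(i)} x^*_{ij}, \qquad
    y^*_j \;\triangleq\; \sum_{i\in N(j)} x^*_{ij},
\]
where the probability is over the randomness in $\{v_i\}_{i\in\Rp}$ (and any internal randomization of the offline policy). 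These marginals automatically satisfy the degree constraints of \ref{eq:ex ante}: every deterministic matching has each vertex incident to at most one edge, and taking expectations over realizations preserves the inequality. Non-negativity of the $x^*_{ij}$ is immediate. So $(x^*, y^*)$ is feasible for \ref{eq:ex ante}.

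Next I would compare the objective of \ref{eq:ex ante} at $(x^*, y^*)$ with $\optoffmp$ term by term. For each $i\in\Ra$ and each $j\in\Driver$, the contributions to $\optoffmp$ are exactly $\hat w_i\,y^*_i = w_i(p_i)\,y^*_i$ and $w_j\,y^*_j$ respectively, which are precisely the corresponding summands in the \ref{eq:ex ante} objective. So the $\Ra$-terms and $\Driver$-terms match exactly, and the entire argument reduces to showing that the $\Rp$-contribution in the benchmark is dominated by the $\Rp$-contribution in the relaxation, i.e., that for every $i\in\Rp$,
\[
    \Ex{v_i\cdot\indicator{i\text{ matched by }M^*}} \;\le\; w_i\bigl(\Thresh_i(y^*_i)\bigr)\cdot y^*_i.
\]

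The crux — and the step that will require the most care — is this last inequality. I would prove it via a standard rearrangement/quantile argument: conditioning on $v_i$, the matching probability $q(v_i) \triangleq \Pr[i\text{ matched by }M^*\mid v_i]$ is a $[0,1]$-valued function with $\Ex{q(v_i)} = y^*_i$, so
\[
    \Ex{v_i\cdot\indicator{i\text{ matched by }M^*}} \;=\; \int v\,q(v)\,dF_i(v),
\]
and this integral is maximized among all $[0,1]$-valued $q$ with the same mean by the ``top-quantile'' choice $q(v) = \indicator{v\ge \Thresh_i(y^*_i)}$, since moving mass from lower to higher values of $v_i$ weakly increases the integral. Under this optimal $q$, the integral equals $\Ex{v_i\mid v_i\ge \Thresh_i(y^*_i)}\cdot y^*_i = w_i(\Thresh_i(y^*_i))\cdot y^*_i$, which is exactly the $\Rp$-summand in the \ref{eq:ex ante} objective. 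Summing over $i\in\Rp$ and combining with the exact matches on $\Ra$ and $\Driver$ yields $\optoffmp \le \text{(value of \ref{eq:ex ante} at }(x^*,y^*)) \le \text{(optimal value of \ref{eq:ex ante})}$, completing the proof. The only subtle point is ensuring the top-quantile argument is stated correctly when $F_i$ has atoms at $\Thresh_i(y^*_i)$, which is handled by the definition $\Thresh_i(q) = \argmax\{t:1-F_i(t)\ge q\}$ and a tie-breaking remark.
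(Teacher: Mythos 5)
Your proposal is correct and follows essentially the same route as the paper: construct the candidate solution $x_{ij}=\Pr[(i,j)\text{ matched in }\optoffmp]$, observe feasibility and exact agreement on the $\Ra$- and $\Driver$-terms, and bound the $\Rp$-contribution by $w_i(\Thresh_i(y_i))\,y_i$. The only difference is that you spell out the top-quantile rearrangement argument (including the atom caveat) that the paper leaves implicit in its one-line claim that the $\Rp$-contribution ``is at most $w_i(\Thresh_i(y_i))\,y_i$,'' so your write-up is, if anything, more complete.
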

\begin{proof}
We show this lemma by constructing a feasible assignment
for program~\ref{eq:ex ante} whose objective value is no smaller than $\optoffmp$. Consider the assignment where 
$$\assignprobi = \Pr{(\rider, \driver) \text{ is matched in }
\optoffmp}$$ for all edges $(\rider, \driver)$ in $E$.
Since $\optoffmp$ always returns a feasible matching, the assignment $\assignprobs$ is feasible in program~\ref{eq:ex ante}.
To compare the objective value with $\optoffmp$,
note that the expected contribution of each demand vertex $\rider$ in $\Ra$ is identical in 
program~\ref{eq:ex ante} and $\optoffmp$. The same is true for each supply vertex $\driver$
in $\Driver$.
For demand vertex $\rider$ in $\Rp$,
her contribution in $\optoffmp$ is 
$$\expect{\val_\rider\cdot\mathbbm{1}\{\textrm{$\rider$ is matched in $\optoffmp$}\}}.$$
This is at most 
$\weight_\rider(\Thresh_\rider(\assignvertexprobi))\,\assignvertexprobi$, as vertex $i$ matches in $\optoffmp$ with probability $\assignvertexprobi = \sum_{\driver\in N(\rider)}\assignprobi$.
\end{proof}

\subsection{Optimal two-stage joint matching and pricing  
}
\label{sec:contention resolution}
Let $\assignprobs,
\{\assignvertexprobi,
\assignvertexprobj\}_{\rider \in \Rider,\driver\in\Driver}$ be the optimal solution of the convex program~\ref{eq:ex ante}. We use this solution to design a vector of prices $\{\price_\rider\}_{\rider\in\Rp}$ and matchings in both stages. Assume demand vertices are indexed so that the first stage is earlier in the ordering. First, consider this ``simulation'' process:
\begin{enumerate}[label=(\roman*)]
\item Take a pass over demand vertices $i=1,2,\ldots,\lvert\Rider\rvert$ one by one. For each demand vertex $\rider\in \Ra$, sample a supply vertex $\driver\primed\sim\{\assignprob_{\rider\driver}\}_{j\in D}$ independently and match $\rider$ to $\driver\primed$. For each demand vertex $\rider \in \Rp$, post her the price $\price_\rider = 
\Thresh_\rider\left(
\assignvertexprobi
\right)$. If accepted, sample vertex $\driver\primed\sim\{\tfrac{\assignprob_{\rider\driver}}{\assignvertexprob_{\rider}}\}_{j\in D}$
and match $\rider$ to $\driver\primed$.
\end{enumerate}
Step (\rom{1}) is a lossless randomized rounding in terms of preserving the objective value of the program~\ref{eq:ex ante};
however, while it guarantees that each demand vertex is matched to at most one supply vertex, it only guarantees each supply vertex is matched to at most one demand vertex \emph{in expectation}. Now, we run a separate process for each supply vertex $\driver$, as we run Step~(\rom{1}), in order to maintain the feasibility of the matching. Formally, consider this \emph{discarding} process: 
\begin{enumerate}[label=(\roman*)]
\setcounter{enumi}{1}
    \item When Step~(\rom{1}) recommends a demand vertex $\rider$ to be matched to a supply vertex $\driver$, the discarding algorithm corresponding to supply vertex  $\driver$ finalizes the match, that is, if vertex $\driver$ is already matched to another demand vertex, demand vertex $\rider$ will be left unmatched. Otherwise, demand vertex $\rider$ will be matched to supply vertex $\driver$ with probability $\tfrac{1}{
(2 - \sum_{\rider' < \rider}
\assignprob_{\rider'\driver})
}$.
\end{enumerate}
Step~(\rom{2}) not only guarantees that
each supply vertex is matched with at most one demand vertex, but also it guarantees that the matching happens with probability at least  $\tfrac{1}{2}$, when conditioned on the demand vertex $\rider$ is recommended to be matched to supply vertex $\driver$.

\begin{algorithm}[tbh]
\caption{Ex Ante Pricing with Simulation\&Discarding}
\label{alg:matching and pricing}
\begin{algorithmic}[1]
\State{Compute the optimal solution  $\assignprobs,
\{\assignvertexprob_k\}_{k \in \Rider\cup\Driver}$ of convex
program~\ref{eq:ex ante}. 
} 
\State{Initialize matchings 
$\M_1 \gets \emptyset$,
$\M_2 \gets \emptyset$ 
for both stages.
}
\State{Initialize $\theta_\driver \gets 0$
for all driver $\driver\in\Driver$.
}
\State{\textsl{---At the first stage---}}
\For{each demand vertex $\rider\in \Ra$ 
}
\State{Sample a supply vertex $\driver\primed$ independently from probability distribution 
$\{\assignprob_{\rider\driver}\}_{\driver\in\Driver}$.
}
\If{supply vertex $\driver\primed$
is not matched in $M_1$}
\State{
Flip a coin independently and match $\rider$ to $\driver\primed$,
i.e., $M_1 \gets M_1 \cup \{(\rider,\driver\primed)\}$, w.p. $\tfrac{1}{(2 - 
\theta_{\driver\primed}
)}$.} 
\EndIf
\State{$\theta_\driver\gets \theta_\driver + \assignprob_{\rider\driver}$
for \underline{all supply vertices} $\driver\in\Driver$.
}
\EndFor
\State{Offer the price 
$\price_\rider =
\Thresh_\rider(\assignvertexprobi)$
to each demand vertex $\rider \in \Rp$.
}
\State{\textsl{---At the second stage---}}

\State{Let $\Rpa$ be the set of demand vertices in $\Rp$ who accepts their prices, and $\tilde\Driver$ be the set of available supply vertices.
}
\State{Find the the maximum vertex weighted 
matching  $M_2$ in $G[\Rpa, \tilde\Driver]$,
where each demand vertex $\rider \in \Rpa$ has weight $\weight_\rider(\price_\rider)=\expect[\val\sim\Fu]{\val|\val \geq \price_\rider}$ and each supply vertex $\driver\in\tilde\Driver$ has weight $\weight_\driver$.}
\State{Return $M_1\cup M_2$.}
\end{algorithmic}
\end{algorithm}

\begin{restatable}{theorem}{ocr}
\label{thm:matching and pricing}
$\compratioMP(\Cref{alg:matching and pricing},\optoffmp)\geq \tfrac{1}{2}$, and no other matching/pricing policy can obtain a better competitive ratio against $\optoffmp$.
\end{restatable}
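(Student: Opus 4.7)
Let $\{x^*_{ij}\}$ with marginals $y^*_i,y^*_j$ optimize~\ref{eq:ex ante}, and write $\mathrm{EAR}$ for its value; by \Cref{lem:ex ante}, $\mathrm{EAR}\geq\optoffmp$, so it suffices to lower bound the expected objective of \Cref{alg:matching and pricing} by $\tfrac{1}{2}\,\mathrm{EAR}$. My plan is to introduce a \emph{comparison algorithm} identical to \Cref{alg:matching and pricing} on $i\in\Ra$ and that posts the same prices $p_i=\Thresh_i(y^*_i)$ for $i\in\Rp$, but which in stage~2 continues the simulation-and-discarding rule vertex by vertex: when $i\in\Rp$ accepts, sample $j'\sim\{x^*_{ij'}/y^*_i\}_{j'\in N(i)}$ and match $(i,j')$ with probability $1/(2-\theta_{j'})$ whenever $j'$ is still free, then update $\theta_j\gets\theta_j+x^*_{ij}$ for every $j\in\Driver$. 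Directly computing the recommendation probability shows that in this comparison algorithm each edge $(i,j)\in E$ is recommended (before checking $j$'s availability) with unconditional probability exactly $x^*_{ij}$; for $i\in\Rp$ this uses that the acceptance probability equals $y^*_i$ by the definition of $\Thresh_i(y^*_i)$.

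The key step is an edge-marginal bound by induction on the processing order. Fix a supply vertex $j$ and set $\theta_j^{(i)}=\sum_{i'<i}x^*_{i'j}$. I claim the probability $q_i$ that $j$ is still free just before $i$ is processed satisfies $q_i\geq(2-\theta_j^{(i)})/2$: the base case $q_1=1$ is immediate, and the online-contention-resolution recursion $q_{i+1}=q_i\bigl(1-x^*_{ij}/(2-\theta_j^{(i)})\bigr)$ is linear in $q_i$, so plugging in the inductive hypothesis $q_i=(2-\theta_j^{(i)})/2$ gives $q_{i+1}=(2-\theta_j^{(i+1)})/2$. Multiplying the recommendation probability, this availability bound, and the discarding factor then yields $\Pr[(i,j)\in M_1\cup M_2 \text{ in the comparison algorithm}]\geq x^*_{ij}\cdot\tfrac{2-\theta_j^{(i)}}{2}\cdot\tfrac{1}{2-\theta_j^{(i)}}=\tfrac{1}{2}x^*_{ij}$.

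To convert this edge-marginal bound into an objective bound, I will observe that for $i\in\Rp$ the matching decisions depend on $v_i$ only through the event $\{v_i\geq p_i\}$; hence by conditional independence $\mathbb{E}[v_i\,\mathbbm{1}\{(i,j)\in M_2\}]=w_i(p_i)\Pr[(i,j)\in M_2]$ with $p_i=\Thresh_i(y^*_i)$, matching exactly the demand coefficient that appears in~\ref{eq:ex ante}. Combining with the identity $\mathrm{EAR}=\sum_{(i,j)}x^*_{ij}(w_i(p_i)+w_j)$, the comparison algorithm attains expected value at least $\tfrac{1}{2}\,\mathrm{EAR}$. Finally, \Cref{alg:matching and pricing} coincides with the comparison algorithm on stage~1 and replaces its stage-2 matching by the \emph{maximum} vertex-weighted matching in $G[\Rpa,\tilde{\Driver}]$ under the same weights $w_i(p_i)$ and $w_j$; this weakly dominates the simulation-and-discarding matching realization by realization, so the $\tfrac{1}{2}\,\mathrm{EAR}\geq\tfrac{1}{2}\,\optoffmp$ bound carries over.

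For the matching lower bound I will use the instance $\Ra=\{a\}$, $\Rp=\{b\}$, $\Driver=\{s\}$ with $w_s=0$, $v_a\equiv 1$ deterministic, and $v_b=M$ with probability $1/M$ and $0$ otherwise. Then $\optoffmp=M\cdot\tfrac{1}{M}+1\cdot(1-\tfrac{1}{M})=2-1/M$, while every policy is limited to expected payoff at most $1$: matching $(a,s)$ in stage~1 locks in welfare $1$, whereas any alternative that leaves $s$ free and posts any $p_b$ earns $\Pr[v_b\geq p_b]\cdot\mathbb{E}[v_b\mid v_b\geq p_b]\leq 1$. Sending $M\to\infty$ forces the competitive ratio to $1/2$. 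The most delicate step in the upper bound is the dominance argument: justifying that the max-weight stage-2 matching (whose per-edge marginals are opaque) inherits the edge-marginal lower bound requires the conditional-independence identity for $v_i\mid v_i\geq p_i$ to move cleanly from matching weights to the objective.
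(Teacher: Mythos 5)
Your proposal is correct and follows essentially the same route as the paper's proof: an ex-ante relaxation upper bound (\Cref{lem:ex ante}), a comparison algorithm that continues the simulation-and-discarding procedure into the second stage, the online-contention-resolution invariant $\Pr[j \text{ free}] = 1 - \theta_j/2$ giving a per-edge matching probability of $\tfrac{1}{2}x^*_{ij}$, the conditional-independence identity converting $v_i$ to $w_i(p_i)$, dominance of the max-weight second-stage matching, and the same one-supply-vertex tight instance. The only difference is cosmetic: you spell out the induction for the invariant that the paper leaves as "a simple reduction."
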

\begin{proof}{\emph{Proof of \Cref{thm:matching and pricing}.}}
Consider a variation of \Cref{alg:matching and pricing} that continues running the simulation and discarding procedures of the first stage during the second stage too, i.e., it runs the Step~(\rom{1}) and Step~(\rom{2}) exactly as in \Cref{sec:contention resolution} (therefore, variables $\{\theta_\driver\}_{\driver\in\Driver}$
also keep updating as we go over the riders of the second stage). Since both algorithms pick the same matching in the first stage, and \Cref{alg:matching and pricing} obtains no smaller market efficiency than this variation in the second stage (as it picks the maximum weighted matching), it is enough to focus on this variation and show its competitive ratio is at least $\tfrac{1}{2}$.

Let $\assignprobs,
\{\assignvertexprob_k\}_{k\in \Rider\cup\Driver}$ be the optimal solution of program~\ref{eq:ex ante}. Since this program gives 
an upper bound on the benchmark $\optoffmp$ 
(\Cref{lem:ex ante}), it is sufficient to show the following
two claims: (a)
whenever a demand vertex $\rider\in\Rp$
is matched during the second stage, her
expected 
value is $\weight_\rider(\Thresh_\rider(\assignvertexprobi))$,
and (b) 
every edge $(\rider, \driver)$
is matched 
with probability of at least $\tfrac{\assignprobi}{2}$.

Claim (a) is guaranteed since \Cref{alg:matching and pricing}
posts the price $\price_\rider = \Thresh(\assignvertexprobi)$
to each demand vertex $\rider\in \Rp$ in the second stage. To show claim (b), we first argue that for each demand vertex $\rider$, any supply vertex $\driver$ is recommended for matching with probability $\assignprobi$. For demand vertices of the first stage, 
this is trivial. For each demand vertex $\rider\in\Rp$,
she accepts her price with probability 
$\assignvertexprobi$,
and hence the supply vertex $\driver$
is selected with probability
$(\tfrac{\assignprobi}{\assignvertexprobi})\cdot \assignvertexprobi = \assignprobi$.
Next, note the following 
invariant of the discarding algorithm that can be proved with a simple reduction: in each iteration, the probability that supply vertex $\driver$ is not yet matched is equal to $1 - \tfrac{\theta_\driver}{2}$.
Thus, conditioning on the event that supply vertex $\driver$
is recommended to demand vertex $\rider$, they will be eventually matched with probability $\tfrac{1}{(2-\theta_{\driver})} \cdot 
(1 - \tfrac{\theta_\driver}{2}) = \tfrac{1}{2}.$ This finishes the argument for claim (b).

To show that the competitive ratio is optimal,
consider the following instance:
there is one supply vertex and two demand vertices.
The supply vertex has weight 0.
Demand vertex 1 is in the first stage with 
$\weightHat_1 = 1$.
Demand vertex 2 is in the second stage with
value $\val = \tfrac{1}{\epsilon}$
w.p. $\epsilon$ and $\val=0$ otherwise.
The expected market efficiency for any policy
is at most 1; however, $\optoffmp$ is $2-\epsilon$.\hfill \Halmos
\end{proof}

\revcolor{
\begin{remark}
\label{remark:online contention resolution}
Technically speaking,
the discarding algorithm in 
Step~(\rom{2}) for each supply vertex $\driver$ is 
an online contention resolution scheme
\citep[cf.][]{FSZ-16}
for $1$-uniform matroid. 
\citet{ala-14}
introduced the so-called ``Magician algorithm'', which is a near-optimal online contention resolution for $k$-uniform matroid. The algorithm in Step (\rom{2}) is a special case for $k=1$.
The overall approach (i.e.,
lossless randomized rounding with an 
inner discarding procedure to guarantee the 
feasibility) has also been used in other 
stochastic online optimization problems and combinatorial variants of prophet inequality
\citep[e.g.,][]{AHL-12,DFKL-17,MSZ-18,FNS-19}. The closest to us in this literature are \cite{AHL-12} and more recently, \cite{ezra2020online}\footnote{\revcolor{Notably, the main technical contribution of \cite{ezra2020online} is their interesting online contention resolution scheme for 
general graphs.
Most of the preceding work 
\citep[e.g.,][]{AHL-12,FNS-19}
only focus on 
bipartite graphs.}}, 
which use this technique for the prophet inequality matching problem under vertex arrival and edge arrival. While the problem studied in these papers is different from ours, the underlying algorithmic solutions are quite similar. We highlight that our contributions here are less on the technical side and mostly on the modeling side and connecting this algorithmic framework to the joint matching and pricing problem for maximizing market efficiency. 
\end{remark}}

\section{Numerical Experiments
for Two-stage Stochastic Matching}

\label{sec:numerical}

In this section, we provide numerical evaluations for the performance of 
\Cref{alg:weighted-skeleton} and
\Cref{alg:online} in practical instances of our two-stage matching problem, estimated from \citet{didi-20} dataset.

\subsection{Experimental setup}
\citet{didi-20}
is a dataset 
that aggregates anonymized 
trajectory data and 
ride request data of DiDi Express and DiDi Premier drivers within the ``Second Ring Road'' of the city of Chengdu in China
during November 2016. See \Cref{fig:chengdu} for a map showing the layout of this city and where DiDi has collected its data in this dataset. In our experiment, 
we focus on trajectory data and ride request data 
between 10:30am and 4:30pm of each day in the interval
from November 1 to November 14.
In \Cref{table:didi} we list all fields used in the 
experiment from \citet{didi-20}.

\begin{figure}[htb]
\centering
\includegraphics[width=0.4\textwidth]{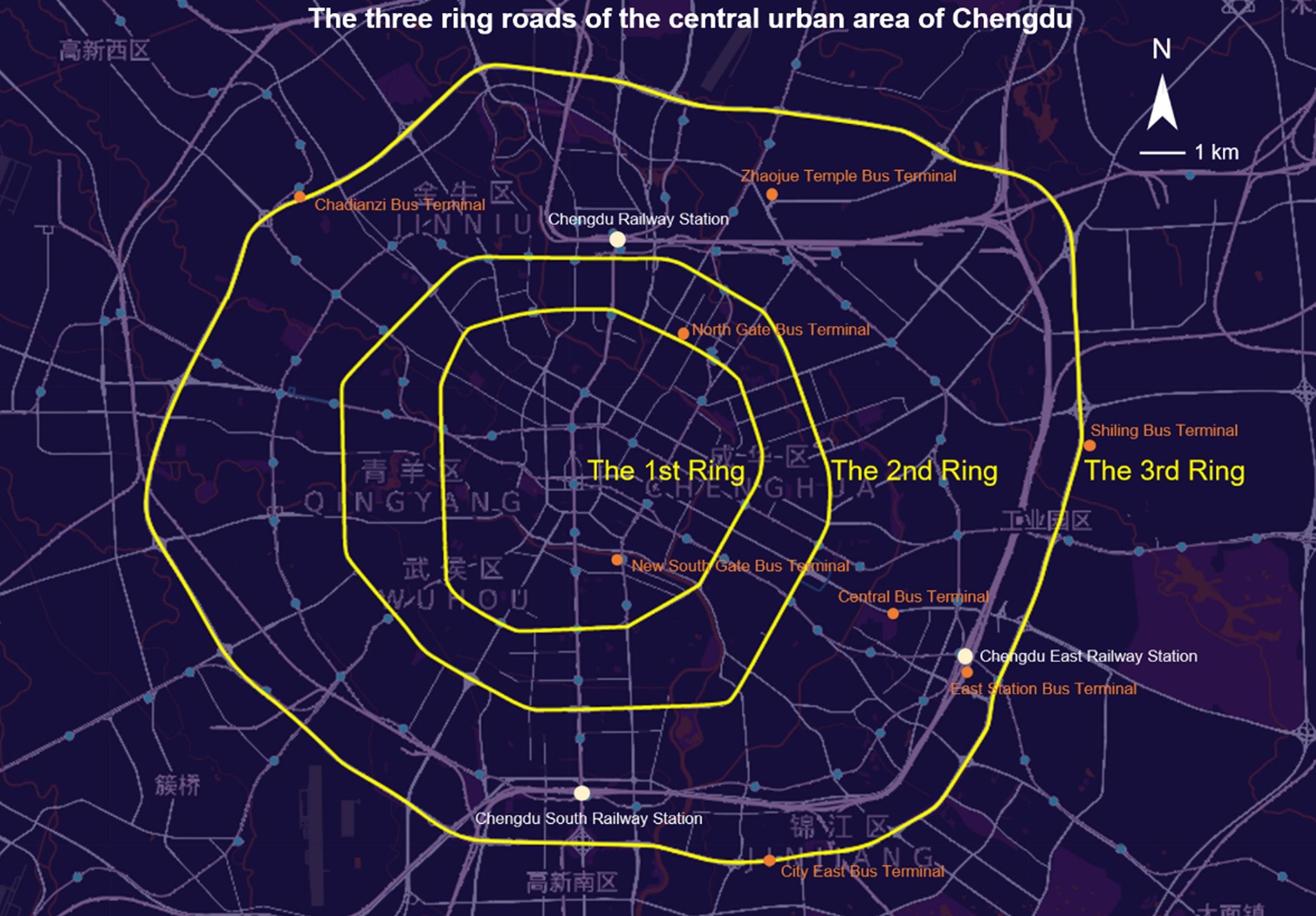}
\caption{\label{fig:chengdu}  Chengdu (source: OpenStreetMap; Chengdu Municipal Bureau of Planning and Natural Resources)}
\end{figure}

\begin{table}[h]

\centering
 \footnotesize
\subfloat[Trajectory Data]{
    \begin{tabular}{|c|c|}
    \hline
    Field & Comment \\
    \hline
      \texttt{Driver ID}  & Anonymized \\
    \hline
      \texttt{Order ID}  & Anonymized \\
    \hline
      \texttt{Time Stamp} & Unix timestamp, in seconds \\
    \hline
      \texttt{Longitude} & \\
    \hline
      \texttt{Latitude} & \\
    \hline
    \end{tabular}}
    \subfloat[Ride Request Data]{ 
    \begin{tabular}{|c|c|}
    \hline
    Field & Comment \\
    \hline
       \texttt{Order ID}  & Anonymized \\
    \hline
       \texttt{Ride Start Time} & Unix timestamp, in seconds \\
    \hline
       \texttt{Ride Stop Time} & Unix timestamp, in seconds \\
    \hline
       \texttt{Pick-up Longitude} & \\
    \hline
       \texttt{Pick-up Latitude} & \\
    \hline
    \end{tabular}
    }
    \normalsize
    \vspace{1mm}
    \caption{\citet{didi-20} fields description}
    \label{table:didi}
\end{table}
\newcommand{\sampletime}{\texttt{TIME}}
\newcommand{\sampleday}{\texttt{DAY}}

We run 500 iterations 
of Monte Carlo simulation to create different instances of our two-stage stochastic matching problem using the data. In each iteration, we sample a day $\sampleday$ uniformly at random from November 1 to November 14 and 
a time stamp $\sampletime$ uniformly at random from 10:30am to 4:30pm.
For this Monte Carlo iteration, the primitives of the instance are determined as follows:\footnote{The parameter choices below are for a better demonstration of the experimental results. We have confirmed that our conclusions are robust to these choices by running several other simulations---which we have omitted for brevity.}

\paragraph{Demand vertices $\Ra$ in the first stage.}
We construct demand vertex set $\Ra$
in the first stage 
with 
all ride requests whose \texttt{Ride Start Time}
is between $\sampletime$ and $\sampletime+10$ 
(i.e., a 10-second time interval)
on day $\sampleday$. 
See \Cref{fig:empirical vertex}
for a histogram of $|\Ra|$.

\paragraph{Demand vertices $\Rp$ and 
stochastic process of $\Rpa$ 
in the second stage.}
For all $d \in [14]$, let $\Rider_{2,d}$ be the set of 
all ride requests with
\texttt{Ride Start Time} between
$\sampletime + 10$ and $\sampletime + 20$ (i.e., forward shifted version of the previous time interval by $10$ seconds) on day $d$.
We set $\Rp = \displaystyle\cup_{d =1}^{14}\Rider_{2,d}$.
In this numerical experiment, 
we deviate from our model for the stochastic process
of $\Rpa\subseteq\Rp$.
In particular, instead of assuming 
that each demand vertex $\driver\in \Rp$
is available (i.e., $\driver\in \Rpa$)
with probability $\pi_\driver$ independently,
we assume $\Rpa = \Rider_{2,d}$ with 
equal probabilities for all $d\in [14]$.
Note that 
both \Cref{alg:weighted-skeleton}
and 
\Cref{alg:online}
are well-defined in this correlated arrival setting, and 
their competitive ratio guarantees are 
preserved under this deviation.
See \Cref{fig:empirical vertex} 
for a histogram of $|\Rpa|$.

\paragraph{Supply vertices $\Driver$.}
By using \texttt{Ride Stop Time} field, we construct the supply vertex set 
$\Driver$
with all the drivers who finish
a ride request between 
$\sampletime-20$ and $\sampletime$
(i.e., a 20-second time interval) 
on day $\sampleday$.
See \Cref{fig:empirical vertex}
for a histogram of $|\Driver|$.

\paragraph{Edge set $E$.}
For each pair of $\rider\in \Rider$
and $\driver \in \Driver$,
we compute the distance 
based on their longitude and latitude
information. 
In particular, recall that 
each $i\in \Rider$ corresponds 
to a ride request and hence
we can use \texttt{Pick-up Longitude}
and \texttt{Pick-up Latitude} 
from Ride Request Data as its location;
and 
each $j\in \Driver$ corresponds 
to a driver and hence
we can use \texttt{Longitude}
and \texttt{Latitude} at time $\sampletime$
on day $\sampleday$
from Trajectory Data as its location.
We add an edge between pair $(\rider,\driver)$
if their distance is below 2500m,
which corresponds to the quantile 82.4\% of the histogram of all pair distances.
See \Cref{fig:empirical distance}
for a histogram of all pair distances,
\Cref{fig:empirical edge}
for a histogram of edges,
and 
\Cref{fig:empirical location}
for the location of vertices (rider and driver) 
in one Monte Carlo iteration.

\newcommand{\idledist}{\mathcal{I}}
\newcommand{\idletime}{I}

\paragraph{Supply weights $\{\weight_\driver\}_{j\in\Driver}$.}
In this experiment, we study both 
scenarios where 
supply vertices 
are (i) unweighted
and (ii) weighted.
For the weighted scenario, 
we define the weight of a supply vertex
(corresponding to a driver) according to its
\emph{average idle time per ride}.
For each driver, 
we compute its 
average idle time per ride as 
the average time difference between
two adjacent rides of this driver
using \texttt{Ride Start Time} 
and \texttt{Ride Stop Time}.\footnote{
We assume that
a driver temporarily abandons the platform 
if she is idle for more than 600 seconds 
after a ride.
Thus, 
when we compute the average idle time per ride,
we only consider idle times below 600 seconds. 
}
See the histogram of the average idle time per ride in \Cref{fig:empirical weight}.
We compute the empirical distribution $\idledist$
of the average idle time per ride among all the drivers.
For a driver $\driver\in\Driver$ with 
average idle time per ride $\idletime$,
we define her weight 
as $\weight_\driver = 1 +
\idledist^{-1}(\idletime)$, where 
$\idledist^{-1}(\cdot)$ is the inverse of
cumulative function 
for the empirical distribution $\idledist$. Briefly speaking, this score aims to give more advantage to low-utilized drivers at the time of decision making.  

\begin{figure}
    \centering
    \subfloat[Histogram of vertices]{
\includegraphics[width=0.4\linewidth]{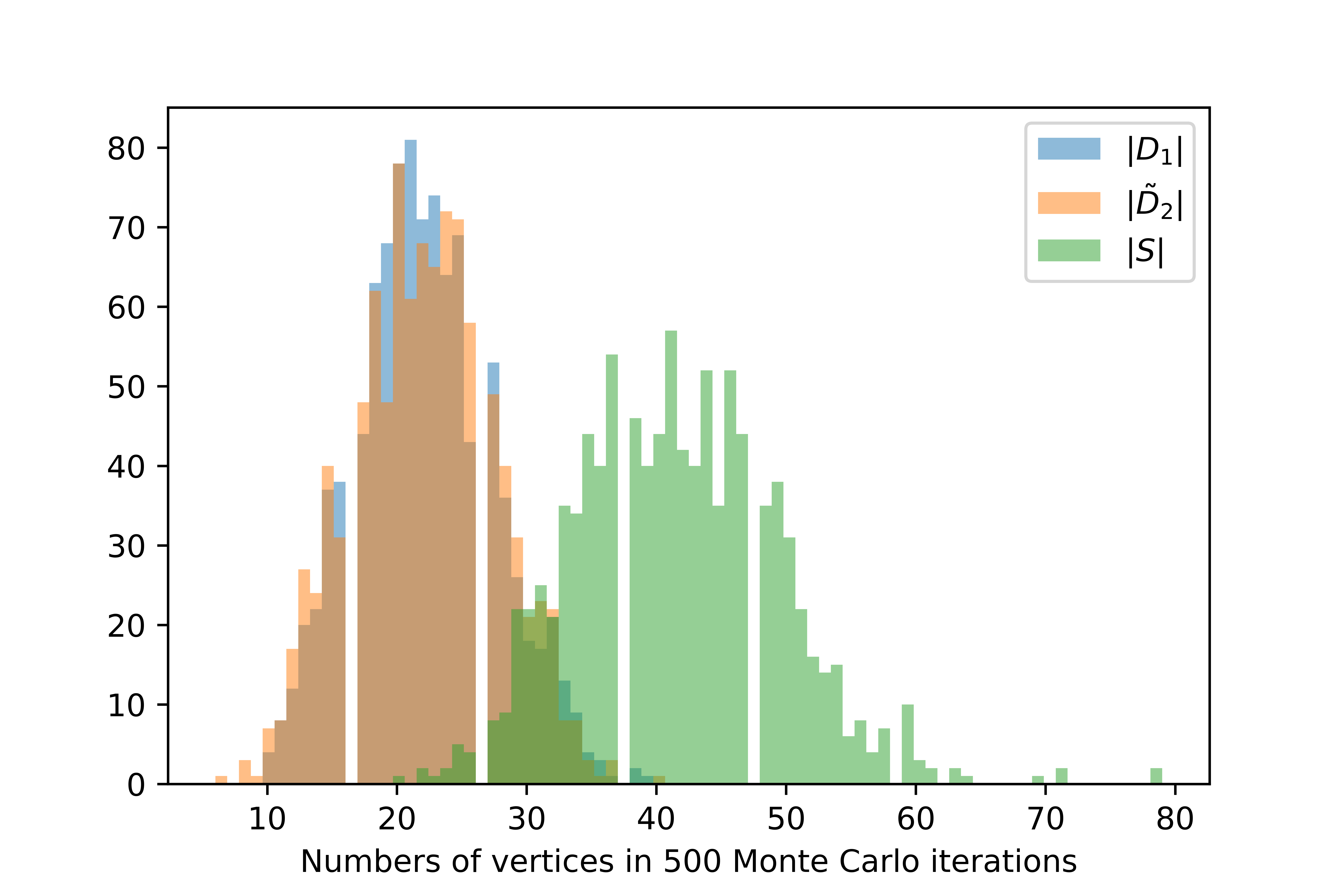}
\label{fig:empirical vertex}
}
\subfloat[Histogram of edges]{
\includegraphics[width=0.4\linewidth]{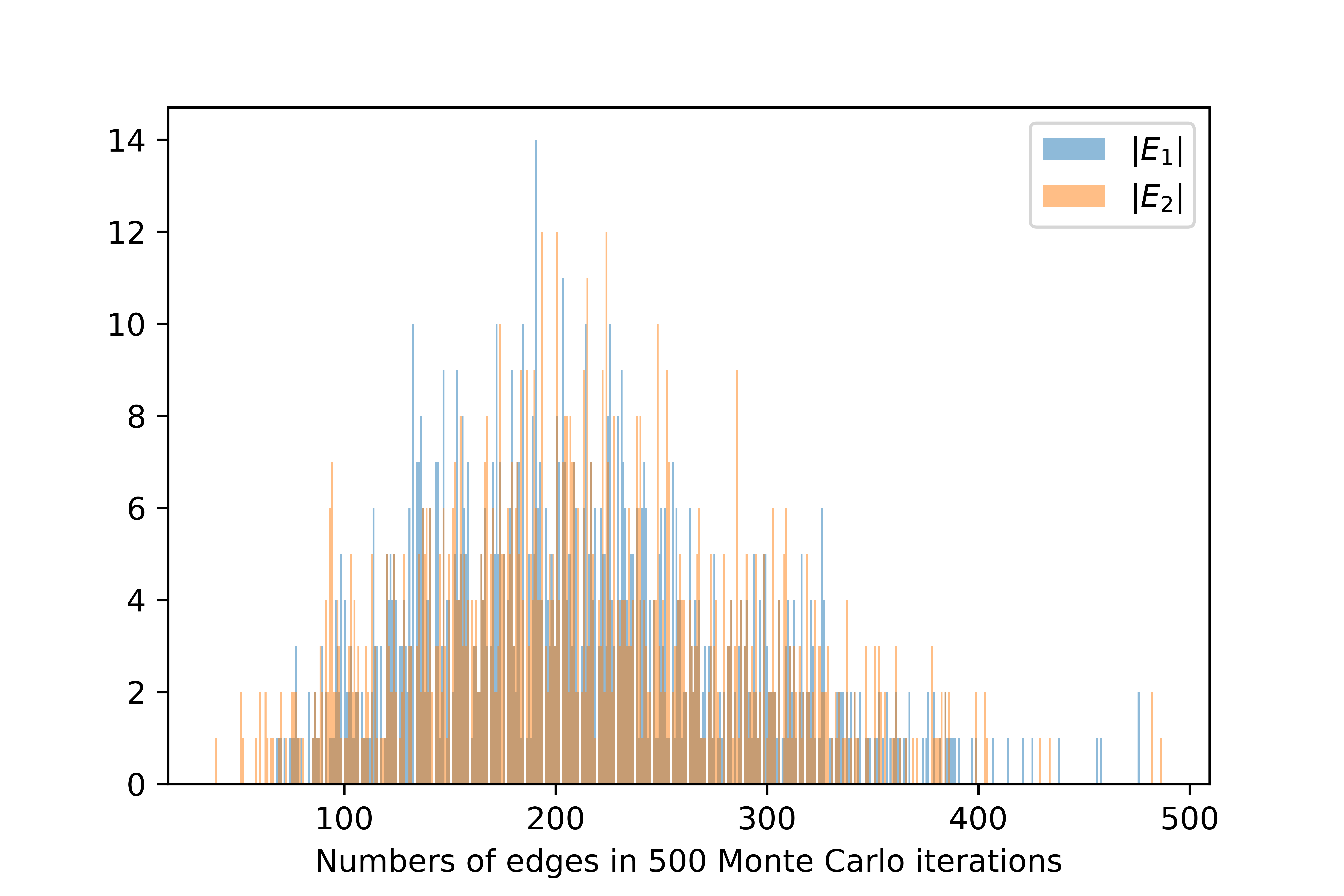}
\label{fig:empirical edge}
}
\\
\vspace{-4mm}
    \subfloat[Histogram of distances]{
\includegraphics[width=0.4\linewidth]{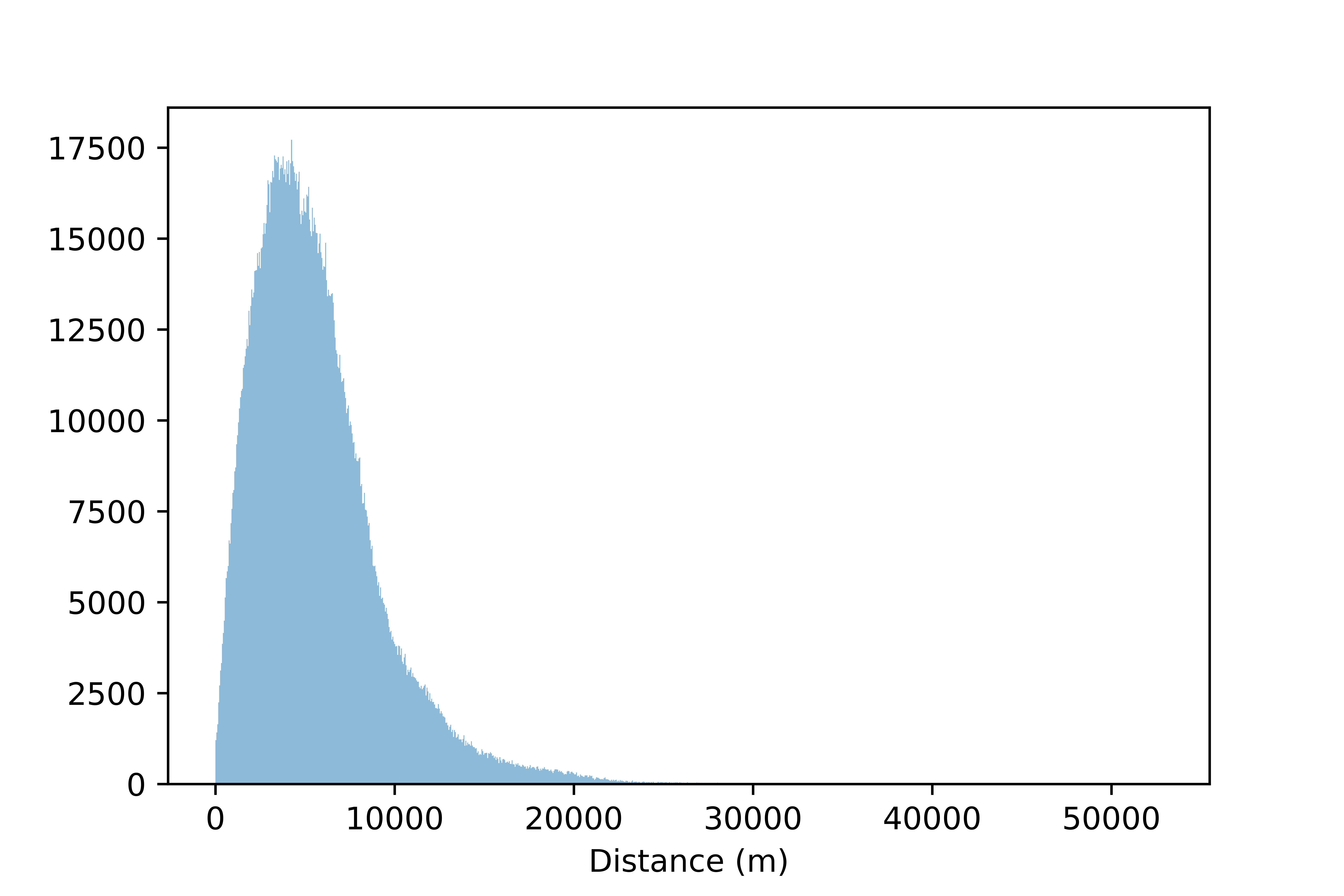}
\label{fig:empirical distance}
}
\subfloat[Histogram of average waiting time per ride]{
\includegraphics[width=0.4\linewidth]{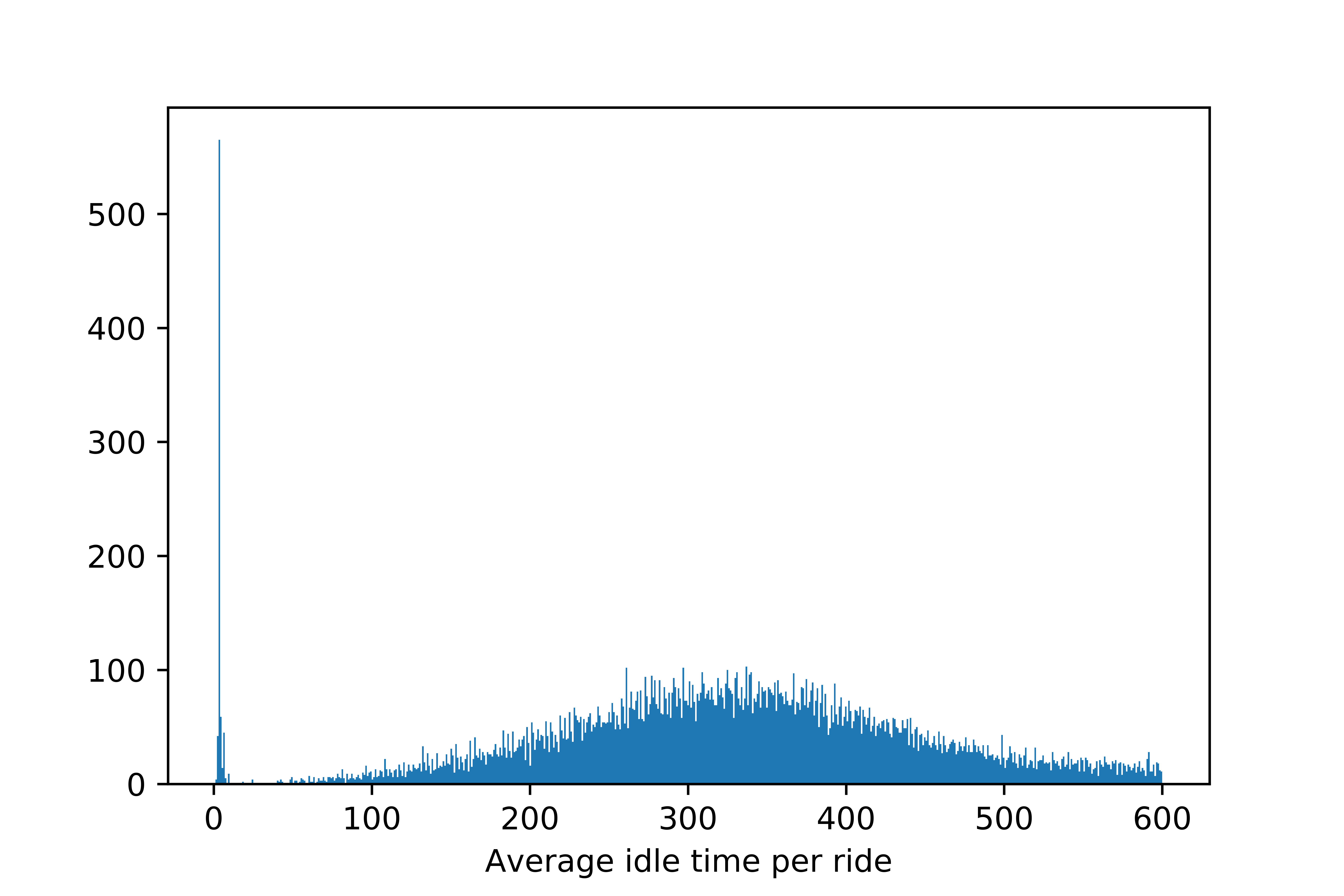}
\label{fig:empirical weight}
}
    \caption{Statistical data of instances used in our numerical experiment.}
\end{figure}
\begin{figure}
\vspace{-3mm}
    \centering
    \includegraphics[width=0.4\linewidth]{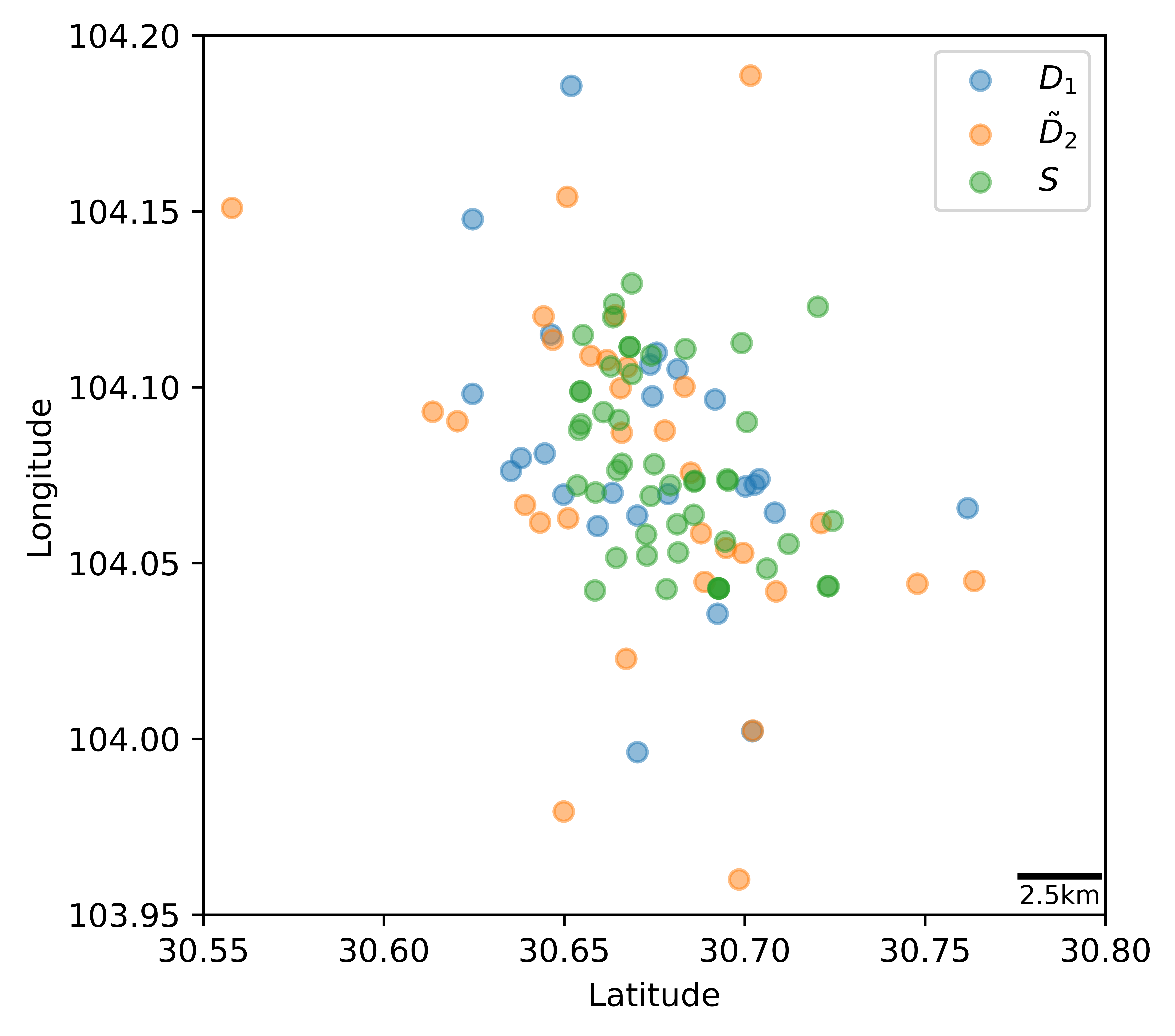}
    \caption{Location of vertices 
in one Monte Carlo iteration.}
\label{fig:empirical location}
\vspace{-3mm}
\end{figure}
\vspace{-1mm}

\subsection{Results and discussion}
To summarize our findings, 
we consider the two scenarios mentioned above
separately, together with five different policies: \Cref{alg:weighted-skeleton} (\texttt{WBU})
\revcolor{with convex function $g(x)=\exp(x)$}, two variants of submodular-maximization based policy (\texttt{SM} and \texttt{SM-Limit}), \Cref{alg:online} (\texttt{HG}) and greedy (\texttt{GR}). See Appendix~\ref{apx:numerical-policies} for details on our policies.  

\paragraph{(i) Supply vertices with no weights.}
In this scenario,
it is clear from 
both \Cref{tab:numerical cc unweighted}
and 
\Cref{fig:empirical cc unweighted}
that all policies 
achieve relatively high
performance ratios against the optimum offline.\footnote{One potential explanation is that \citet{didi-20}
only records ride requests that 
have been served in the platform,
which may be only a subset of 
all possible ones.
This issue downplays
the power of optimum offline in our experiments.}
Both \texttt{WBU} and \texttt{GR} 
output $M_1$ in the first stage 
without any knowledge of the second stage graph. 
The average performance ratio gap between
\texttt{WBU} and
\texttt{GR} is 2\% (\Cref{tab:numerical cc unweighted}).
The performance ratios 
of all 500 iterations of 
the Monte Carlo simulation are less
concentrated for \texttt{GR}
than for \texttt{WBU} and other policies
(\Cref{fig:empirical cc unweighted}).
\texttt{SM}, \texttt{SM-Limit} and 
\texttt{HG} all utilize and rely on the distributional knowledge 
about the second stage graph. 
All three policies achieve better performance ratios
than \texttt{WBU} and \texttt{GR}.

\begin{table}[hbt]
    \centering
    \caption{Comparing performance ratios of different policies against optimum offline; supply vertices have no weights;
    policies with $*$ are proposed or analyzed in this paper.}
    \vspace{1mm}
    \footnotesize
    \label{tab:numerical cc unweighted}
    \begin{tabular}{cc}
    \hline
    policy & average performance ratio (95\% CI) \\
    \hline
    \texttt{WBU}$^*$    &  0.985 (0.966,0.998) \\
    \texttt{SM}$^*$    & 0.998 (0.991,1.000) \\
    \texttt{SM-Limit}$^*$    & 0.994 (0.981,1.000) \\
    \texttt{HG}$^*$    & 0.998 (0.991,1.000) \\
    \texttt{GR}    & 0.965 (0.916,0.997) \\
    \hline
    \end{tabular}
\end{table}
\begin{figure}
\vspace{-9mm}
    \centering
    \subfloat[Supply vertices with no weights]{
\includegraphics[width=0.5\linewidth]{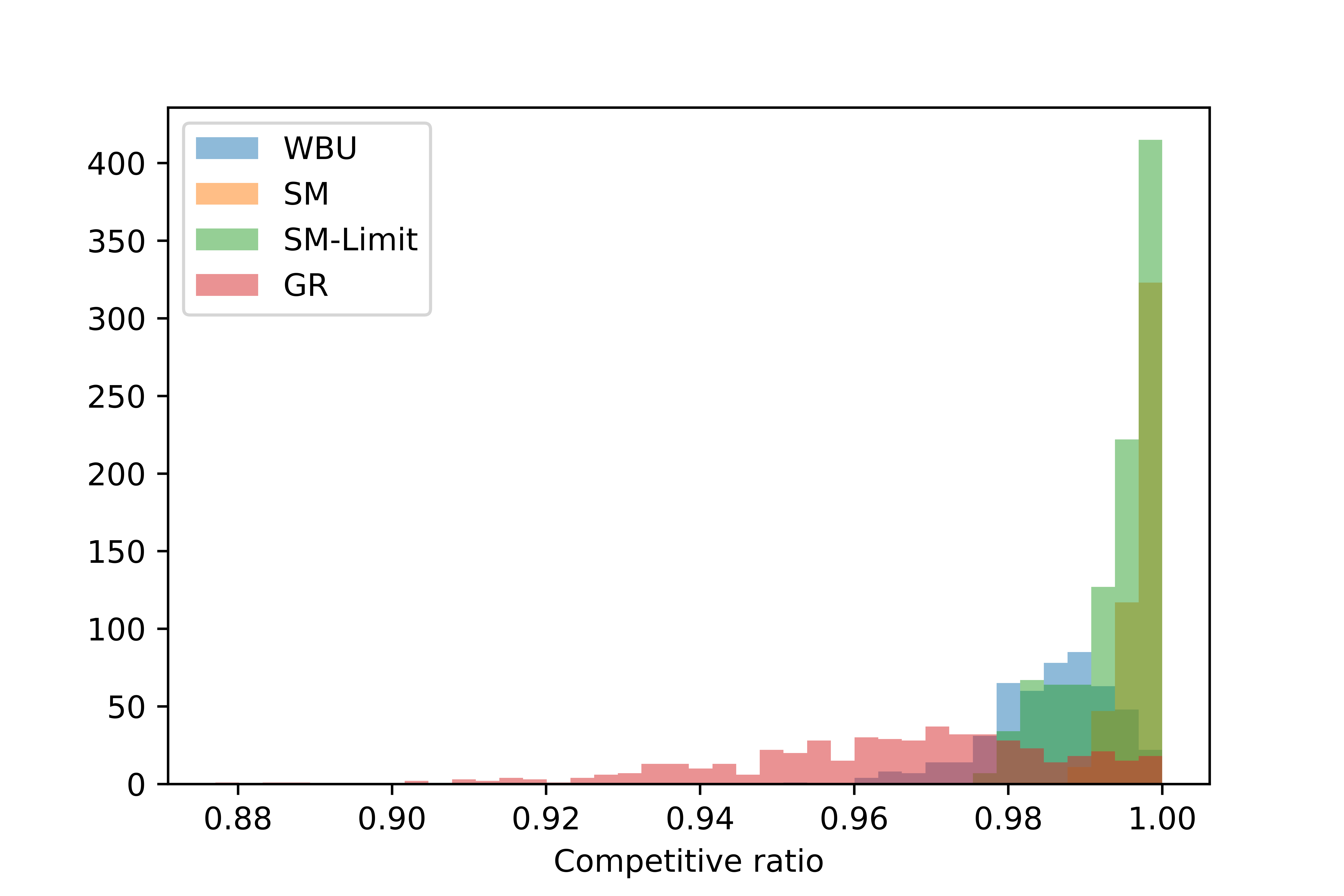}
\label{fig:empirical cc unweighted}
}
\subfloat[Supply vertices with weights]{
\includegraphics[width=0.5\linewidth]{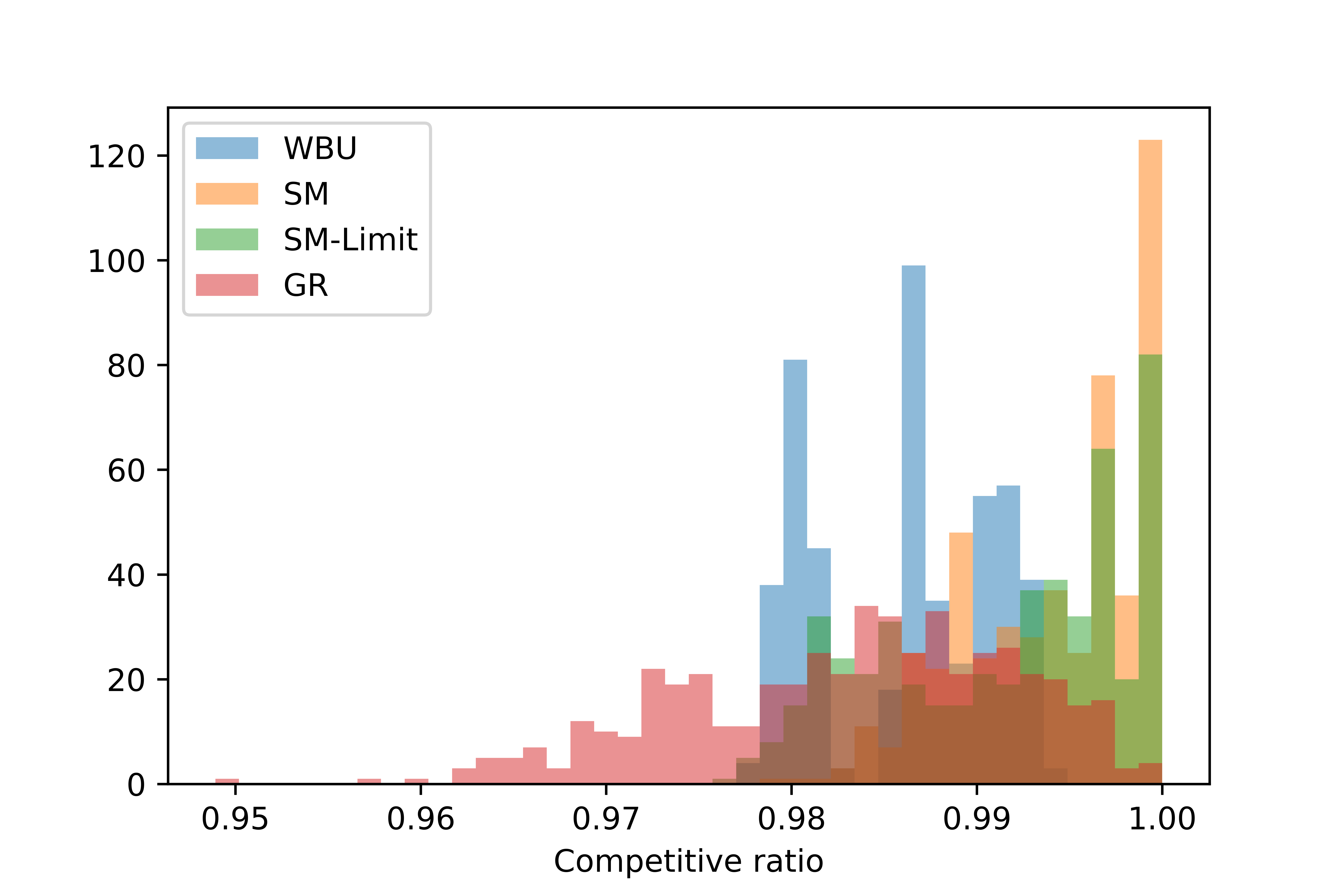}
\label{fig:empirical cc weighted}
}
    \caption{Histogram of the performance ratios of different policies against optimum offline.
    Results are based on 500 iterations of Monte Carlo simulation.}
    \vspace{-2mm}
\end{figure}
\paragraph{(ii) Supply vertices with weights.}
In this scenario, similar to the previous scenario,
it is clear from 
both \Cref{tab:numerical cc weighted}
and 
\Cref{fig:empirical cc weighted}
that all policies 
achieve relatively high
performance ratios against optimum offline.
Again, both \texttt{WBU} and \texttt{GR} 
output $M_1$ 
without any knowledge of the second stage. 
The average performance ratio gap between
\texttt{WBU} and
\texttt{GR} is 0.3\% (\Cref{tab:numerical cc weighted}).
The performance ratios 
of all 500 iterations of 
the Monte Carlo simulation are less
concentrated for \texttt{GR}
than for \texttt{WBU} and other policies
(\Cref{fig:empirical cc weighted}).
\texttt{SM}, \texttt{SM-Limit} and 
\texttt{HG} again achieve better performance ratios
versus \texttt{WBU} and \texttt{GR}.

\begin{table}[hbt]
\vspace{-4mm}
    \centering
    \caption{Comparing performance ratios of different policies against optimum offline; supply vertices has weights;
    policies with $*$ are proposed or analyzed in this paper.}
    \label{tab:numerical cc weighted}
    \footnotesize
    \vspace{1mm}
    \begin{tabular}{cc}
    \hline
    policy & average performance ratio (95\% CI) \\
    \hline
    \texttt{WBU}$^*$    & 0.986 (0.979,0.993) \\
    \texttt{SM}$^*$    & 0.994 (0.984,1.000) \\
    \texttt{SM-Limit}$^*$    & 0.992 (0.979,1.000) \\
    \texttt{HG}$^*$    & 0.994 (0.984,1.000) \\
    \texttt{GR}    & 0.983 (0.964,0.997) \\
    \hline
    \end{tabular}
    \vspace{-4mm}
\end{table}
\begin{remark}
To conduct a numerical study for joint
matching and pricing,
we needed additional data
to fit demand vertices' private values
(willingness to pay),
which is missing in \citet{didi-20}.
\end{remark}

\begin{remark}
See \Cref{sec:conclusion}
for a discussion on how to 
obtain heuristic policies 
based on \Cref{alg:weighted-skeleton}
and \Cref{alg:online}
for multi-stage stochastic matching, with the potentials of being applicable in ride-hailing.
\end{remark}

 \section{Conclusion}
 \label{sec:conclusion}

We have studied the competitive analysis for the weighted two-stage stochastic matching problem for maximizing supply efficiency, and the two-stage stochastic joint matching and pricing problem for maximizing market efficiency. We have also considered the competitive analysis with general edge weights and single-stage stochastic joint matching and pricing for demand efficiency. We developed methods to systematically use graph structure and price information, as well as methods to use the price lever, to obtain optimal competitive ratios for these problems. See \Cref{tab:summary} for more details. 

\paragraph{Future directions.} There are several open problems and directions emerging from this paper. \revcolor{In particular, characterizing the approximation hardness of the optimum online policy and obtaining tight competitive ratios against this benchmark for the supply-weighted two-stage stochastic matching are of interest}. Another important generalization of our $\sfrac{3}{4}$-competitive algorithm is considering the extension to the multi-stage settings.
\revcolor{The recent work of \cite{feng2020batching,feng2021batching} extends \Cref{thm:opt-offline-comp-ratio} to $K$ stages for \emph{fractional matchings} and shows the tight competitive ratio of $1-(1-1/K)^K$. Showing competitive ratio in the integral matching problem for $K>2$ is still an open problem.} 

Note that while our proposed algorithms are defined for two-stage stochastic matching, one can design single lookahead heuristics for the multi-stage version of the same problem based on our algorithms. Such a heuristic algorithm, at each stage $k$, only considers the information about stage $k+1$ (and not other future stages) and declares a matching for stage $k$ by running either of our \Cref{alg:weighted-skeleton} or \Cref{alg:online}. While it is not clear whether such an algorithm is theoretically competitive in the multi-stage problem, the prospect of running this $1$-lookahead algorithm for infinite horizon problems make it appealing for several demand-supply matching applications. Studying their numerical performance in such applications is an important applied future direction.

\setlength{\bibsep}{0.0pt}
\bibliographystyle{plainnat}
\OneAndAHalfSpacedXI
{\footnotesize
\bibliography{refs}}

\clearpage

\ECSwitch
\ECDisclaimer

\renewcommand{\theHsection}{A\arabic{section}}
\renewcommand{\theHsection}{A\arabic{chapter}}



\section{Further Related Work}
\label{appendix-further-related}
 \noindent\emph{Two-stage 
stochastic and non-stochastic combinatorial optimization.}
Two-stage stochastic combinatorial
optimization problems with linear and convex objectives
have been studied extensively in the 
literature
\citep[e.g.,][]{BL-11, CCP-05,SS-04,SS-07,IKMM-04, hanasusanto2015k,hanasusanto2018conic}.
Also, see the excellent book of \cite{kuhn2006generalized} for more context regarding convex multi-stage stochastic programs. For two-stage stochastic matching, 
various models with different objectives
have been studied in 
\citet{KS-06, EGMS-10, KKU-08}. 
Our result for the unweighted version of the two-stage matching is closely related to the beautiful work of \cite{LS-17}, which also uses the matching skeleton of \cite{GKK-12}, but for a different two-stage matching problem. In their model, a batch of new edges arrive adversarially in the second stage, and their algorithm first finds a maximum fractional matching using the matching skeleton and then removes some of the edges in the first stage through a randomized procedure to obtain a $\tfrac{2}{3}$-competitive ratio. Another work that is conceptually related to our two-stage matching problem is the recent work of \cite{housni2020matching}, which is inspired by the idea of robust multi-stage optimization~\citep[e.g.,][]{bertsimas2010optimality,bertsimas2011theory}. They consider a two-stage robust optimization for cost-minimization of the matching, where future demand uncertainty is modeled using a set of demand scenarios (specified explicitly or implicitly). Our model and results diverge drastically from this work, which makes them incomparable. 

\vspace{3mm}
\noindent\emph{Online bipartite allocations.}
Our results for weighted two-stage stochastic matching (\Cref{sec:matching})
is comparable with results in
\emph{online bipartite matching 
with vertex arrival}. In a related line of work, \citet{AGKM-11, DJK-13} generalize the classic RANKING algorithm to the setting where offline vertices are weighted. Also, at high-level, our primal-dual analysis resembles~\cite{BJN-07,devanur2009adwords,FKMMP-09, devanur2012online,HKTWZZ-18,huang2020fully} in spirit. For the online stochastic matching, besides work mentioned earlier, \cite{brubach2016new} slightly improve the previous bound in \cite{JL-14}. Another related line of research is stochastic online packing and convex programming~\citep{FMMS-10,DJSW-11,AD-14,dughmi2021bernoulli}. Finally, another related variant of online bipartite matching is online assortment, both under the adversarial and the Bayesian settings. For example, \citet{CF-09} study a model with non-stationary consumer types (in the language of assortment optimization) and show $\tfrac{1}{2}$ performance guarantee
with respect to the clairvoyant optimum online benchmark. \citet{GNR-14} introduce the ``inventory balancing" algorithm---inspired by the BALANCE algorithm in the seminal work of \cite{MSVV-07} for online ad allocations---and analyze their performance guarantee in the adversarial setting using the primal-dual approach. This analysis is later improved and generalized by \cite{MS-20}, and extended to variants of online assortment by~\cite{FNS-19,goyal2020online,feng2021online}.

\vspace{3mm}
\noindent\emph{Prophet inequality.}
The results in two-stage joint matching/pricing (\Cref{sec:matching and pricing}) resembles some aspects of  ex-ante prophet inequalities for single item and matroid environment~\citep{DFKL-17,LS-18,CHMS-10,KW-12,correa2017posted}, prophet inequality matching \citep{AHL-12}, the Magician's problem~\citep{ala-14}, the pricing with static calendar problem~\citep{MSZ-18} (which also studies static assortment policies without reusable resources), and the volunteer crowdsourcing problem~\citep{manshadi2020online}. Similar ex ante relaxation programs have been used for various stochastic online optimization and mechanism design problems. For example, see \cite{AHL-12,FSZ-16,DFKL-17,LS-18,VB-18,MSZ-18,anari2019nearly}. See \cite{correa2019recent} for a survey on recent developments in the literature on prophet inequalities. At high-level, the main ideas of online contention resolution schemes resemble aspects of the technique introduced in \cite{A-07} for approximate dynamic programming. See \cite{SBAPW-04} for a comprehensive survey.

\vspace{3mm}
\noindent \emph{Continuous-time dynamic stochastic matching/queuing with applications to ride-sharing.} There has been a growing line of work on continuous-time dynamic stochastic matching and dispatching models pertaining to ride-sharing and kidney exchange~\cite[e.g.,][]{ashlagi2018effect,akbarpour2020thickness,ata2020dynamic,hu2021dynamic}. In particular, in the heavy-traffic regime,  \cite{gurvich2015dynamic} study a matching control problem with the objective of minimizing the finite-horizon holding costs. \cite{ozkan2020dynamic} study a dynamic stochastic matching problem with heterogeneous supply and demand types, where demand nodes should be matched immediately, but supply nodes can wait for matching until they depart. More recently, \cite{aveklouris2021matching} study the same bipartite model where both sides can arrive and renege. One takeaway of this line of work is that in the high-traffic regime, reviewing-based batched matching policies can obtain asymptotic near-optimal performance. In the small traffic regime, \cite{aouad2020dynamic} study a continuous time dynamic stochastic matching model where supply and demand have type dependant weights/costs and their arrivals and abandonments are stochastic and heterogeneous with given rates. While our work conceptually resembles some aspects in these work (e.g., the idea of batching improving the performance of matching policy is seen in \citealp{aveklouris2021matching}), it diverges from this line of work on several fronts. In particular, in our two-stage discrete-time model the supply is static, the second stage graph in most of our analysis is adversarial (or can be realized from an arbitrary correlated distribution over possible graphs), and we consider both optimum offline and online benchmarks when supply side has vertex weights. Moreover, our proposed algorithms either do not need any distributional knowledge, or only rely on simulating the second stage graph to compute their matchings in the first stage.


 \section{Missing Proofs of Section~\ref{sec:matching}}
 \label{app:sec3}

\subsection{Omitted Proof for Proposition~\ref{prop:optoffline-upper}}
\thmoptofflineupper*
\label{apx:optoffline-upper}
\begin{proof}{\emph{Proof.}}
Consider an example where $\lvert\Ra\rvert=n$, $\lvert \Driver\rvert=2n$, and each demand vertex in $\Ra$ has an edge to each supply vertex in $\Driver$. Moreover, $\lvert\Rp\rvert=2n$, and each demand vertex in $\Rp$ has exactly one edge to one supply vertex in $D$. Let $\pi_i=1/2$ for each demand vertex $i\in\Rp$. Due to the symmetry, a policy that matches $\Ra$ to any subset of size $n$ of $2n$ supply vertices is optimal. Such a policy has a matching size equal to $n$ in the first stage. Moreover, in the second stage, it matches each of the remaining $n$ supply vertices to their neighbouring demand vertices in $\Rp$ once the neighbouring demand vertex is available, which happens w.p.\ $\tfrac{1}{2}$. Therefore, the expected size of second stage matching is $\tfrac{n}{2}$, and in total the expected size of the matching is $\tfrac{3n}{2}$.

Now, note that $\expect{\lvert\Rpa\rvert}=n$. For any $\epsilon>0$, $\lvert \Rpa \rvert\in [n(1- \epsilon),n(1+\epsilon)]$  with probability at least $1-2\exp(-\tfrac{\epsilon^2 n}{3})$ by applying Chernoff bound. The omniscient policy knows \emph{exactly} which vertices are in $\Rpa$ (and hence which subset of $2n$ supply vertices should be left unmatched for the second stage). Therefore, it obtains a matching of size at least $n+n(1-\epsilon)$ when $\lvert \Rpa \rvert\in [n(1-\epsilon),n(1+\epsilon)]$. The expected size of the final matching picked by omniscient is at least $(2n-\epsilon n)\left(1-2\exp(-\tfrac{\epsilon^2 n}{3})\right)$. By setting $\epsilon=\sqrt{\tfrac{3\log n}{n}}$, the competitive  ratio is at most:
$$
\frac{\tfrac{3n}{2}}{(2n-\sqrt{3n\log n})(1-\tfrac{2}{n})}=\tfrac{3}{4}+o(1)~,
$$
which finishes the proof.\hfill\Halmos
\end{proof}

\subsection{Omitted Proof for Lemma~\ref{lemma:structure}}
\label{apx:structure}
In this subsection, we prove  \Cref{lemma:structure} using convex programming techniques. The key idea to show the structural decomposition of \Cref{lemma:structure} is employing the standard \emph{Lagrangian duality} for the convex program \ref{eq:matching-convex-program}, and then identifying properties of the optimal primal/dual solutions by using \emph{Karush–Kuhn–Tucker (KKT)} conditions (for more details on convex duality, see ~\cite{BV-04}). 

\structural*

\begin{definition}
\label{def:lagrange}
The \emph{Lagrangian dual} $\Lag^g(.)$ for the convex program \ref{eq:matching-convex-program} is defined as:
\begin{align*}
    \Lag^g(\xbf,\lambdabf,\thetabf,\gammabf)\triangleq& \displaystyle\sum_{j\in \Driver}\frac{1}{w_j}g\left(w_j(1-\sum_{i\in \Ra}x_{ij})\right)+\sum_{i\in\Ra}\lambda_i\left(\sum_{j\in N(i)}x_{ij}-1\right)+\sum_{j\in \Driver}\theta_j\left(\sum_{i\in N(j)}x_{ij}-1\right)
    \\
    &-\sum_{i\in\Ra,j\in \Driver:(i,j)\in E}\gamma_{ij}x_{ij}~~,
\end{align*}
where $\{\lambda_i\}$, $\{\theta_j\}$, and $\{\gamma_{ij}\}$ are the \emph{Lagrange multipliers}, also known as the \emph{dual variables}, corresponding to different constraints of \ref{eq:matching-convex-program}.
\end{definition}
\begin{proposition}[\emph{Strong duality/KKT conditions}]
\label{prop:kkt}
Suppose $\xbf^*$ is the optimal solution of the convex program \ref{eq:matching-convex-program}. Consider the Lagrangian dual $\Lag^g$, as in \Cref{def:lagrange}. Then:
\begin{equation*}
    \underset{\xbf}{\min}~\underset{\lambdabf,\thetabf,\gammabf}{\max}~\Lag^g(\xbf,\lambdabf,\thetabf,\gammabf)= \underset{\lambdabf,\thetabf,\gammabf}{\max}~\underset{\xbf}{\min}~\Lag^g(\xbf,\lambdabf,\thetabf,\gammabf)=\underset{\lambdabf,\thetabf,\gammabf}{\max}~\Lag^g(\xbf^*,\lambdabf,\thetabf,\gammabf)~~.
\end{equation*}
Moreover, if $\lambdabf^*$, $\thetabf^*$, and $\gammabf^*$ (together with $\xbf^*)$ are the solutions of the above minmax/maxmin, then:
\begin{itemize}
    \item (Stationarity)~~$\forall i\in\Ra,j\in \Driver,(i,j)\in E:~~\frac{\partial \Lag^g(\xbf^*,\lambdabf^*,\thetabf^*,\gammabf^*)}{\partial x_{ij}}=0,$
    \item (Dual feasibility)~~$\forall i\in\Ra,j\in \Driver,(i,j)\in E:~\lambda^*_i\geq 0,\theta^*_j\geq 0$, and $\gamma^*_{ij}\geq 0$~,
    \item (Complementary slackness)~~$\forall i\in\Ra,j\in \Driver,(i,j)\in E:$ 
    \begin{itemize}
        \item  $\gamma^*_{ij}=0$ or $\left(\gamma^*_{ij}>0~\textrm{and}~ x^*_{ij}=0\right)~,$
        \item $\lambda^*_{i}=0$ or $\left(\lambda^*_{i}>0~\textrm{and}~ \sum_{j\in N(i)}x^*_{ij}=1\right)~,$
        \item  $\theta^*_{j}=0$ or $\left(\theta^*_{j}>0~\textrm{and}~ \sum_{i\in N(j)}x^*_{ij}=1\right)~.$
    \end{itemize}
\end{itemize}
\end{proposition}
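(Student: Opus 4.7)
The plan is to recognize this as a textbook application of Lagrangian duality for convex programs with affine constraints (see Chapter~5 of \cite{BV-04}); the proposition contains no substantively new content beyond verifying that the standard hypotheses apply to \ref{eq:matching-convex-program}.

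First, I would verify convexity of \ref{eq:matching-convex-program}. Since $g$ is convex and each argument $w_j(1-\sum_{i\in N(j)}x_{ij})$ is affine in $\xbf$, every summand $\tfrac{1}{w_j}g(\cdot)$ is convex, and a non-negative combination of convex functions is convex. All constraints are affine, and the zero vector is feasible. Because all constraints are affine, the refined Slater condition is automatic: strong duality holds whenever the feasible set is non-empty, which is immediate here.

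Second, I would establish the two equalities. The standard observation is that $\max_{(\lambdabf,\thetabf,\gammabf)\geq 0}\Lag^g(\xbf,\lambdabf,\thetabf,\gammabf)$ equals the primal objective when $\xbf$ is feasible and $+\infty$ otherwise, so $\min_{\xbf}\max_{(\lambdabf,\thetabf,\gammabf)\ge 0}\Lag^g$ recovers the primal optimum. Weak duality always gives $\min_{\xbf}\max \Lag^g \geq \max\min_{\xbf}\Lag^g$; the refined Slater condition upgrades this to the equality stated in the proposition. The second equality, that $\max_{(\lambdabf,\thetabf,\gammabf)\geq 0}\min_{\xbf}\Lag^g$ agrees with $\max_{(\lambdabf,\thetabf,\gammabf)\ge 0}\Lag^g(\xbf^*,\cdot)$, follows because the zero duality gap forces $\xbf^*$ to attain $\min_{\xbf}\Lag^g(\xbf,\lambdabf^*,\thetabf^*,\gammabf^*)$ for every dual-optimal $(\lambdabf^*,\thetabf^*,\gammabf^*)$.

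Third, the three KKT conditions drop out immediately once the equalities above are in hand. Stationarity is the first-order optimality condition for $\xbf^*$ minimizing the smooth convex function $\Lag^g(\cdot,\lambdabf^*,\thetabf^*,\gammabf^*)$, with smoothness in $\xbf$ inherited from the differentiability of $g$. Dual feasibility is built into the domain of the outer maximization, since $(\lambdabf,\thetabf,\gammabf)$ are multipliers for $\leq$-constraints. Complementary slackness falls out of the zero duality gap: writing the primal objective as $\Lag^g(\xbf^*,\lambdabf^*,\thetabf^*,\gammabf^*)$ minus the contribution of the multiplier-times-slack terms, each term $\lambda_i^*(\sum_{j\in N(i)} x^*_{ij}-1)$, $\theta_j^*(\sum_{i\in N(j)} x^*_{ij}-1)$, and $-\gamma_{ij}^* x^*_{ij}$ is the product of a non-negative multiplier and a non-positive slack, and their sum equals zero, so each term must vanish. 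No step presents a genuine obstacle; the only routine check is the non-emptiness of the feasible region, which holds since $\xbf=\mathbf{0}$ is trivially feasible.
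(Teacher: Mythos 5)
Your proposal is correct and coincides with the paper's treatment: the paper states Proposition~\ref{prop:kkt} without proof, deferring to standard Lagrangian duality/KKT theory in \cite{BV-04}, and your writeup is exactly the routine verification that those standard hypotheses apply (convex objective as $g$ composed with affine maps, affine constraints, non-empty feasible set giving the refined Slater condition, and the zero duality gap yielding stationarity and complementary slackness).
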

With the above ingredients, we prove the structural lemma.
\begin{proof}{\emph{Proof of \Cref{lemma:structure}.}} To show the uniformity property,  first note that for every $j\in \Driver^{(0)}$, $w_j(1-\sum_{i\in N(j)}x^*_{ij})=0=c^{(0)}$. 
Second, for $l\geq1$, as $G'[\Rideri,\Driveri]$ is a connected subgraph,
it is sufficient to consider 
$j,j'\in\Driver^{(l)}$ 
where
there exists a demand vertex $i\in\Rider^{(l)}$, such that $x^*_{ij}>0$ and $x^*_{ij'}>0$. Therefore $\gamma^*_{ij}=\gamma^*_{ij'}=0$ due to the complementary slackness in \Cref{prop:kkt}. Also, $\theta^*_j=\theta^*_{j'}=0$, again because of complementary slackness and the fact that $\sum_{i\in N(j)}x^*_{ij}<1$ and $\sum_{i\in N(j')}x^*_{ij'}<1$.
Now, applying the stationarity condition in \Cref{prop:kkt}, we have:
$$
-\tfrac{1}{w_j}\cdot w_jg'\left(w_j(1-\sum_{i\in N(j)}x^*_{ij})\right)+\lambda^*_i+\theta^*_j-\gamma^*_{ij}=-\tfrac{1}{w_{j'}}\cdot w_{j'}g'\left(w_{j'}(1-\sum_{i\in N(j')}x^*_{ij'})\right)+\lambda^*_i+\theta^*_{j'}-\gamma^*_{ij'}=0~,
$$
and therefore $g'\left(w_j(1-\sum_{i\in N(j)}x^*_{ij})\right)=g'\left(w_{j'}(1-\sum_{i\in N(j')}x^*_{ij'})\right)$. Because $g'(.)$ is strictly increasing (as $g(.)$ is strictly convex), $w_j(1-\sum_{i\in N(j)}x^*_{ij})=w_{j'}(1-\sum_{i\in N(j')}x^*_{ij'})$, finishing the proof of the uniformity property. Also note that $\lambda^*_i=g'\left(w_j(1-\sum_{i\in N(j)}x^*_{ij})\right)>0$, as $g(.)$ is non-decreasing and $g'(.)$ is strictly increasing. Again, by using complementary slackness, we have $\sum_{j\in N(i)}x^*_{ij}=1$, which proves the saturation property.

Now suppose there exists an edge $(i,j)\in E$, where $i\in \Rider^{(l)}$ and $j\in \Driver^{(l')}$, for $l,l'\in\{0,1,\dots,L\}, l\neq l'$. First note that there should exist a supply vertex $j'\in\Driver^{(l)}$ so that $x^*_{ij'}>0$ (because $i$ is either fully matched by $\xbf^*$ if $l\geq 1$, thanks to the saturation property, or $l=0$ and $i$ is connected to a supply vertex  $j'\in\Driver^{(0)}$ with $x^*_{ij'}>0$ by definition). Because of the complementary slackness, we have  $\gamma^*_{ij'}=0$. Now, by writing the stationarity condition of \Cref{prop:kkt} for the edge $(i,j)$, and noting that $w_j(1-\sum_{i'\in N(j)}x^*_{i'j})=c^{(l')}$ because of the uniformity property, we have:
$$
-g'(c^{(l')})+\lambda^*_i+\theta^*_j-\gamma^*_{ij}=0~.
$$
By writing the same condition for the edge $(i,j')$ we have: 
$$
-g'(c^{(l)})+\lambda^*_i+\theta^*_{j'}=0~,
$$
and therefore, 
$$
-g'(c^{(l)})\leq -g'(c^{(l)})+\theta^*_{j'}=-g'(c^{(l')})+\theta^*_j-\gamma^*_{ij}\leq -g'(c^{(l')})+\theta^*_j~.
$$
If $l'=0$, then clearly $c^{(l)}\geq 0= c^{(l')}$. If $l'>0$, then $\sum_{i'\in N(j)}x^*_{i'j}<1$ and $\theta^*_j=0$ because of complementary slackness. Therefore, $g'(c^{(l)})\geq g'(c^{(l')})$. This implies $c^{(l)}\geq c^{(l')}$ because of the convexity of $g(.)$, which finishes the proof of the monotonicity property.\hfill\Halmos
\end{proof}


\subsection{Omitted Proof for Theorem~\ref{thm:no-fptas}}

 In this subsection, we show \Cref{thm:no-fptas} by
 a reduction from the \emph{max $k$-cover} problem, for which the hardness of approximations is known~\citep{fei-98}.

 \begin{definition}[\emph{max $k$-cover}]
 \label{def:max-k-cover}
 Given a collection $\Cvector=\{C_1,C_2,\ldots,C_m\}$ of subsets of $\{1,\ldots,n\}$, \emph{max $k$-cover} is the problem of finding $k$ subsets from $\Cvector$, such that their union has maximum cardinality.  
 \end{definition}
\begin{proposition}[\citealp{fei-98}, Theorem~5.3]
\label{prop:feige-paper}
 For any $\epsilon>0$, max $k$-cover cannot be approximated in polynomial time within a ratio of $(1-\tfrac{1}{e}+\epsilon)$, unless $P=NP$.
\end{proposition}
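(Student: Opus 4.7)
The plan is to follow Feige's original approach, which reduces from a PCP-based label cover problem to max $k$-cover via a combinatorial gadget known as a \emph{partition system}. The proof combines three ingredients: (i) the PCP theorem with parallel repetition, (ii) a probabilistic or explicit construction of partition systems, and (iii) a careful reduction combining them.

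First, I would invoke the PCP theorem together with Raz's parallel repetition theorem to obtain, for any $\delta > 0$, the NP-hardness of the following gap version of label cover: given a bipartite graph $H = (U, V, E)$, label sets $L_U, L_V$, and projection constraints $\pi_e : L_V \to L_U$ for $e \in E$, distinguish between the YES case where some labeling satisfies all constraints, and the NO case where every labeling satisfies at most a $\delta$ fraction of constraints. By choosing $\delta$ small enough, the soundness gap can be made arbitrarily small.

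Second, I would construct, for parameters $(m, L, k)$, a partition system: a ground set $B$ of $m$ elements equipped with $L$ partitions $\mathcal{P}_1, \dots, \mathcal{P}_L$ of $B$, each into $k$ equal-sized blocks, with the key property that any selection of at most one block from each of $r$ distinct partitions covers at most a $\bigl(1 - (1 - 1/k)^r\bigr)$ fraction of $B$ (up to lower order terms). Taking $r \leq k$, this yields at most a $(1 - 1/e + o(1))$ fraction of $B$ covered. A randomized construction works: assign each element of $B$ to a block of each partition independently and uniformly at random; a union bound over all $\binom{L}{r} k^r$ possible bad selections shows that for $m$ polynomial in $L, k, 1/\epsilon$, the desired property holds with positive probability. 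Derandomization (or explicit constructions using codes) makes this constructive in polynomial time.

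Third, I would combine these into the reduction. Given a label cover instance, the ground set of the max $k$-cover instance is $\bigcup_{u \in U} B_u$, where each $B_u$ is a partition system with one partition $\mathcal{P}_u^{\ell_u}$ per label $\ell_u \in L_U$. For each $(v, \ell_v) \in V \times L_V$, the set $S_{v,\ell_v}$ consists of, for every neighbor $u$ of $v$, a designated block of $\mathcal{P}_u^{\pi_{(u,v)}(\ell_v)}$ indexed so that different $v$'s at $u$ receive different block indices. The cover size is set to $k = |V|$. In the YES case, a satisfying labeling induces a choice of one $S_{v,\ell_v}$ per $v$, and for each $u$ these sets hit all $k$ blocks of a single partition $\mathcal{P}_u^{\ell_u}$ (using the labeling's consistency), covering $B_u$ entirely. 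In the NO case, any $k$ sets that cover significantly more than a $(1-1/e)$ fraction must, at most super-vertices $u$, concentrate their blocks in a single partition of $B_u$ (by the partition system property); this concentration can be decoded into a randomized labeling that satisfies more than a $\delta$ fraction of constraints, contradicting soundness.

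The main obstacle is the quantitative soundness analysis: carefully tuning $(m, L, k, \delta)$ so that the additive losses from the partition system's approximation, from the averaging argument extracting a labeling, and from the parallel-repetition soundness all compose to yield exactly the factor $(1 - 1/e + \epsilon)$. This involves bounding the expected coverage of $B_u$ conditional on the distribution of blocks hit, and then converting a deviation of $\epsilon$ above $(1-1/e)|B_u|$ into a corresponding $\mathrm{poly}(\epsilon)$ lower bound on the fraction of satisfied label cover constraints; the parameters are then chosen to make this larger than $\delta$.
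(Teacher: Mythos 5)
This proposition is not proved in the paper at all: it is imported verbatim as an external result, namely Theorem~5.3 of \citet{fei-98}, and the paper only \emph{uses} it (inside the reduction establishing \Cref{thm:no-fptas}). So there is no in-paper proof to compare your argument against; what you have written is a reconstruction of Feige's own proof, and as such it follows the standard route: hardness of a gap version of a multi-prover/label-cover problem via PCP and parallel repetition, partition systems whose blocks can only be covered up to a $1-(1-1/k)^r$ fraction unless they are drawn from a single partition, and a reduction in which consistent labelings cover everything while soundness forces coverage at most $1-1/e+o(1)$. The outline is faithful in its three ingredients, with one caveat worth flagging: obtaining the \emph{tight} constant $1-1/e$ is exactly the delicate part you defer to "quantitative soundness analysis." Feige's actual proof of Theorem~5.3 routes through his $k$-prover proof system (the same machinery as his $(1-o(1))\ln n$ set-cover hardness), not a plain two-prover projection game; with the two-prover label cover you describe, the number of sets meeting a given partition-system component $B_u$ is governed by $\deg(u)$ rather than by $k$, and making the averaging/decoding argument yield exactly $1-1/e+\epsilon$ (rather than some weaker constant) is where the multi-prover structure is genuinely used. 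Since the paper treats the proposition as a black box, none of this affects the paper's results, but if you intend your sketch as a self-contained proof, that parameter-tuning step is a real gap rather than a routine calculation.
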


We use \Cref{prop:feige-paper} to prove the following theorem. 

\label{apx:no-fptas}
\thmfptas*
\begin{proof}{\emph{Proof.}} 
Assume an FPTAS for computing $\opton$ exists, that is, a $(1-\epsilon)$ approximation algorithm for any $\epsilon>0$ with running time polynomial in $\lvert \Driver \rvert$, $\lvert \Rider \rvert$, and $\tfrac{1}{\epsilon}$. Without loss of generality, assume this algorithm picks a maximum unweighted matching in the first stage. By setting an appropriate $\epsilon$, we use this algorithm as a subroutine to obtain a $\gamma$-approximation algorithm for max $k$-cover for $\gamma>1-\tfrac{1}{e}$. \Cref{prop:feige-paper} (Theorem 5.3 in \cite{fei-98}) then implies that P=NP. 

Given an instance of max $k$-cover as in \Cref{def:max-k-cover}, consider an instance of our problem where $\lvert\Ra\rvert=m-k$, $\Rp=\{1,\ldots,n\}$, and $\Driver=\{1,\ldots,m\}$. Add one edge for each pair of vertices in $\Ra\times \Driver$. Also, add an edge $(i,j)\in\Rp\times \Driver$ iff $i\in C_j$ for all $i,j$. Let $\pi_i=\delta, i\in\Rp$ for some $\delta\in(0,\tfrac{1}{2})$, which will be fixed later in the proof. Finally, let all the supply vertices have unit weights. Now, let $i\in\Rp: I_i\sim \textrm{Bernoulli}(\delta)$ be independent Bernoulli random variables indicating whether each demand vertex in $\Rp$ is available in the second stage or not. Then: 
\begin{align}
    f^\wbf(T)&=\expect[\Rpa]{\tilde\rho^\wbf(T)}=\expect[\Rpa]{\tilde\rho^\wbf(T)\mathbbm{1}\left\{\exists i: I_i=1\right\}}\nonumber\\
    &=\delta(1-\delta)^{n-1}\sum_i\expect[\Rpa]{\tilde\rho^\wbf(T)|I_i=1, I_j=0,j\neq i}+ \binom{n}{2}\delta^2(1-\delta)^{n-2}\expect[\Rpa]{\tilde\rho^\wbf(T)|\sum_i I_i>1}\nonumber\\
    \label{eq:cover-function}
    &\leq \delta(1-\delta)^{n-1}\left(\sum_i\expect[\Rpa]{\tilde\rho^\wbf(T)|I_i=1, I_j=0,j\neq i}+ n^3\delta\right)~,
\end{align}
where the last inequality holds as $\tilde \rho^\wbf(T)\leq n$ and $\delta\leq \tfrac{1}{2}$. Note that
$$
\expect[\Rpa]{\tilde\rho^\wbf(T)|I_i=1, I_j=0,j\neq i}=\mathbbm{1}\left\{i\in\bigcup_{j\in T} C_j \right\}~,
$$
and therefore inequality~\ref{eq:cover-function} can be rewritten as
\begin{equation}
\label{eq:upper-lower-cover}
\forall T\subseteq\{1,\ldots,m\}:~~~  \delta(1-\delta)^{n-1}\left\lvert\bigcup_{j\in T} C_j\right\rvert  \leq f^\wbf(T)\leq \delta(1-\delta)^{n-1}\left(\left\lvert\bigcup_{j\in T} C_j\right\rvert+n^3\delta\right)~.
\end{equation}
Suppose $\texttt{OPT}$ is the optimal value of max $k$-cover, i.e., $\texttt{OPT}=\underset{T\subseteq \{1,\ldots,m\}:\lvert T\rvert=k}\max~\left\lvert\bigcup_{j\in T} C_j\right\rvert$, and $\hat T$ is the set of unmatched supply vertices by our FPTAS algorithm for some $\epsilon$ (will be fixed later). This algorithm matches all demand vertices in $\Ra$ to a subset of $m-k$ supply vertices in $D$ during the first stage, and therefore $\lvert \hat T \rvert=k$. By applying \Cref{prop:opt-online-characterization},  we have:
\begin{equation}
\label{eq:fptas}
   \forall T,\lvert T\rvert=k: f^\wbf(\hat{T})+m-k\geq (1-\epsilon)\opton\geq  (1-\epsilon)f^{\wbf}(T)+(1-\epsilon)(m-k)~.
\end{equation}
Combining \cref{eq:upper-lower-cover} and \cref{eq:fptas}, $\forall T:\lvert T\rvert=k$ we have:
\begin{equation}
\delta(1-\delta)^{n-1}\left(\left\lvert\bigcup_{j\in \hat T} C_j\right\rvert+n^3\delta\right)+m-k\geq (1-\epsilon)\delta(1-\delta)^{n-1}\left\lvert\bigcup_{j\in T} C_j\right\rvert+(1-\epsilon)(m-k)~.
\end{equation}
By focusing on the set achieving $\texttt{OPT}$ in the max $k$-cover instance, and rearranging the terms, we have:
\begin{equation}
    \left\lvert\bigcup_{j\in \hat T} C_j\right\rvert\geq (1-\epsilon)\texttt{OPT}-n^3\delta-\frac{\epsilon}{\delta(1-\delta)^{n-1}}(m-k)>\left(1-\epsilon -n^3\delta -\frac{m\epsilon}{\delta(1-\delta)^{n-1}}\right)\texttt{OPT}~,
\end{equation}
where the last inequality holds as $\texttt{OPT}\geq 1$. Finally, set $\delta=\tfrac{c}{n^3}, \epsilon=\tfrac{c^2}{mn^3}$ for an arbitrary small constant $c>0$. Using standard algebraic bounds we get:
\begin{equation*}
\left(1-\epsilon -n^3\delta -\frac{m\epsilon}{\delta(1-\delta)^{n-1}}\right)\texttt{OPT}>\left(1-c^2-c-2c\right)\texttt{OPT}.
\end{equation*}
So, the algorithm outputting $\hat{T}$ has an approximation ratio  strictly greater than $(1-\tfrac{1}{e})$ in the max $k$-cover (for small enough constant $c$), while running time is still polynomial in $m$ and $n$.\hfill\Halmos
\end{proof}

 \revcolor{
\section{Optimal Solution in
Convex Program~\ref{eq:matching-convex-program}}
\label{apx:identical solution for different}

In \Cref{thm:opt-offline-comp-ratio},
we show that for any 
differentiable, monotone increasing 
and strictly convex function $g(\cdot)$,
by solving 
the convex program~\ref{eq:matching-convex-program} with $g(\cdot)$
in the objective function,
\Cref{alg:weighted-skeleton}
obtains an optimal competitive ratio against
the optimum offline.
One natural question 
is how to choose this function $g(\cdot)$.
From a computational point of view we may prefer
strongly convex, 
smooth convex functions $g$,
since 
iterative methods for solving convex optimization (e.g., first-order methods)
have generally faster convergence rates
in that case for 
solving 
the 
convex 
program~\ref{eq:matching-convex-program}.
However, as we will illustrate in this section,
on the theoretical side,
there is no difference as 
all such choices of the function $g$
lead to the same optimal solution 
in the convex program~\ref{eq:matching-convex-program}.

\begin{theorem}
\label{thm:identical solution for different g}
Fix two arbitrary
differentiable,
monotone increasing and strictly convex functions $\hat g(\cdot)$ and $g(\cdot)$.
If $\{x_{ij}^*\}$
is an optimal solution
in the convex program~\ref{eq:matching-convex-program}
with function $\hat g(\cdot)$,
$\{x_{ij}^*\}$
is also an optimal solution
in the convex program~\ref{eq:matching-convex-program}
with function $g(\cdot)$.
\end{theorem}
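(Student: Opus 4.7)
The plan is to give a direct first-order convexity argument showing that the vertex marginals $y^*_\driver := \sum_{i \in N(\driver)} x^*_{i\driver}$ of any optimum $\mathbf{x}^*$ for $\hat g$ also minimize $\sum_{\driver} \frac{1}{w_\driver} g(w_\driver(1-y_\driver))$ over the feasible set of marginals. Since the objective of program~\ref{eq:matching-convex-program} depends on $\mathbf{x}$ only through $\{y_\driver\}$, this would imply that $\mathbf{x}^*$ is optimal for $g$ as well. First I would apply \Cref{lemma:structure} to $\mathbf{x}^*$ to obtain the decomposition $\{(\Rider^{(l)}, \Driver^{(l)})\}_{l=0}^L$ with values $0 = c^{(0)} < c^{(1)} < \cdots < c^{(L)}$, and, by the uniformity property, $y^*_\driver = 1 - c^{(l)}/w_\driver$ for $\driver \in \Driver^{(l)}$.

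For any alternative feasible $\mathbf{x}$ with marginals $\{y_\driver\}$, let $\Delta_\driver := y_\driver - y^*_\driver$. Convexity of $g$ yields
$$\sum_{\driver \in \Driver}\frac{1}{w_\driver} \bigl[ g(w_\driver(1-y_\driver)) - g(w_\driver(1-y^*_\driver)) \bigr] \;\geq\; -\sum_{\driver \in \Driver} g'(c^{(l(\driver))})\,\Delta_\driver \;=\; -\sum_{l=0}^L g'(c^{(l)}) \left(\sum_{\driver \in \Driver^{(l)}} \Delta_\driver\right),$$
where $l(\driver)$ denotes the component index of $\driver$. So it suffices to show that $\sum_{l=0}^L g'(c^{(l)}) \sum_{\driver \in \Driver^{(l)}} \Delta_\driver \leq 0$. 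Since $g'$ is nonnegative and nondecreasing (as $g$ is monotone increasing and convex) and $(c^{(l)})$ is strictly increasing, Abel summation reduces this to the \emph{level-set inequalities}
$$\sum_{\driver \in \Driver_{\geq l}} y_\driver \;\leq\; \sum_{\driver \in \Driver_{\geq l}} y^*_\driver \quad \text{for every } l \in [0{:}L],$$
where $\Driver_{\geq l} := \bigcup_{l' \geq l}\Driver^{(l')}$.

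For $l \geq 1$, the monotonicity property of \Cref{lemma:structure} ensures that every edge in $E$ incident to $\Driver_{\geq l}$ has its demand endpoint in $\Rider_{\geq l} := \bigcup_{l' \geq l} \Rider^{(l')}$, whence $\sum_{\driver \in \Driver_{\geq l}} y_\driver \leq |\Rider_{\geq l}|$ by feasibility, while the saturation property together with the component structure (each $\Rider^{(l')}$ is matched by $\mathbf{x}^*$ entirely into $\Driver^{(l')}$) gives $\sum_{\driver \in \Driver_{\geq l}} y^*_\driver = |\Rider_{\geq l}|$, closing the case. The main obstacle will be the case $l = 0$, which reduces to $\sum_\driver y_\driver \leq |\Driver^{(0)}| + |\Rider_{\geq 1}| = \sum_\driver y^*_\driver$. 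To establish this I would invoke K\"onig's theorem by exhibiting $C := \Driver^{(0)} \cup \Rider_{\geq 1}$ as a vertex cover of $G[\Ra, \Driver]$, whose cardinality equals the target upper bound. Checking that $C$ covers every edge of $E$ reduces to ruling out edges from the ``isolated'' vertices in $\Ra \setminus (\Rider^{(0)} \cup \Rider_{\geq 1})$---those completely unmatched by $\mathbf{x}^*$---to $\Driver \setminus \Driver^{(0)}$. This follows from a short perturbation argument using strict monotonicity of $\hat g$: any such edge would admit a feasible local perturbation of $\mathbf{x}^*$ that strictly decreases the $\hat g$-objective, contradicting optimality.
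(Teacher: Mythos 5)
Your proposal is correct, but it proves the theorem by a genuinely different route than the paper. The paper's proof is dual: it takes the decomposition of \Cref{lemma:structure} produced by $\hat g$ and writes down explicit Lagrange multipliers $\lambda_i^*=g'(c^{(l)})\indicator{l\neq 0}$, $\theta_j^*=g'(c^{(l)})\indicator{l=0}$, $\gamma_{ij}^*=g'(c^{(l)})\indicator{l\neq 0}-g'(c^{(l')})\indicator{l'\neq 0}$ for the $g$-program, then checks dual feasibility, complementary slackness, and stationarity and invokes \Cref{prop:kkt}. Your proof is primal: the first-order convexity bound turns the comparison of objectives into the single linear inequality $\sum_l g'(c^{(l)})\sum_{j\in\Driver^{(l)}}\Delta_j\le 0$, and Abel summation (valid since $g'\ge 0$ is nondecreasing and the $c^{(l)}$ are increasing) reduces this to the level-set inequalities $\sum_{j\in\Driver_{\ge l}}y_j\le\sum_{j\in\Driver_{\ge l}}y^*_j$, which you verify combinatorially from the monotonicity and saturation properties plus a vertex-cover/K\"onig bound for $l=0$. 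The two approaches rely on the same structural input, but yours buys a cleaner conceptual statement --- the optimal marginals $\{y^*_j\}$ are exactly those whose ``upper level sets'' saturate the combinatorial bounds, which is a $g$-free characterization of the optimizer --- at the cost of some extra bookkeeping; the paper's KKT verification is shorter once the multipliers are guessed. One small presentational point: for $l\ge 1$ you attribute the containment of edges incident to $\Driver_{\ge l}$ entirely to the monotonicity property, but monotonicity only constrains demand endpoints lying in some $\Rider^{(l'')}$; the completely unmatched demand vertices must also be excluded there, by the same perturbation argument (strict monotonicity of $\hat g$ forbids an unmatched demand vertex from having an edge to an unsaturated supply vertex) that you correctly invoke for the $l=0$ case. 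With that argument stated once up front, all level sets are handled uniformly and the proof is complete.
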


\begin{proof}{\emph{Proof.}} 
It is sufficient to construct 
a dual assignment 
$\{\lambda_i^*,
\theta_j^*, 
\gamma_{ij}^*\}$
for Lagrangian dual $\Lag^{g}$
that 
satisfies the KKT condition
with $\{x_{ij}^*\}$. 

Let $\{(\Rideri, \Driveri)\}_{\mspair = 0}^{ 
\totalmspair}$ be the structural decomposition
(\Cref{lemma:structure})
under $\{x_{ij}^*\}$ (the pairs of the decomposition are indexed so that $0=\mspairc^{(0)}<\mspairc^{(1)}<\ldots<
\mspairc^{(\totalmspair)}$).
Consider the following dual assignment:
\begin{align*}
\forall
\mspair\in[0:\totalmspair],
\forall
i\in \Rideri:\qquad&
\lambda_i^* \leftarrow
g'(\mspairci)
\cdot \indicator{\mspair \not = 0}
\\
\forall
\mspair\in[0:\totalmspair],
\forall j\in\Driveri:\qquad&
\theta_j^* \leftarrow
g'(\mspairci)
\cdot \indicator{\mspair  = 0}
\\
\forall \mspair\in[0:\totalmspair],
\forall \mspair'\in[0:\mspair],
\forall i\in\Rideri, j \in \Driver^{(\mspair')}, (i,j)\in E:\qquad&
\gamma_{ij} = g'(\mspairci)
\cdot \indicator{\mspair \not = 0} - 
g'(\mspairc^{(l')})
\cdot \indicator{\mspair' \not = 0}
\end{align*}
Note that the ``Monotonicity'' property 
in \Cref{lemma:structure}
ensures that there does not exists edges 
from $\Rideri$ to $\Driver^{(\mspair')}$
where $\mspair' > \mspair$. Hence,
these dual assignments (in particular $\{\gamma_{ij}^*\}$) are well-defined.

First, we consider the dual feasibility.
Since function $g(\cdot)$ 
is monotone increasing, 
$\{\lambda_i^*,\theta_j^*\}$
is non-negative. 
Since $\mspairci \geq \mspairc^{(\mspair')}$
for all $\mspair \geq \mspair'$
and function $g'(\cdot)$
is monotone increasing, 
$\{\gamma_{ij}^*\}$
is non-negative.

Next, we consider the complementary slackness.
The ``Saturation'' property in \Cref{lemma:structure}
ensures the complementary slackness
for $\{\lambda_i^*\}$ for all 
$i\in
\bigcup_{\mspair\in[\totalmspair]}\Rideri$.
The complementary slackness
for $\{\lambda_i^*\}$ for all 
$i\in\Rider^{(0)}$ holds by construction.
The definition of $\Driver^{(0)}$
ensures
the complementary slackness
for $\{\theta_j^*\}$ for all 
$j\in\Driver^{(0)}$.
The complementary slackness
for $\{\theta_j^*\}$ for all 
$j\in
\bigcup_{\mspair\in[\totalmspair]}\Driveri$
holds by construction.
The definition of the
structural decomposition 
$\{(\Rideri, \Driveri)\}_{\mspair = 0}^{ 
\totalmspair}$
guarantees that $x_{ij}^* = 0$
for all $i\in\Rideri,j\in\Driver^{(\mspair')}$
with $\mspair > \mspair'$.
Hence, the complementary slackness 
for $\{\gamma_{ij}^*\}$ holds.

Finally, we consider the stationarity.
Fix an arbitrary edge $(i, j)\in E$.
Suppose $i\in \Rideri$ and $j\in \Driver^{(\mspair')}$.
We then have
\begin{align*}
    &\frac{\partial\Lag^g(\xbf^*,\lambdabf^*,\thetabf^*,\gammabf^*)}{\partial x_{ij}}\\
    =&
    -\tfrac{1}{w_j}\cdot w_jg'\left(w_j(1-\sum_{i\in N(j)}x^*_{ij})\right)+\lambda^*_i+\theta^*_j-\gamma^*_{ij}
    \\
    =&
    -g'(\mspairc^{(\mspair')})
    +
    g'(\mspairci)
\cdot \indicator{\mspair  \not= 0}
+
g'(\mspairc^{(\mspair')})
\cdot \indicator{\mspair'  = 0}
-(
g'(\mspairci)
\cdot \indicator{\mspair \not = 0} - 
g'(\mspairc^{(l')})
\cdot \indicator{\mspair' \not = 0}
)
\\
=&~0
\end{align*}
Invoking KKT condition (\Cref{prop:kkt})
finishes the proof.
\hfill\halmos
\end{proof}}

 \revcolor{
\section{Robustness of \Cref{alg:weighted-skeleton} to Approximations and Modeling Assumptions}
\label{apx:robustness}

In this section, we study
the robustness of \Cref{alg:weighted-skeleton}
(i.e., how its competitive ratio changes) to various modeling aspects of the two-stage stochastic matching problem. In particular, we consider its robustness to:
 
\begin{itemize}
    \item Replacing the matching algorithm in the second stage with an approximation algorithm instead of an exactly optimal algorithm (Appendix~\ref{apx:robustness-approx}). 
\item Imposing additional structural assumptions on the first-stage graph (Appendix~\ref{apx:robustness-first}).
\end{itemize}

\subsection{Approximately 
optimal matching in the second stage}
\label{apx:robustness-approx}
In this sub-section, we restrict our attention to 
unweighted instances of the two-stage stochastic matching problem, as in \Cref{sec:matching}.
 One of the key ingredients in 
\Cref{alg:weighted-skeleton} 
is the randomized matching $M_1$, which is sampled from the solution of the convex program~\ref{eq:matching-convex-program}.
Using this randomized matching $M_1$
in the first stage,
and by outputting
the maximum matching $M_2$
between $\Rpa$ and the remaining unmatched vertices of $\Driver$ in the second stage,
\Cref{alg:weighted-skeleton} attains 
the competitive ratio $\frac{3}{4}$
(\Cref{thm:opt-offline-comp-ratio}).
One natural question is whether 
a similar competitive ratio guarantee exists 
if the platform chooses a $\secondstageratio$-approximately maximum matching between $\Rpa$ and the remaining unmatched vertices of $\Driver$ in the second stage, where $\beta\in[0,1]$. We summarize such an approach in \Cref{alg:weighted-skeleton approx}.
Without any further assumption on 
the program instance, the competitive ratio of any algorithm that uses a $\beta$-approximately maximum matching in the second stage is at most $\secondstageratio$. To see this, consider an instance when there is no demand vertex in the first stage. Therefore, we impose the following assumption on the problem instances, which roughly speaking places a lower-bound on the of contribution of the first-stage graph to the overall maximum matching. In the rest of this section, starting from this assumption, we fix the parameter $\gamma\in[0,1]$.
\begin{assumption}
\label{asp:matching ratio in first stage}
Let $M_1^*$ denote the maximum matching 
between $\Ra$ and $\Driver$,
and $M^*$ denote the maximum matching 
between $\Ra\cup\Rpa$ and $\Driver$.
For each realization $\Rpa$,
$\frac{|M_1^*|}{|M^*|} \geq \firststagepercent$.
\end{assumption}

\begin{algorithm}
\revcolor{
\begin{algorithmic}[1]
\State{\textbf{input:} bipartite graph $G=(\Rider,\Driver,E)$, convex function $g(\cdot)$~.}
\State{\textbf{output:} bipartite matching $M_1$ in $G[\Ra,\Driver]$,  bipartite matching $M_2$ in $G[\Rpa,\Driver]$.}
\vspace{2mm}
\State{Solve convex program \ref{eq:matching-convex-program} to obtain $\{x^*_{ij}\}$.}
\State{Sample matching $M_1$ with edge marginal probabilities $\{x^*_{ij}\}$.}
\State{In the second stage, return a $\secondstageratio$-approximately
maximum matching $M_2$ between $\Rpa$ and the remaining unmatched vertices of $\Driver$.}
\end{algorithmic}
\caption{\label{alg:weighted-skeleton approx}\textsc{\revcolor{Weighted-Balanced-Utilization 
with $\secondstageratio$-approximation in second stage}}}}
\end{algorithm}

\begin{theorem}
\label{thm:opt-offline-comp-ratio approx second stage}
For any differentiable, monotone increasing and strictly convex function $g:\R\rightarrow \R$,
under \Cref{asp:matching ratio in first stage} with $\firststagepercent\in[0, 1]$,
\Cref{alg:weighted-skeleton approx}
is 
$(\firststagepercent+
\secondstageratio(1-\firststagepercent)^2)$-competitive 
against $\optoff$.
\end{theorem}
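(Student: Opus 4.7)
The plan is to adapt the primal-dual framework of \Cref{thm:opt-offline-comp-ratio} to accommodate the $\secondstageratio$-approximate matching in the second stage, combining it with the structural decomposition of \Cref{lemma:structure} and \Cref{asp:matching ratio in first stage} to recover the improved competitive ratio $\firststagepercent + \secondstageratio(1-\firststagepercent)^2$. The setup: fix an arbitrary realization of $\Rpa$ and let $m = |M^*|$, $m_1 = |M_1^*|$; by assumption $m_1 \geq \firststagepercent m$. Let $\{x_{ij}^*\}$ be the optimal solution of~\ref{eq:matching-convex-program} with supply-side marginals $y_j^* = \sum_i x_{ij}^*$; by the integrality of the bipartite matching polytope and the monotonicity of $g$, we have $\expect[M_1]{|M_1|} = \sum_j y_j^* = m_1$. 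I pick (without loss of generality) a max matching $M^*$ that exhausts the first-stage capacity, so $|M^* \cap E_1| = m_1$ and $T := \Driver(M^* \cap E_2)$ has $|T| = m - m_1$.

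Since $|M_2^{approx}| \geq \secondstageratio |M_2^{opt}|$, I first write
\begin{equation*}
\expect{|M_1| + |M_2^{approx}|} \geq (1-\secondstageratio)\expect{|M_1|} + \secondstageratio \expect{|M_1| + |M_2^{opt}|}.
\end{equation*}
For the second term I lower bound $|M_2^{opt}|$ by the size of the projection of $M^* \cap E_2$ onto the residual second-stage graph $G[\Rpa,\Driver \setminus \Driver_1]$, which is a feasible matching of expected size at least $(m - m_1) - \sum_{j \in T} y_j^*$. Combining and using $m_1 \geq \firststagepercent m$ gives
\begin{equation*}
\expect{|M_1| + |M_2^{approx}| \mid \Rpa} \geq (1-\secondstageratio)\firststagepercent m + \secondstageratio m - \secondstageratio \textstyle\sum_{j\in T} y_j^*.
\end{equation*}
A routine algebraic rearrangement then shows that if the key inequality $\sum_{j \in T} y_j^* \leq m\firststagepercent(1-\firststagepercent)$ holds, the right-hand side equals exactly $m[\firststagepercent + \secondstageratio(1-\firststagepercent)^2]$; averaging over $\Rpa$ and dividing by $\optoff = \expect[\Rpa]{m}$ completes the proof.

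The main obstacle is therefore to establish the key inequality $\sum_{j \in T} y_j^* \leq m\firststagepercent(1-\firststagepercent)$. The argument leans on \Cref{lemma:structure}: the marginal $y_j^* = y^{(l)} = 1 - c^{(l)}$ is constant on each component $\Driver^{(l)}$ and decreasing in $l$ by the monotonicity property; the saturation property together with bipartite integrality force any max matching in $G[\Ra,\Driver]$ to use $|\Rider^{(l)}| = |\Driver^{(l)}| y^{(l)}$ supply vertices from each $\Driver^{(l)}$ with $l \geq 1$ and all of $\Driver^{(0)}$. Consequently $|T \cap \Driver^{(l)}| \leq |\Driver^{(l)}|(1 - y^{(l)})$ with $\sum_l |T \cap \Driver^{(l)}| = m - m_1$, and the quantity $\sum_l y^{(l)} |T \cap \Driver^{(l)}|$ is bounded above by the value of a linear program whose optimum is attained by greedily filling the components with the largest $y^{(l)}$ first. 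Using the balance identity $m_1 = \sum_l |\Driver^{(l)}| y^{(l)}$ together with $|\Driver| \geq m$, a direct case analysis (starting from the single-component case, where the bound reduces to the elementary inequality $y^{(1)} \leq \firststagepercent$) shows this optimum never exceeds $m\firststagepercent(1-\firststagepercent)$, with equality precisely in the balanced one-component instance that already attains the tight adversarial ratio of \Cref{prop:optoffline-upper}.
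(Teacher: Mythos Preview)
Your overall framework---lower-bounding the approximate second-stage matching by $\beta$ times the projection of $M^*\cap E_2$ onto the unmatched supply, and then using the structural decomposition to control that projection---is the same as the paper's. The gap is the key inequality $\sum_{j\in T} y_j^* \leq m\gamma(1-\gamma)$, and in particular the ``elementary inequality $y^{(1)} \leq \gamma$'' you invoke in the single-component case: both are false. Take $L=1$, $\Driver^{(0)}=\emptyset$, $|\Driver^{(1)}|=20$, $|\Rider^{(1)}|=10$ with a complete bipartite first stage, so $y^{(1)}=\tfrac12$ and $m_1=10$; let the second stage have $20$ demand vertices connected to all supply, so $m=20$. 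With parameter $\gamma=0.3$ the hypothesis $m_1/m=0.5\geq 0.3$ is satisfied, yet $y^{(1)}=0.5>0.3=\gamma$, and for any maximum matching $M^*$ with $|M^*\cap E_1|=m_1$ we have $|T|=10$ and $\sum_{j\in T} y_j^*=5>4.2=m\gamma(1-\gamma)$.

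The underlying issue is that you replace $(1-\beta)m_1$ by $(1-\beta)\gamma m$ before bounding the loss term; since $\sum_{j\in T} y_j^*$ scales with the \emph{actual} first-stage size $m_1$ rather than with $\gamma m$, the slack you discard is exactly what would be needed to absorb the excess. If instead you keep $m_1$ and set $\gamma'=m_1/m$, the single-component case does give $\sum_{j\in T} y_j^*\leq m\gamma'(1-\gamma')$ and hence ratio $\geq \gamma'+\beta(1-\gamma')^2$; but $x\mapsto x+\beta(1-x)^2$ is not monotone on $[0,1]$ once $\beta>\tfrac12$, so $\gamma'\geq\gamma$ does not deliver the stated bound from that either. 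The paper avoids this trap by passing to a factor-revealing program that records, for each pair $(\Rider^{(l)},\Driver^{(l)})$, the counts $(n^{(l)},o^{(l)},k^{(l)},m^{(l)})$; after tightening the analogue of Monotonicity-2 to set $k^{(l)}=o^{(l)}-m^{(l)}$, it uses $n^{(l)}\geq o^{(l)}$ to get $c^{(l)}(o^{(l)}-m^{(l)})\geq (o^{(l)}-m^{(l)})^2/o^{(l)}$ and finishes with a single Cauchy--Schwarz step together with the Approximation constraint $\sum_l(o^{(l)}-m^{(l)})=(1-\gamma)\sum_l o^{(l)}$.
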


\begin{remark}
\label{remark:robust approx 1}
Note that there always exists $\underline{\gamma}$, so that for $\gamma\in(\underline{\gamma},1]$ the obtained competitive ratio in \Cref{thm:opt-offline-comp-ratio approx second stage} is strictly larger than $\beta$. As an implication, 
when $\secondstageratio = 1-\frac{1}{e}$
(i.e., the approximation guarantee 
from the RANKING algorithm in \citealp{KVV-90}),
the competitive ratio of
\Cref{alg:weighted-skeleton approx}
is strictly greater than $1-\frac{1}{e}$
for all $\firststagepercent>
\frac{e-2}{e-1}\approx 0.418$. This implies that using \Cref{alg:weighted-skeleton} in the first stage and RANKING in the second stage, one can obtain a competitive ratio strictly larger than $1-\frac{1}{e}$ even in the case where second stage vertices arrive in an online fashion (under the assumption that $\gamma>\tfrac{e-2}{e-1}$). With no assumption on $\gamma$, running RANKING in both stages will result in a weaker competitive ratio of $1-\frac{1}{e}$.
\end{remark}
\begin{remark}
\label{remark:robust approx 2}
As another implication for the case when $\beta=1$, the worst-case competitive ratio is $\tfrac{3}{4}$ and happens at $\gamma=\tfrac{1}{2}$. For any other value of $\gamma$, either closer to $0$ or $1$, the obtained competitive ratio is strictly larger than $\tfrac{3}{4}$, and converges to $1$ as $\gamma$ converges to either $0$ or $1$. This implies that an \emph{unbalanced} instance, in which the first-stage graph either has a very large or a very small contribution to the final maximum matching, is an easier instance for our convex-programming based matching algorithm (\Cref{alg:weighted-skeleton}), and hence we can obtain an improved competitive ratio for such an instance. 
\end{remark}
In the proof of \Cref{thm:opt-offline-comp-ratio approx second stage},
we use a similar technique 
as the one in the proof of \Cref{thm:opt-online-comp-ratio}
--
we use a {factor revealing (FR) program}, that is, a non-linear program whose optimal solution is a lower-bound on the competitive ratio of \Cref{alg:weighted-skeleton approx}
against the optimum offline.

\begin{proof}{\emph{Proof of \Cref{thm:opt-offline-comp-ratio approx second stage}}.}
The proof is done in three major steps:

\newcommand{\led}{^{(\mspair)}}
\newcommand{\msopt}{o}
\newcommand{\msopti}{\msopt\led}
\newcommand{\Driveropt}{O}

\noindent{\textbf{Step 1- writing a factor revealing program:}} Consider the following optimization program, which is parameterized by $\totalmspair\in\N$, and has variables 
$\{(\msdriveri, \msopti,
\mssecondi, \msrideri)\}
_{\mspair\in[\totalmspair]}$:
\begin{align}
    \label{eq:offline approx second stage}
\begin{array}{llll}
    \min\limits_{
\{(\msdriveri, \msopti,
\mssecondi, \msrideri)\}
_{\mspair\in[\totalmspair]}
    }
    &
    \frac{
    \MS
    }{
    \OPT
    }
    & \text{s.t.} & \\
    &
    \mssecondi\leq \msopti \leq \msdriveri,
    \ 
    \msrideri \leq \msdriveri
    & 
    \mspair\in [\totalmspair] &\textit{\footnotesize{(Feasibility)}} \\
    ~&~&~&~
    \\
    &
    \displaystyle\sum_{\mspair\in[\totalmspair]}
    (\msopti - \msrideri)
    = 
    (1-\firststagepercent)
    \displaystyle\sum_{\mspair\in[\totalmspair]}
    \msopti
    \ 
    & 
     &\textit{\footnotesize{(Approximation)}} \\
    &
    \mspairci < \mspairc^{(\mspair + 1)}
    &
    \mspair\in [\totalmspair - 1] &  \textit{\footnotesize{(Monotonicity-1)}}
    \\
    &
     \displaystyle\sum_{\mspair' = \mspair}^{\totalmspair}
     (
     \msopt^{(\mspair')} -
    \mssecond^{(\mspair')}
    )
    \leq 
     \displaystyle\sum_{\mspair' = \mspair}^{\totalmspair}
    \msrider^{(\mspair')}
    & 
    \mspair \in [\totalmspair] &  \textit{\footnotesize{(Monotonicity-2)}}
    \\
    &
    \mspairci  
    =
    \frac{\msdriveri - \msrideri}{\msdriveri}
    &
    \mspair \in [\totalmspair] & 
    \\
    ~&~&~&~
    \\
    &
    \msdriveri,\msopti,
    \mssecondi, \msrideri\in \N 
    &
    \mspair\in[\totalmspair] & 
\end{array}
\end{align}
where the auxiliary variables $\MS$
and $\OPT$ are defined as:
\begin{equation}
\label{eq:auxilaryvars approx}
\MS \triangleq
\firststagepercent\sum_{\mspair\in[\totalmspair]}
\msopti
+
\secondstageratio
\sum_{\mspair\in[\totalmspair]}
\mspairci\mssecondi,
~~~~
\OPT  \triangleq
\sum_{\mspair\in[\totalmspair]}\msopti
\end{equation}
Fixing an arbitrary unweighted 
two-stage matching instance
and an arbitrary realization $\Rpa$,
we show how to map it
to a feasible 
solution of program~\eqref{eq:offline approx second stage},
and we show the competitive ratio
of \Cref{alg:weighted-skeleton approx} against 
the optimum offline
in this realized instance 
is at least the objective value of 
the obtained feasible solution of program~\eqref{eq:offline approx second stage}.
The mapping is similar to the one in
the proof of \Cref{lem:online final}.

\emph{The desired mapping.} 
Consider any unweighted 
two-stage stochastic matching instance
$\left\{(\Ra\cup \Rp, \Driver,
E), \{\pi_u\}_{u\in \Rp}\right\}$.
Fix an arbitrary realization $\Rpa\subseteq \Rp$.
Let $\Driveropt \subseteq \Driver$ be 
the subset of
supply vertices matched by the optimum offline policy,
and $\Driversecond\subseteq \Driveropt$
be the subset of
supply vertices matched by the optimum offline policy
in the second stage.
Let $\{(\Rideri, \Driveri)\}_{\mspair = 0}^{ 
\totalmspair}$ be the structural decomposition of this instance as in \Cref{lemma:structure} (the pairs of the decomposition are indexed so that $0=\mspairc^{(0)}<\mspairc^{(1)}<\ldots<
\mspairc^{(\totalmspair)}$).
We now construct the following solution 
for program~\eqref{eq:offline approx second stage}:
\begin{gather*}
\text{for all }
\mspair\in[\totalmspair]:\quad
\msdriveri \leftarrow
\lvert\Driveri\rvert,\quad
\msopti \leftarrow
\lvert\Driveri\cap \Driveropt\rvert,\quad
\mssecondi \leftarrow
\lvert\Driveri \cap \Driversecond\rvert,\quad
\msrideri \leftarrow
\lvert\Rideri\rvert~.
\end{gather*}

\emph{Objective value of the constructed solution.}
We first formulate the competitive ratio of \Cref{alg:weighted-skeleton approx} 
against the optimum offline
policy on the original instance
with the fixed realization $\Rpa$. 
The size of the matching of 
the optimum offline policy is 
\begin{equation}
\label{eq:opt-bound approx}
    \sum_{\mspair\in[0:\totalmspairHat]}
    \lvert\Driveri\cap\Driveropt\rvert 
    =
    \lvert\Driver^{(0)}\rvert
    +\OPT~,
\end{equation}
where $\OPT$ is defined 
in \eqref{eq:auxilaryvars approx}. 
In the above equation, we used the fact that $\Driver^{(0)}\subseteq \Driveropt$, 
simply because vertices in $\Driver^{(0)}$ 
should be matched in the optimum offline policy. 

Now consider the performance of \Cref{alg:weighted-skeleton approx}.
By \Cref{asp:matching ratio in first stage}
and 
\Cref{rem:observation},
the total number of supply vertices 
matched in the first stage of 
\Cref{alg:weighted-skeleton approx}
is at least $|\Driver^{(0)}| 
+ \firststagepercent\sum_{\mspair\in[\totalmspair]}
|\Driveri\cap\Driveropt|$.
To find a lower-bound for the expected total
number of supply vertices matched in the second stage
of \Cref{alg:weighted-skeleton approx}, 
first consider the 
maximum 
matching between realized second stage demand vertices and unmatched supply vertices. Such a matching is no smaller than the subset of edges of the matching picked by the optimum offline policy corresponding to the supply vertices not matched by \Cref{alg:weighted-skeleton approx} during the first stage. 
By the structural decomposition in \Cref{lemma:structure},
for every supply vertex in $\Driveri$,
the probability that 
it is not matched 
in the first stage
is $\mspairci$.
Thus, 
the expected size of such a projected matching 
is $\sum_{\mspair\in[\totalmspair]}
\mspairci
|\Driveri \cap \Driversecond|$, 
due to the linearity of the expectation.
Combining with the fact that 
\Cref{alg:weighted-skeleton approx}
picks a $\secondstageratio$-approximately
maximum matching in the second stage,
we can conclude that the expected total size of the final matching picked by
\Cref{alg:weighted-skeleton approx} at the end of the second stage will be at least
\begin{equation}
\label{eq:ms-bound approx}
   |\Driver^{(0)}| 
+ \firststagepercent\sum_{\mspair\in[\totalmspair]}
|\Driveri\cap\Driveropt|
+
\secondstageratio
\sum_{\mspair\in[\totalmspair]}
\mspairci 
|\Driveri \cap \Driversecond|
   =
   \lvert\Driver^{(0)}\rvert
    +\MS~,
\end{equation}
where $\MS$ is defined in \eqref{eq:auxilaryvars approx}.

Putting the bounds in \eqref{eq:opt-bound approx}, and \eqref{eq:ms-bound approx} together, the competitive ratio 
of \Cref{alg:weighted-skeleton approx}
against the optimum offline policy on the original
two-stage matching instance
under an arbitrary fixed realization $\Rpa$
is at least 
\begin{align*}
    \frac{
    \lvert\Driver^{(0)}\rvert
    +
    \MS
    }{
    |\Driver^{(0)}|
    +
    \OPT
    }
    \geq
    \frac{
    \MS
    }{\OPT}~,
\end{align*}
where the 
ratio 
on the right-hand side
is the objective value of
our constructed solution in program~\eqref{eq:offline approx second stage}.

\emph{Feasibility of the constructed solution:}
Constraint \emph{(Feasibility)} of
program~\eqref{eq:offline approx second stage} 
holds by construction.
Note that there exists a maximum matching 
between $\Ra\cup\Rpa$ and $\Driver$
where all demand vertices in
$\bigcup_{\mspair\in[\totalmspair]}\Rideri$
are matched. 
Hence, constraint \emph{(Approximation)}
holds due to \Cref{asp:matching ratio in first stage}.
Because of \Cref{lem:msc closed form}
and ``Monotonicity'' property of the structural decomposition in
\Cref{lemma:structure},
constraint \emph{(Monotonicity-1)} holds by construction.
Constraint \emph{(Monotonicity-2)} also holds because of the following argument:
$\sum_{\mspair'=\mspair}^{\totalmspair}
(\msopt^{(\mspair')} - \mssecond^{(\mspair')})$ 
is the 
number of supply vertices in 
$\bigcup_{\mspair'=\mspair}^{\totalmspair}
\Driver^{(\mspair')}$
whom are matched in the optimum offline policy 
during the first-stage. Moreover,
$\sum_{\mspair'=\mspair}^{\totalmspair}
\msrider^{(\mspair')}$ is the number
of the demand vertices in $\bigcup_{\mspair'=\mspair}^{\totalmspair}
\Rider^{(\mspair')}$.
Since ``Monotonicity'' property of the structural decomposition in \Cref{lemma:structure} for the original two-stage matching instance
guarantees that there is no edge from 
the demand vertices in 
$\Rider\setminus \bigcup_{\mspair'=\mspair}^{\totalmspair}
\Rider^{(\mspair)}
=
\bigcup_{\mspair'=0}^{\mspair-1}
\Rider^{(\mspair')}$
to supply vertices in 
$\bigcup_{\mspair'=\mspair}^{\totalmspair}
\Driver^{(\mspair')}$,
the number of supply vertices in $
\bigcup_{\mspair'=\mspair}^{\totalmspair}
\Driver^{(\mspair')}$
at the first stage
is at most the number of demand vertices 
in $
\bigcup_{\mspair'=\mspair}^{\totalmspair}
\Rider^{(\mspair')}$.
Therefore, for every $\mspair\in[\totalmspair]$, we have 
$\sum_{\mspair'=\mspair}^{\totalmspair}
(\msopt^{(\mspair')} - \mssecond^{(\mspair')})
\leq 
\sum_{\mspair'=\mspair}^{\totalmspair}
\msrider^{(\mspair')}$.
 \vspace{2mm}
 
 \noindent\textbf{Step 2- restricting $\{\msfirsti\}_{\mspair\in[\totalmspair]}$:}
In this step, we 
argue that it is sufficient
to consider only solutions of program~\eqref{eq:offline approx second stage}, 
where the
constraints \emph{(Monotonicity-2)} are tight 
for all $\mspair \in [\totalmspair]$. 
Namely,
$\mssecondi = \msopti - \msrideri$
for all $\mspair \in [\totalmspair]$.
Consider any feasible solution of program~\eqref{eq:offline approx second stage}.
We modify this solution as follows, 
so that constraint \emph{(Monotonicity-2)} will be tight and the objective value weakly decreases:
Among all indices where 
constraints \emph{(Monotonicity-2)} 
are satisfied with strict inequality, 
let
$\mspair^*$ 
be the index that minimizes 
$\sum_{\mspair=\mspair^*}^{\totalmspair}
(\msrideri-\msopti + \mssecondi)$.
Denote 
$\sum_{\mspair=\mspair^*}^{\totalmspair}
(\msrideri-\msopti + \mssecondi)$
as $\Delta$.
Let $\mspair\primed \in [\mspair^* - 1]$
be the largest index where
the constraint \emph{(Monotonicity-2)}
is satisfied with equality
($\mspair\primed = 0$ if no such index exists).
Now set $\mssecond^{(\mspair^*)}_{\textrm{new}}
\gets 
\mssecond^{(\mspair^*)} -
\Delta$,
$\mssecond^{(\mspair\primed)}_{\textrm{new}}
\gets 
\mssecond^{(\mspair\primed)} +
\Delta$
if $\mspair\primed > 0$,
and keep all other variables unchanged.
The feasibility of the modified solution is by construction. Moreover, 
$\OPT$ remains unchanged. As the total mass in $\sum_{l\in[\totalmspair]}\mssecondi$ moves to lower $l$'s in $\{\mssecondi_{\textrm{new}}\}$, $\MS$ weakly decreases due to \emph{(Monotonicity-1)}, and so the objective value.

Therefore, by replacing $\mssecondi$
with $\msopti-\msrideri$ for 
all $\mspair\in[\totalmspair]$
and dropping constraint \emph{(Monotonicity-1)}
and
constraint \emph{(Monotonicity-2)}, 
we simplify program~\eqref{eq:offline approx second stage}
as
\begin{align}
    \label{eq:offline approx second stage 2}
\begin{array}{llll}
    \min\limits_{
\{(\msdriveri, \msopti, \msrideri)\}
_{\mspair\in[\totalmspair]}
    }
    &
    \frac{
    \firststagepercent
    \displaystyle\sum_{\mspair\in[\totalmspair]}
    \msopti
    +
    \secondstageratio
    \displaystyle\sum_{\mspair\in[\totalmspair]}
    \frac{\msdriveri-\msrideri}{\msdriveri}
    (\msopti - \msrideri)
    }{
    \displaystyle\sum_{\mspair\in[\totalmspair]}
    \msopti
    }
    & \text{s.t.} & \\
    &
    \mssecondi\leq \msopti \leq \msdriveri,
    \ 
    \msrideri \leq \msdriveri
    & 
    \mspair\in [\totalmspair] &\textit{\footnotesize{(Feasibility)}} \\
    ~&~&~&~
    \\
    &
    \displaystyle\sum_{\mspair\in[\totalmspair]}
    (\msopti - \msrideri)
    = 
    (1-\firststagepercent)
    \displaystyle\sum_{\mspair\in[\totalmspair]}
    \msopti
    \ 
    & 
     &\textit{\footnotesize{(Approximation)}} \\
    \\
    &
    \msdriveri,\msopti,
    \mssecondi, \msrideri\in \N 
    &
    \mspair\in[\totalmspair] & 
\end{array}
\end{align}

\noindent\textbf{Step 3- final evaluation:}
Here we find a lower-bound on the optimal value in 
program~\eqref{eq:offline approx second stage 2}.
Note that 
for any feasible solution,
we can lower bound the objective as 
\begin{align*}
    &\frac{
    \firststagepercent
    \displaystyle\sum_{\mspair\in[\totalmspair]}
    \msopti
    +
    \secondstageratio
    \displaystyle\sum_{\mspair\in[\totalmspair]}
    \frac{\msdriveri-\msrideri}{\msdriveri}
    (\msopti - \msrideri)
    }{
    \displaystyle\sum_{\mspair\in[\totalmspair]}
    \msopti
    }
    \overset{(a)}{\geq}
    \frac{
    \firststagepercent
    \displaystyle\sum_{\mspair\in[\totalmspair]}
    \msopti
    +
    \secondstageratio
    \displaystyle\sum_{\mspair\in[\totalmspair]}
    \frac{\msopti-\msrideri}{\msopti}
    (\msopti - \msrideri)
    }{
    \displaystyle\sum_{\mspair\in[\totalmspair]}
    \msopti
    }\\
    =&
    \firststagepercent + 
    \secondstageratio\frac{
    \displaystyle\sum_{\mspair\in[\totalmspair]}
     \frac{(\msopti-\msrideri)^2}{\msopti}
    }
    {
    \displaystyle\sum_{\mspair\in[\totalmspair]}
    \msopti
    }
    \overset{(b)}{\geq}
    \firststagepercent + 
    \secondstageratio\frac{
    \frac{\left(
    \displaystyle\sum_{\mspair\in[\totalmspair]}\msopti-\msrideri
    \right)^2
    }{
    \displaystyle\sum_{\mspair\in[\totalmspair]}\msopti
    }
    }{
    \displaystyle\sum_{\mspair\in[\totalmspair]}
    \msopti
    }
    \overset{(c)}{=}
    \firststagepercent
    +\secondstageratio(1-\firststagepercent)^2
\end{align*}
where inequality~(a) uses
$\msopti\leq \msdriveri$
in the constraint \emph{(Feasibility)},
inequality~(b) uses the Cauchy–Schwarz inequality
that 
$
\left(\sum_{\mspair\in[\totalmspair]}
     \left(\frac{\msopti-\msrideri}{\sqrt{\msopti}}\right)^2
     \right)
     \left(
     \sum_{\mspair\in[\totalmspair]}
     \left((\sqrt{\msopti}\right)^2
     \right)
     \geq 
     \left(
    \displaystyle\sum_{\mspair\in[\totalmspair]}\msopti-\msrideri
    \right)^2$,
    and 
    equality~(c) uses the constraint \emph{(Approximation)}.
    \hfill\halmos
\end{proof}

\subsection{More refined competitive ratio 
with dependence on the first stage graph}
\label{apx:robustness-first}

Here we present the more refined 
competitive ratio 
guarantee of \Cref{alg:weighted-skeleton}
which depends on
$\{y_\driver\}_{\driver \in \Driver}$,
i.e., 
the probability that each supply vertex $\driver\in\Driver$
is matched in the first stage in \Cref{alg:weighted-skeleton}.
Note that this result is general, and holds even for supply-weighted instances of the two-stage stochastic matching problem.


\begin{theorem}
\label{thm:opt-offline-comp-ratio refine}
For any differentiable, monotone increasing and strictly convex function $g:\R\rightarrow \R$,
suppose $\{y_\driver\}_{\driver \in \Driver}$,
is 
the probability that each supply vertex $\driver\in\Driver$
is matched in the first stage in \Cref{alg:weighted-skeleton},
then
\Cref{alg:weighted-skeleton}
is 
$\left(\min_{\driver\in\Driver}
1-y_\driver + y_\driver^2
\right)$-competitive 
against optimum offline benchmark, i.e., $\optoff$.
\end{theorem}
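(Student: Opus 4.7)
The plan is to revisit the primal–dual argument used in the proof of \Cref{thm:opt-offline-comp-ratio} and to extract a tighter, vertex-dependent bound from it, rather than collapsing everything to the uniform lower bound $\min_{x\in[0,1]}(1-x+x^2)=\tfrac{3}{4}$. The key observation is that the $3/4$ in that proof appeared at precisely one place: when verifying the approximate dual feasibility of the constructed randomized dual, we showed in \eqref{eq:dual approx feasibility 1} and \eqref{eq:dual approx feasibility 2} that, for every edge $(i,j)\in E$,
\begin{equation*}
\mathbb{E}_{M_1}\!\left[\alpha_i+\beta_j\right]\ \geq\ w_j\bigl(1-y_j^*+(y_j^*)^2\bigr),
\end{equation*}
where $y_j^*=\sum_{i\in N(j)}x^*_{ij}$ is exactly the marginal probability $y_j$ that vertex $j$ is matched in the first stage of \Cref{alg:weighted-skeleton}. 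Only at the final step did we lower bound this by $\tfrac{3}{4}w_j$ via a uniform minimization over $y_j^*\in[0,1]$.

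To obtain the refined ratio, I would carry out exactly the same construction of the randomized dual assignment as in the proof of \Cref{thm:opt-offline-comp-ratio}. The identity
\begin{equation*}
\mathbb{E}_{M_1}\!\left[\sum_{i\in\Ra\cup\Rpa}\alpha_i+\sum_{j\in\Driver}\beta_j\right]\ =\ \mathbb{E}_{M_1}\!\left[\sum_{j\in\Driver}w_j\mathbbm{1}\{j\text{ matched by }M_1\cup M_2\}\right]
\end{equation*}
goes through verbatim, since it uses only the uniformity and saturation properties of \Cref{lemma:structure} together with the definitions of $\{\alpha_i\},\{\beta_j\}$, and nowhere invokes the bound $1-y+y^2\geq 3/4$. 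Similarly, the two edge-type analyses (Type I for $i\in\Ra$, Type II for $i\in\Rpa$) still yield the per-edge inequality $\mathbb{E}[\alpha_i+\beta_j]\geq w_j\bigl(1-y_j+y_j^2\bigr)$.

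Setting $\kappa\triangleq \min_{j\in\Driver}(1-y_j+y_j^2)$, scaling the assignment $\{\alpha_i/\kappa,\beta_j/\kappa\}$ by $1/\kappa$ produces a dual solution that is feasible \emph{in expectation} for the LP~\ref{eq:LP-max-weight}, and whose expected objective value equals $\kappa^{-1}$ times the algorithm's expected matching weight. Taking expectations over $M_1$ yields a deterministic feasible dual for \ref{eq:LP-max-weight} (for the arbitrarily fixed realization of $\Rpa$), so by LP weak duality its value upper bounds $\optoff$ on that realization. Averaging over $\Rpa$ then gives
\begin{equation*}
\mathbb{E}\!\left[\sum_{j\in\Driver}w_j\mathbbm{1}\{j\text{ matched in }M_1\cup M_2\}\right]\ \geq\ \kappa\cdot\optoff,
\end{equation*}
which is the claim. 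There is no genuine obstacle here beyond bookkeeping; the whole content of the refinement is to notice that the per-vertex quantity $1-y_j+y_j^2$ was already present in the original proof and need not be minimized over $[0,1]$ when one knows the actual marginals $\{y_j\}$.
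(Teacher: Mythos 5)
Your proposal is correct and is essentially identical to the paper's own argument: the paper proves \Cref{thm:opt-offline-comp-ratio refine} by re-running the primal--dual analysis of \Cref{thm:opt-offline-comp-ratio} and observing that the per-edge dual feasibility in \eqref{eq:dual approx feasibility 1} and \eqref{eq:dual approx feasibility 2} already yields $w_j(1-y_j+y_j^2)$, so one only needs to scale by $\min_{j\in\Driver}(1-y_j+y_j^2)$ instead of the uniform bound $\tfrac{3}{4}$. No further comment is needed.
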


As we discussed in 
\Cref{sec:adversarial},
the convex program
\ref{eq:matching-convex-program}
identifies 
a structural decomposition 
$\{(\Rideri,\Driveri)\}_{\mspair\in[\totalmspair]}$
for the first-stage graph (\Cref{lemma:structure}).
The ``Uniformity'' property in 
\Cref{lemma:structure}
pins down $\mspairci$
for each pair $(\Rideri,\Driveri)$.
For problem instances with unweighted
supply vertices,
for each $\driver\in\Driveri$,
$1-y_\driver = \mspairci$.
For problem instances with weighted supply vertices,
for each $\driver\in\Driveri$,
$1-y_\driver = \frac{\mspairci}{\weight_\driver}$.

As an observation, note that 
the function $1-y + y^2$
is first decreasing and then increasing,  
with the smallest value $\frac{3}{4}$ 
attained
at $y = \frac{1}{2}$.
Hence,
\Cref{thm:opt-offline-comp-ratio refine}
suggests that 
when 
the first-stage randomized matching $M_1$
in \Cref{alg:weighted-skeleton}
has less randomness,
i.e.,
when 
$y_j$'s
are all close to either 0 or 1,
the final competitive ratio 
against optimum offline 
gets closer to 1.

The proof of \Cref{thm:opt-offline-comp-ratio refine}
follows from the same primal-dual argument 
as the one 
used to prove \Cref{thm:opt-offline-comp-ratio},
with the additional observation that 
the dual feasibility 
constraint
is
$\left(\min_{\driver\in\Driver}
1-y_\driver + y_\driver^2
\right)$-approximately satisfied for each edge; 
see inequalities~\eqref{eq:dual approx feasibility 1} and \eqref{eq:dual approx feasibility 2}.
To avoid redundancy, we omit the formal proof 
of \Cref{thm:opt-offline-comp-ratio refine}.

}
 
 \section{A combinatorial interpretation and generalized matching skeleton}
 \label{apx:goel}
 \label{sec:skeleton}
There is a natural connection between the unweighted version of our structural decomposition in \Cref{lemma:structure} and the concept of matching skeleton introduced in \cite{GKK-12}. To observe this connection, first consider a  bipartite graph $(P,Q,E)$ for which $P$ is \emph{hypermatchable}, that is, for every vertex $v\in Q$, there exists a perfect matching of the $P$ side that does not include $v$. By a purely combinatorial proof, it turns out that such a graph admits a nice decomposition similar to Lemma~\ref{lemma:structure}.
\begin{lemma}[\citealp{GKK-12}]
\label{lem:goel}
If $P$ is hypermatchcable, then the bipartite graph $G=(P,Q,E)$ can be decomposed into a collection of pairs $\left\{\left(P^{(l)},Q^{(l)}\right)\right\}_{l=1}^{L}, P^{(l)}\subseteq P, Q^{(l)}\subseteq Q$, such that:
\begin{itemize}
    \item $\{P^{(l)}\}_{l=1,\ldots,L}$ and $\{Q^{(l)}\}_{l=1,\ldots,L}$ partition $P$ and $Q$, respectively. 
    \item $\lvert P^{(l)}\rvert=\alpha_l\cdot \lvert Q^{(l)}\rvert$ for $\alpha_l\in(0,1]$, and the induced subgraph $G[P^{(l)}, Q^{(l)}]$ has a (fractional) matching that saturates each vertex in $Q^{(l)}$ exactly $\alpha_l$ times and each vertex in $P^{(l)}$ exactly once.
    \item There exists no edge in $E$ between $P^{(l)}$ and $Q^{(l')}$ if $\alpha_l>\alpha_{l'}$.
\end{itemize}
\end{lemma}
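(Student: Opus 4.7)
The plan is to derive Lemma~\ref{lem:goel} as a direct corollary of the structural decomposition from Lemma~\ref{lemma:structure}, specialized to the unweighted hypermatchable setting. I would instantiate the convex program~\ref{eq:matching-convex-program} on the graph $G=(P,Q,E)$ by setting $\Ra \gets P$, $\Driver \gets Q$, and $w_j = 1$ for every $j \in Q$, and let $\{x^*_{ij}\}$ denote an optimal fractional matching with $\{(\Rider^{(l)},\Driver^{(l)})\}_{l=0}^{L}$ and scalars $c^{(l)}$ the induced decomposition from Lemma~\ref{lemma:structure}. Then I define $P^{(l)} \triangleq \Rider^{(l)}$, $Q^{(l)} \triangleq \Driver^{(l)}$, and $\alpha_l \triangleq 1-c^{(l)}$, reindexing so the enumeration starts at $1$ and absorbing the $l=0$ pair (which corresponds to $\alpha = 1$) into the list.

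The core technical step is to show that hypermatchability forces $x^*$ to saturate every vertex of $P$ at mass exactly $1$. Applying Hall's theorem to the avoid-one-vertex perfect matchings yields $|N_G(S)| \ge |S|$ for every nonempty $S \subseteq P$, so the bipartite matching polytope contains an integral matching of size $|P|$; since $g$ is strictly increasing, the convex objective $\sum_j g(1 - \sum_i x_{ij})$ is minimized only when $\sum_i\sum_j x_{ij}$ attains its maximum value $|P|$, which combined with $\sum_j x_{ij} \le 1$ forces $\sum_j x^*_{ij} = 1$ for every $i \in P$. Consequently every $i\in P$ has at least one positive edge in the support graph $G'$ and thus lies in exactly one $\Rider^{(l)}$, so $\{P^{(l)}\}$ partitions $P$ and $\{Q^{(l)}\}$ partitions $Q$ by construction. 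The three bulleted claims of the lemma then follow quickly: (i) the partition property is handed to us by Lemma~\ref{lemma:structure} together with the saturation argument above; (ii) restricting $\{x^*_{ij}\}$ to $G[P^{(l)},Q^{(l)}]$ gives the desired fractional matching, since each $i \in P^{(l)}$ is saturated at mass $1$ by the saturation property and its positive edges are trapped inside the connected component $Q^{(l)}$, while each $j \in Q^{(l)}$ is saturated at mass $1 - c^{(l)} = \alpha_l$ by uniformity, and summing either side yields $|P^{(l)}| = \alpha_l |Q^{(l)}|$; (iii) the forbidden-edge clause translates verbatim, since $\alpha_l > \alpha_{l'}$ is equivalent to $c^{(l)} < c^{(l')}$, which is precisely when Lemma~\ref{lemma:structure} rules out edges in $E$ between $\Rider^{(l)}$ and $\Driver^{(l')}$.

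The main obstacle I anticipate is bookkeeping around the ``seam'' of the decomposition, specifically verifying that edges incident to $\Rider^{(0)}$ do not leak outside $\Driver^{(0)}$, a property essential for the restricted matching on the $\alpha = 1$ piece to saturate $\Rider^{(0)}$. Fortunately this falls out of the monotonicity clause of Lemma~\ref{lemma:structure}: with $c^{(0)} = 0 < c^{(l')}$ for every $l' \ge 1$, there are no edges in $E$ from $\Rider^{(0)}$ to any $\Driver^{(l')}$, so every edge of $\Rider^{(0)}$ stays inside $\Driver^{(0)}$. A secondary nuisance is the possibility of degenerate pairs with $\alpha_l = 0$ (isolated $Q$-vertices in $G'$), which the lemma excludes by convention; a short local perturbation argument using strict convexity of $g$ shows that at the optimum any $j \in Q$ with a neighbor in $P$ must have $\sum_i x^*_{ij} > 0$ (otherwise shifting an infinitesimal mass from a saturated neighbor toward $j$ strictly decreases the objective), so such pairs cannot arise once trivially isolated $Q$-vertices are discarded.
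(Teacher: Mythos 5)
Your proposal is correct, but it takes a genuinely different route from the paper: the paper never proves this lemma itself --- it is imported from \citet{GKK-12}, where it is established by a purely combinatorial argument (alternating paths together with the Edmonds--Gallai decomposition) --- and Appendix~\ref{apx:goel} only sketches, informally and in the reverse direction, why \Cref{lemma:structure} specializes to the matching skeleton when all weights equal one. You instead derive the GKK decomposition as a corollary of the convex program \ref{eq:matching-convex-program}, which is a clean and economical route once \Cref{lemma:structure} is in hand; the price is the one extra fact the combinatorial proof gets for free, namely that hypermatchability forces the optimum $\xbf^*$ to saturate all of $P$. Your identifications $\alpha_l = 1-c^{(l)}$, the mass-balance computation yielding $\lvert P^{(l)}\rvert = \alpha_l\lvert Q^{(l)}\rvert$, the translation of monotonicity into the forbidden-edge clause, and the handling of the $l=0$ seam via $c^{(0)}=0$ are all sound, as is the strict-convexity perturbation ruling out $\alpha_l=0$ components.

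One step is stated more casually than it deserves: the assertion that ``since $g$ is strictly increasing, the objective is minimized only when $\sum_{i,j}x_{ij}$ attains its maximum $\lvert P\rvert$'' does not follow from monotonicity of $g$ alone, because raising the total mass generally requires rerouting flow, and a priori a reroute could increase one $y_j$ only at the expense of another. The argument that closes this gap: if some $i_0\in P$ is unsaturated, let $R_P, R_Q$ be the vertices reachable from $i_0$ by walks alternating between edges of $E$ and edges carrying positive $x^*$-mass. If some $j\in R_Q$ has $y^*_j<1$, pushing $\varepsilon$ along the alternating path increases $y_j$ while leaving every other $y_{j'}$ unchanged, strictly decreasing the objective. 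Otherwise every $j\in R_Q$ is saturated and receives its mass only from $R_P$, while $N(R_P)\subseteq R_Q$ and the total mass leaving $R_P$ is strictly below $\lvert R_P\rvert$; this gives $\lvert N(R_P)\rvert\le\lvert R_Q\rvert<\lvert R_P\rvert$, contradicting Hall's condition, which holds because $P$ admits a perfect matching. With that inserted, the proof is complete.
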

When $P$ is not  hypermatchable, \citet{GKK-12} extend the construction of \Cref{lem:goel} to the general case using the Edmonds-Gallai decomposition~\citep{edm-65,gal-64}. This construction essentially allows us to partition the vertices of $G$ into sets $P_{\textrm{up}}$, $P_{\textrm{center}}$, $P_{\textrm{down}}$,
$Q_{\textrm{up}}$, $Q_{\textrm{center}}$, and $Q_{\textrm{down}}$, such that $P_{\textrm{up}}$ is hypermatchable to $Q_{\textrm{down}}$, $Q_{\textrm{up}}$ is hypermatchable to $P_{\textrm{down}}$, and there is a perfect matching between $P_{\textrm{center}}$ and $Q_{\textrm{center}}$. Moreover, this decomposition guarantees that there is no edge from $P_{\textrm{center}}$ or $P_{\textrm{down}}$ to $Q_{\textrm{down}}$, and there is no edge from $P_{\textrm{down}}$ to $Q_{\textrm{center}}$. Using this partition, the final skeleton matching decomposition is constructed by applying \Cref{lem:goel} to each of the pieces $P_{\textrm{up}}\cup Q_{\textrm{down}}$ (from $P$ side) and $P_{\textrm{down}}\cup Q_{\textrm{up}}$ (from $Q$ side), and finally attaching everything to $P_{\textrm{center}}\cup Q_{\textrm{center}}$. See Sections~3.1 and 3.2 of \cite{GKK-12} for more details.

\paragraph{Generalization of the matching skeleton.} To see why the matching skeleton corresponds to the special case of \Cref{lemma:structure} for unweighted supplies, first note that the pair $\left(\Rider^{(0)},\Driver^{(0)}\right)$ plays the role of $\left(P_{\textrm{center}}\cup P_{\textrm{down}},Q_{\textrm{center}}\cup Q_{\textrm{up}}\right)$ as described above. Moreover, the collection of pairs $\left\{\left(\Rider^{(l)},\Driver^{(l)}\right)\right\}_{l=1}^L$ corresponds to the defined collection of pairs $\left\{\left(P^{(l)},Q^{(l)}\right)\right\}_{l=1}^L$ in \Cref{lem:goel} (applied to the induced subgraph $G[\Rider\setminus \Rider^{(0)},\Driver\setminus \Driver^{(0)}]$). The fractional matching $\xbf^*$ is the union of fractional matchings promised in the second bullet of \Cref{lem:goel}. Moreover, note that in the special case of \Cref{lemma:structure}, when all the weights are equal, the uniformity property implies that $\sum_{i\in N(j)}x^*_{ij}=\sum_{i\in N(j)}x^*_{ij'}$ for $j\in \Driver^{(l)}$, that is, fractional matching $\xbf^*$ (fractionally) saturates each vertex $j\in \Driver^{(l)}$ exactly the same amount. It also saturates each vertex in $\Rider^{(l)}$ once. This is indeed equivalent to the second bullet of \Cref{lem:goel}. Finally, the monotonicity is equivalent to the third bullet of \Cref{lem:goel}, simply because $c^{(l)}\leq c^{(l')}$ if and only if  $\sum_{i\in N(j)}x^*_{ij}\geq \sum_{i\in N(j)}x^*_{ij'}$ for $j\in \Driver^{(l)},j'\in D^{(l')}$, due to the uniformity property for equal weights. All in all, our structural decomposition generalizes the matching skeleton of \cite{GKK-12} to the vertex-weighted case. 

\paragraph{Algorithmic implications.} From an algorithmic point of view, if all the weights are equal, we can run \cref{alg:weighted-skeleton} without solving a convex program. To do so, we first find the matching skeleton decomposition (see \citealp{GKK-12} for fast and streaming algorithms to perform this task). Then, we find the maximum fractional matching for each of the pairs of the collection in \Cref{lemma:structure}, and finally we sample a bipartite matching from each of these fractional matchings (for efficient methods see, e.g., \citealp{gandhi2006dependent} or \citealp{sch-03}).

\section{Missing Proofs of Section~\ref{sec:improvedCR}}

\subsection{Omitted Proof Sketch for \Cref{prop:cont-greedy}}
\label{apx:sm-proof}
Consider maximizing  $g(T)=f^\wbf(T)-\sum_{j\in T}w_j$ subject to $T\in \bar{\B}$. By using the algorithm introduced in \cite{SVW-17}, i.e., applying a modified version of continuous greedy algorithm on the multilinear extension of $g(\cdot)$ (estimated through sampling) and then using pipage rounding, we output a set $\bar{T}$ in time $\textrm{poly}(\lvert \Driver\rvert,\lvert \Rider\rvert,\tfrac{1}{\epsilon})$ such that:\footnote{See Theorem~3.1 and Section~4 of \cite{SVW-17} for more details.} 
$$\forall T\in\bar{\B}: f^\wbf(\bar{T})-\sum_{j\in \bar{T}}w_j\geq  (1-\tfrac{1}{e}-\epsilon)f^\wbf(T)-\sum_{j\in T}w_j~. $$
Adding $\sum_{j\in \Driver}w_j$ to both sides of the above inequality finishes the proof.\hfill\Halmos

\subsection{Omitted Proof for Proposition~\ref{lem:online final} (Weighted Setting)}
\label{apx:online final}
We start by a technical lemma.
\begin{lemma}
\label{lem:program restriction}
Given space $X$, function $f:X \rightarrow \R$,
and subspace $X_1, X_2$
such that $X_1 \subseteq X_2 \subseteq X$,
suppose $\gamma_1 \triangleq \min_{x\in X_1} f(x)$
and $\gamma_2 \triangleq \min_{x \in X_2} f(x)$
are well-defined.
$\gamma_1 = \gamma_2$ if
there exists a mapping $\phi:X_2 \rightarrow X_1$
such that for all $x \in X_2$, either of the following
two
statements is satisfied:
\begin{enumerate}
    \item there exists $\gamma^{\dagger} < \gamma_1$,
     such that $f(x) \leq f(\phi(x)) \leq \gamma^{\dagger}$
    or $\gamma^{\dagger} \leq f(\phi(x)) \leq f(x)$; or
    \item there exists $\gamma^{\ddagger} > \gamma_1$, 
    such that 
    $f(\phi(x)) \leq f(x) \leq \gamma^{\dagger}$
    or $\gamma^{\dagger}\leq f(x) \leq f(\phi(x))$.
\end{enumerate}
\end{lemma}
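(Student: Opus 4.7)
The plan is to establish $\gamma_1 = \gamma_2$ by proving both inequalities separately. The direction $\gamma_2 \leq \gamma_1$ is immediate from $X_1 \subseteq X_2$, since minimizing over a larger set can only decrease (or preserve) the optimal value. The nontrivial direction is $\gamma_1 \leq \gamma_2$, which I would establish by showing that for every $x \in X_2$, one has $f(x) \geq \gamma_1$; taking the infimum over $X_2$ then yields $\gamma_2 \geq \gamma_1$.

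The key observation that drives the argument is that since $\phi(x) \in X_1$ for every $x \in X_2$, the definition of $\gamma_1$ as the minimum of $f$ over $X_1$ forces $f(\phi(x)) \geq \gamma_1$. I would fix an arbitrary $x \in X_2$ and proceed by a case analysis on which of the two hypotheses of the lemma applies. Under hypothesis (1), one has a threshold $\gamma^{\dagger} < \gamma_1$ with either $f(x) \leq f(\phi(x)) \leq \gamma^{\dagger}$ or $\gamma^{\dagger} \leq f(\phi(x)) \leq f(x)$. The first alternative is vacuous, because $f(\phi(x)) \geq \gamma_1 > \gamma^{\dagger}$ contradicts $f(\phi(x)) \leq \gamma^{\dagger}$; thus only the second alternative can occur, giving $f(x) \geq f(\phi(x)) \geq \gamma_1$. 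Under hypothesis (2), one has a threshold $\gamma^{\ddagger} > \gamma_1$ with either $f(\phi(x)) \leq f(x) \leq \gamma^{\ddagger}$ or $\gamma^{\ddagger} \leq f(x) \leq f(\phi(x))$. In the first sub-case, $f(x) \geq f(\phi(x)) \geq \gamma_1$; in the second sub-case, $f(x) \geq \gamma^{\ddagger} > \gamma_1$ directly.

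Combining the cases, I conclude that $f(x) \geq \gamma_1$ for every $x \in X_2$, so $\gamma_2 \geq \gamma_1$, and together with the trivial inequality $\gamma_2 \leq \gamma_1$ this yields $\gamma_1 = \gamma_2$. The proof is essentially a short case analysis, and there is no substantive obstacle; the only subtlety is recognizing that the apparently awkward four sub-cases collapse to the single operational content that either $\phi$ weakly decreases $f$ (giving $f(x) \geq f(\phi(x)) \geq \gamma_1$) or else $f(x)$ already strictly exceeds $\gamma_1$ via the threshold $\gamma^{\ddagger}$. This lemma is thus best viewed as a bookkeeping device that will be applied in subsequent arguments to reduce optimization over a larger feasible set $X_2$ to a restricted set $X_1$ while preserving the optimal value.
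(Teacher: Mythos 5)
Your proof is correct and rests on exactly the same two ingredients as the paper's: the bound $f(\phi(x))\geq\gamma_1$ (since $\phi(x)\in X_1$) and the four-way case analysis showing each $x\in X_2$ satisfies $f(x)\geq\gamma_1$. The only cosmetic difference is that the paper argues by contradiction at the minimizer $x^*\in X_2$ while you argue directly at every $x\in X_2$; the logical content is identical.
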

\begin{proof}{\emph{Proof of \Cref{lem:program restriction}.}}
    We prove by contradiction.
    Suppose $\gamma_1 \not=\gamma_2$, 
    then $\gamma_1 > \gamma_2$.
    Consider the optimal solution 
    $x^* = \argmin_{x\in X_2}f(x)$.
    If statement-1 is satisfied for $x^*$,
    either 
    $\gamma_1 \leq f(\phi(x^*)) \leq \gamma^{\dagger} <
    \gamma_1$
    or
    $\gamma_1 \leq f(\phi(x^*)) \leq f(x^*) = \gamma_2 <
    \gamma_1$, 
    where both lead to contradiction.
    (b) If statement-2 is satisfied for $x^*$,
    either 
    $\gamma_1 \leq f(\phi(x^*)) \leq f(x^*) = \gamma_2 <
    \gamma_1$
    or 
    $\gamma_2 = f(x^*) \geq \gamma^{\ddagger} > \gamma_1$,
    where both lead to contradiction.\hfill\Halmos
\end{proof}
\lemmaoptonline*
\begin{proof}{\emph{Proof of \Cref{lem:online final}
for weighted setting.}}
we start by introducing a technical lemma.

Having the technical lemma, we start the proof of the weighted setting. This proof
is identical to the proof of the unweighted setting, up to the point where we consider
a program similar to program~\ref{eq:online all matched unweighted}
with additional variables $\weights$. We then need a different argument for the reduction to $\totalmspair = 1$, and an additional argument for a reduction to the identical weights.

\vspace{2mm}

\noindent\textbf{Summary of Step~1 and Step~2:} Consider 
the following optimization program,
which is parameterized by 
$\totaldriver,\totalmspair\in\N$,
and has variable
$\weights,
\{(\msdriveri,\msfirsti,\msrideri,\mssecondi)\}_{\mspair\in[\totalmspairHat]}$:
\begin{align}
    \label{eq:online all matched}
    \tag{$\mathscr{P}^{\textrm{FR-weighted-second}}$}
\begin{array}{llll}
    \min\limits_{
    \substack{
    \weights,\\
    \mspairinstancenew}
    }
    &
    \frac{
    \combprob
    \left(
    \MS
    \right)
    +
    (1-\combprob)
    \left(
    \GD
    \right)
    }{
    \OPT
    }
    & \text{s.t.} \\
    &
    \msfirsti + \mssecondi \leq \msdriveri,
    \ 
    \msrideri \leq \msdriveri
    & 
    \mspair\in [\totalmspair]
    &
    \footnotesize{\textit{(Feasibility-1)}}
    \\
    &
    \sum_{\mspair\in[\totalmspair]}\msdriveri = \totaldriver
    &
    &
    \footnotesize{\textit{(Feasibility-2)}}
    \\
    &
    (\msdriveri-\msfirsti-\mssecondi)
    \mspairci \leq
    (\msdriveri-\msfirsti-\mssecondi)\weighti
    &
    \mspair\in[\totalmspair], \driver \in\Driveri
    &
    \footnotesize{\textit{(Feasibility-3)}}
    \\
    &
    \mspairci < \mspairc^{(\mspair + 1)}
    &
    \mspair\in [\totalmspair - 1] 
    &
    \footnotesize{\textit{(Monotonicity-1)}}
    \\
    &
    \msfirst^{(\totalmspair)} \leq 
    \msrider^{(\totalmspair)}
    &
    &
    \footnotesize{\textit{(Monotonicity-2)}} 
    \\
    &
    \mspairci  
    =
    \frac{\msdriveri - \msrideri}{\sum_{\driver\in\Driveri}\frac{1}{\weighti}}
    &
    \mspair \in [\totalmspair]
    &
    \\
    &
    \msdriveri, \msfirsti, \msrideri,
    \mssecondi\in \N 
    &
    \mspair\in[\totalmspair] &\\
    &
    \weighti \in \RP
    &
    \driver\in[\totaldriver] &
\end{array}
\end{align}~,
where index sets are defined as 
$$
    \Driveri  
    \triangleq
    \left[
    \sum_{\mspair' = 1}^{\mspair - 1}\msdriver^{(\mspair')}+1
    :
    \sum_{\mspair' = 1}^{\mspair}\msdriver^{(\mspair')}
    \right],~~
    \Driverfirsti  
    \triangleq
    \left[
    \sum_{\mspair' = 1}^{\mspair - 1}\msdriver^{(\mspair')}+1
    :
    \sum_{\mspair' = 1}^{\mspair-1}\msdriver^{(\mspair')}
    +\msfirsti
    \right],~~
    \Driverfirst 
    \triangleq
    \displaystyle\bigcup_{\mspair\in[\totalmspair]}
    \Driverfirsti,
    $$
    $$
    \Driversecondi  
    \triangleq
    \left[
    \sum_{\mspair' = 1}^{\mspair}\msdriver^{(\mspair')}
    -\mssecondi + 1
    :
    \sum_{\mspair' = 1}^{\mspair}\msdriver^{(\mspair')}
    \right],~~
    \Driversecond 
    \triangleq
    \displaystyle\bigcup_{\mspair\in[\totalmspair]}
    \Driversecondi~,$$
and auxiliary variables $\MS, \GD,$ and $\OPT$
are defined as:
\begin{align*}
        \MS 
    \triangleq
    \sum_{\driver\in[\totaldriver]}
    \weighti
    -
    \sum_{\mspair\in[\totalmspair]}
    \mspairci(\msdriveri-\mssecondi),~~~~
    \GD  
    \triangleq
    \sum_{\driver\in\Driverfirst}
    \weighti
    +
    \left(1-\frac{1}{e}\right)
    \sum_{\driver\in\Driversecond}
    \weighti,~~~~
    \OPT  
    \triangleq
    \sum_{\driver\in\Driverfirst}
    \weighti
    +
    \sum_{\driver\in\Driversecond}
    \weighti~.
\end{align*}
Comparing
with 
program~\eqref{eq:online all matched unweighted}
for unweighted setting, 
program~\eqref{eq:online all matched}
introduces a new constraint (Feasibility-3)
which holds trivially for all solution for unweighted setting;
and a weaker constraint (Monotonicity-2) for $\mspair=\totalmspair$ only.

\vspace{2mm}
\noindent\textbf{Step 3- reduction to the case $\totalmspair=1$:}
In this step, we argue that it is sufficient
to consider only solutions of
program~\eqref{eq:online all matched} by arguing where
$\totalmspair = 1$
and constraint (Monotonicity-2) is tight.
To do this,  
consider the ``objective value'' in each
$\mspair\in[\totalmspair]$ defined as follows,
\begin{align*}
    \contributioni \triangleq
    \frac{\combprob\MSi + (1-\combprob)\GDi}{\OPTi}~,
\end{align*}
where 
$\MSi\triangleq\sum_{\driver\in\Driveri}\weighti - \mspairci\left(\msdriveri-\mssecondi\right)$,
$\GDi\triangleq\sum_{\driver\in\Driverfirsti}\weighti+\left(1-\sfrac{1}{e}\right)\sum_{\driver\in\Driversecondi}\weighti$,
and 
$\OPTi\triangleq\sum_{\driver\in\Driverfirsti}\weighti+\sum_{\driver\in\Driversecondi}\weighti$.
Notice that $\MS = \sum_{\mspair\in[\totalmspair]}\MSi$,
$\GD = \sum_{\mspair\in[\totalmspair]}\GDi$, and
$\OPT = \sum_{\mspair\in[\totalmspair]}\OPTi$.

We first argue that it is sufficient to consider only solutions where
$\contributioni$ is weakly decreasing in $\mspair$.
Consider an arbitrary feasible solution
of program~\eqref{eq:online all matched}
whose $\contributioni$
is not weakly decreasing in $\mspair$.
Let
$\mspair \in [\totalmspair - 1]$ 
be the index 
in this solution
such that $\contributioni < 
\contribution^{(\mspair+1)}$.
We modify this solution 
into another feasible solution
as follows:
set a scalar $\eta = \sfrac{\mspairc^{(\mspair+1)}}{\mspairc^{(\mspair)}}$.
By (Monotonicity-1), 
the scalar $\eta$ is strictly larger than 1.
Holding all other variables fixed,
if $\contributioni$ is weakly smaller than
the objective value of this solution,
we 
scale $\weighti$ up by 
scalar $\eta$
(i.e.\ $\weighti^\text{new} \gets \eta\,\weighti$)
for all $i\in \Driverfirsti$;
otherwise (i.e.\ $\contributioni$ and
$\contribution^{(\mspair+1)}$ are both strictly larger than
the objective value),
we scale $\weighti$ down by scalar
$\sfrac{1}{\eta}$ 
(i.e.\ $\weighti^\text{new} \gets \sfrac{\weighti}{\eta}$)
for all $i\in \Driverfirst^{(\mspair+1)}$.
The feasibility 
of the modified solution is by construction.
After the modification,
all $\rider \in\Driverfirsti\cup\Driverfirst^{(\mspair+1)}$
has the same $\msc_\text{new}$, and 
thus, $\Driverfirsti_\text{new} = \Driverfirsti$,
and $\Driverfirst^{(\mspair' - 1)}_\text{new} =\Driverfirst^{(\mspair')}$ for $\mspair' = \mspair +2 ,\dots \totalmspair$.
The objective value of the modified solution
is strictly smaller by construction.
Since 
$\totalmspair$ strictly decreases by one 
(i.e.\ $\totalmspair_\text{new} = \totalmspair - 1$)
through the modification,
for any feasible solution whose $\contributioni$
is not weakly decreasing in $\mspair$,
there exists 
a feasible solution
(which can be generated 
by iterating the modification 
finite times)
whose $\contributioni$
is weakly decreasing in $\mspair$
with strictly smaller objective value.

Next, we argue that it is sufficient to consider 
solutions with $\totalmspair = 1$.
Consider an arbitrary feasible solution 
of program~\eqref{eq:online all matched}
whose $\contributioni$
is weakly decreasing in $\mspair$.
We modify this solution as follows,
$\totaldriver_{\text{new}} \gets \msdriver^{(\totalmspair)}$;
$\totalmspair_{\text{new}} \gets 1$;
$\weighti^{\text{new}} \gets \weight_{\driver\primed}$
for all $\driver\in[\totaldriver_{\text{new}}]$ and 
$\driver\primed = \driver + \sum_{\mspair\in[\totalmspair-1]}\msdriveri$;
$\msdriver^{(1)}_{\text{new}} \gets \msdriver^{(\totalmspair)}$;
$\msfirst^{(1)}_{\text{new}} \gets \msfirst^{(\totalmspair)}$;
$\msrider^{(1)}_{\text{new}} \gets
\msrider^{(\totalmspair)}$;
and 
$\mssecond^{(1)}_{\text{new}} \gets 
\mssecond^{(\totalmspair)}$.
Since $\contributioni$ is weakly decreasing
in $\mspair$ in the original solution,
the modified solution has weakly smaller 
objective value.

Finally, we argue that it is sufficient to 
consider solutions with 
$\totalmspair_{\text{new}} = 1$,
and tight (Monotonicity-2).
We introduce the following modification
for arbitrary solutions with $\totalmspair = 1$
to another feasible solution with 
$\totalmspair = 1$ and tight
(Monotonicity-2),
and argue that the optimal objective value
for program~\eqref{eq:online all matched}
remains unchanged even if we restrict 
program
to these modified
solutions (by \Cref{lem:program restriction}). 
For notation simplicity,
since $\totalmspair = 1$, we 
drop the superscripts of 
$\{\msdriveri,\msrideri,\msfirsti,\mssecondi\}$.
The modification works as follows:
set 
$\msrider_\text{new} \gets \min\{\msrider,\totaldriver - \mssecond\}$,
$\msfirst_\text{new} \gets \min\{\msrider,\totaldriver - \mssecond\}$,
and hold all other variables unchanged.
The feasibility of modified solution is by construction.
Notice that $\msc_\text{new}$ might increase as
$\msrider_\text{new}$ decreases, however,
(Feasibility-3) is still satisfied trivially since
$\msfirst = \msdriver-\mssecond$ in such case.
We argue the effects of the objective value 
on 
$\msrider_\text{new}$
and $\msfirst_\text{new}$, separately.
For the effect on $\msrider_\text{new}$, since it is weakly 
smaller than $\msrider$,
$\msc_\text{new}$ weakly increases
and the objective value weakly decreases.
For the effect on $\msfirst_\text{new}$,
the denominator of the objective function
increases by 
$\sum_{\driver = \msfirst + 1}^{\msrider_\text{new}}\weighti$,
and the numerator increases by
$(1-\combprob)\sum_{\driver = \msfirst + 1}^{\msrider_\text{new}}\weighti$.
Thus, this modification 
makes
the objective value closer 
to value $1-\combprob$.
As we will prove at the end,
the value of program~\eqref{eq:online all matched}
(restricting to solution under this modification)
is
$0.7613$
with $\combprob = 0.7$,
which are larger than $1-\combprob$.
Thus, \Cref{lem:program restriction} implies that 
the value of program~\eqref{eq:online all matched}
restricting to solution under this modification
is the same as 
the value of this program without any restriction
on solutions.

Therefore,
the value of 
program~\eqref{eq:online all matched}
remains unchanged by restricting to those modified
solution with $\totalmspair=1$
and tight (Monotonicity-2).
We convert 
program~\eqref{eq:online all matched} parameterized
by $\totaldriver$ and $\totalmspair$
into
program~\eqref{eq:online hypermatchable} 
parameterized by $\totaldriver$
as follows,
\begin{align}
    \label{eq:online hypermatchable}
    \tag{$\mathscr{P}^{\textrm{FR-weighted-third}}$}
\begin{array}{llll}
    \min\limits_{
    \substack{
    \weights\\
    \msfirst,
    \mssecond\in\N}
    }
    &
    \frac{
    \combprob
    \left(
    \MS
    \right)
    +
    (1-\combprob)
    \left(
    \GD
    \right)
    }{
    \OPT
    }
    & \text{s.t.} &\\
    &
    \msfirst + \mssecond \leq \msdriver,
    & 
    &
    \footnotesize{\textit{(Feasibility-1)}} 
    \\
    &
    (\msdriver-\msfirst-\mssecond)\,\msc 
    \leq 
    (\msdriver-\msfirst-\mssecond)\,\weighti
    &
    \driver\in [\totaldriver]
    &
    \footnotesize{\textit{(Feasibility-3)}} 
    \\
    &
    \mspairc  
    =
    \frac{\msdriver - \msfirst}{\sum_{\driver\in[\totaldriver]}\frac{1}{\weighti}}
    &
    \mspair \in [\totalmspair]~,
\end{array}
\end{align}
where $\MS\triangleq
    \sum_{\driver\in[\totaldriver]}
    \weighti
    -
    \mspairc(\msdriver-\mssecond)$,
    $\GD\triangleq 
    \sum_{\driver\in[\msfirst]}
    \weighti
    +
    \left(1 - \frac{1}{e}\right)
    \sum_{\driver=\msdriver-\mssecond+1}^\msdriver
    \weighti$,
    and
    $\OPT  
    \triangleq
    \sum_{\driver\in[\msfirst]}
    \weighti
    +
    \sum_{\driver=\msdriver-\mssecond+1}^\msdriver
    \weighti$.

\vspace{2mm}
\noindent\textbf{Step 4- reduction to the case with identical weights:}
In this step, we show that it is sufficient
to consider the identical weight 
1, $\weightp$, $\weightdp$ 
such that $\weightdp \leq 1 \leq \weightp$
for all 
$\driver \in \{1, \dots, \msfirst\} (\triangleq\Driverfirst)$,
$\driver \in \{\msdriver-\mssecond + 1, \dots, \msdriver\}
(\triangleq\Driversecond)$,
and
$\driver \in  
\{\msfirst+1, \dots, \msdriver - \mssecond\}
(\triangleq\Driversecondb)$),
respectively.

We first argue that it is 
sufficient to consider solutions where 
$\weight_{\driverTilde} 
\leq \weight_{\driver}
\leq \weight_{\driverBar}$ 
for all $\driverTilde\in \Driversecondb$, 
$\driver\in\Driverfirst$
and 
$\driverBar\in\Driversecond$.
Consider an arbitrary feasible solution for 
program~\eqref{eq:online hypermatchable}.
If there exists $\driverTilde \in \Driversecondb$
and $\driver \in \Driverfirst$, such that
$\weighti < \weight_{\driverTilde}$,
by swapping weights 
$\weighti$ and $\weight_{\driverTilde}$
and holding all other variables 
unchanged,
the denominator of 
objective function increases by
$\weight_{\driverTilde} - \weighti$,
and the numerator 
increases by 
$(1-\combprob)(\weight_{\driverTilde} - \weighti)$, which makes 
the objective value closer to 
$1-\combprob$. 
By \Cref{lem:program restriction},
it is sufficient to consider 
solutions after
this modification.
Similarly,
if there exists $\driverBar \in \Driversecond$
and $\driver \in \Driverfirst$, such that
$\weighti > \weight_{\driverBar}$,
by swapping weights 
$\weighti$ and $\weight_{\driverBar}$
and holding all other variables 
unchanged,
the objective value decreases, since
the denominator of 
objective function remains unchanged,
and the numerator 
decreases by 
$\sfrac{1}{e}\,(1-\combprob)
(\weighti - \weight_{\driverBar})$.
Hence, it is sufficient to consider solutions after
this modification.

Now, we show it is sufficient to consider
solutions with identical weights $\weighti$
among $\Driversecondb$.
Consider the modification which
sets $\weight_{\driverTilde}^\text{new} \gets \frac{1}{|\Driversecondb|}\sum_{\driverTilde'\in \Driversecondb}\weight_{\driverTilde'}$
for all $\driverTilde \in \Driversecondb$
and holds all other variables unchanged.
The objective 
value weakly decreases since $\msc$ weakly
increases. 
By standard algebra, $\msc_\text{new} \geq \weight_{\driverTilde}^\text{new}$
for all $\driverTilde \in \Driversecondb_\text{new}$.
If $\Driversecondb_\text{new} \not=\emptyset$,
the assumption that $\weight_{\driverTilde}^\text{new} \leq \weight_\driver^\text{new}$
for all $\driverTilde \in \Driversecondb_\text{new}$
and $\driver\in \Driverfirst_\text{new}\cup\Driversecond_\text{new}$
ensures that $\msc_\text{new} \leq \weight_\driver^\text{new}$
for all $\driver \in
\Driverfirst_\text{new}\cup\Driversecond_\text{new}$.
Hence, (Feasibility-3) is satisfied for the modified solution.
The other constraints are also satisfied by construction.

To show it is sufficient to
consider identical weights $\weighti$
among $\Driversecond$.
Consider an arbitrary feasible solution. 
Holding all other variables unchanged,
we set $\weight_{\driverBar}^\text{new}$ for all 
$\driverBar\in \Driversecond$
with identical weight 
such that the induced $\msc_\text{new}$ 
in modified solution is the same
as $\msc$ in the original solution.
Notice that the 
summation of $\weight_{\driverBar}^\text{new}$
over all $\driverBar \in \Driversecond$
weakly decreases,
and thus, the objective value 
moves further away from 
$1 - \sfrac{1}{e}\,(1-\combprob)$
after this modification.
By \Cref{lem:program restriction},
the optimal objective value
remains unchanged after this modification.
To show identical weights $\weight_i$
among $\Driverfirst$,
the same argument as the one for 
$\Driversecond$ applies.

Therefore, it is sufficient to consider solutions
of program~\eqref{eq:online hypermatchable}
with identical weights over $\Driverfirst$,
$\Driversecond$
and $\Driversecondb$ respectively.

We finish the proof by 
normalizing the weights of 
all $\driver$ in $\Driverfirst$
(resp.\ $\Driversecond, \Driversecondb$) 
as 1
(resp.\ $\weightp, \weightdp$);
relaxing variables $\msfirst, \mssecond$ to be real numbers; and then normalizing them by $\msdriver$ to be in the range $[0,1]$. This proves that program~\ref{eq:online final} is a relaxation to the original program, as desired.
\hfill\Halmos

\end{proof}

 \section{Single-stage Joint Matching and Pricing
for Demand Efficiency}
\label{apx:onestagepricing}
 In this section, we study a variant of the joint matching 
and pricing problem. We diverge from other results in this paper by essentially considering a single stage problem.
The setup of this problem can be thought of an special case of the two-stage joint matching/pricing problem (\Cref{sec:two-stage-matching-pricing-def,sec:matching and pricing}), where:
(i) the set of first stage demand vertices $\Ra$ is empty, thus, at the first stage
the platform only needs to post prices to demand vertices in $\Rp$;
and
(ii) the weights of supply vertices are zero,
thus, the goal of the platform is to
maintain
\emph{demand efficiency}, i.e., 
maximizing the expected total value of matched demand vertices.
We show \Cref{alg:matching and pricing} introduced in
\Cref{sec:matching and pricing} is a $\left(1-\sfrac{1}{e}\right)$-competitive for this single-stage
problem.

To show this competitive ratio, we have an observation as follows. The matching generated by 
\Cref{alg:matching and pricing} is the demand-weighted maximum matching between $\Rpa$ and $\Driver$. The weight of matched demand vertices in such a matching is equal to the weighted rank function of a transversal matroid defined on the demand side (for the bipartite graph $G[\Rp,\Driver]$), evaluated
for demand vertices $\Rpa$ who
accept their prices. Note that
each demand vertex in $\Rp$
is selected independently to be put into $\Rpa$.

At the same time, the optimal solution
in program~\ref{eq:ex ante} can be thought of as the expected weight of 
a distribution over bipartite matchings in $G[\Rp,\Driver]$, which is equal to the value of the same weighted rank function for demand vertices $\Rpa\primed$; however, this time
demand vertices $\Rpa\primed$ are drawn in a correlated fashion from $\Rp$, but
have the same marginal probabilities as in $\Rpa$.

This is tightly related to the concept of \emph{correlation gap}, introduced by \citet{ADSY-10} (\Cref{def:correlation gap}): given a set system and a set function,
correlation gap quantifies the maximum ratio between the expected
value of the set function on a randomized set whose elements are drawn
independently, and the expected value on another randomized 
set whose elements are drawn correlated with the same marginal. 
For submodular functions, \citet{ADSY-10} show the correlation
gap is at most $1-\sfrac{1}{e}$ (\Cref{lem:correlation gap}).
This technique has also been used 
in other stochastic optimization problems
\citep[e.g.,][]{yan-11}.

\begin{definition}
[correlation gap]
\label{def:correlation gap}
For a set system $([n], \feasibles)$
with non-negative weights
$\mathbf\weight=\{\weight_i\}_{i=1}^n$,
the \emph{weighted rank function}
$\rho^{\mathbf\weight}(S,\feasibles)$ is the maximum
of $\sum_{i\in T}\weight_i$ over
all $T\subseteq S$.
The \emph{correlation gap for 
the weighted rank function}
under feasibility $\feasibles$
is 
$\sup_{
\mathbf\weight,
\mathcal D}
\frac
{
\expect[S\sim \mathcal D]
{\rho^{\mathbf\weight}(S,\feasibles)}
}
{
\expect[S\sim \mathcal I(\mathcal D)]
{\rho^{\mathbf\weight}(S,\feasibles)}
}
$,
where 
$\mathcal{D}$ is a
distribution over $2^{[n]}$
with marginal probability
$\{\quant_i\}_{i=1}^n$,
and $S \sim \mathcal I(\mathcal D)$
 denotes
that each $i\in[n]$ is included
in $S$ independently and with probability $\quant_i$.
\end{definition}
Weighted rank functions of matroids are known to be submodular~\citep{sch-03}.
\begin{lemma}[\citealp{ADSY-10}]
\label{lem:correlation gap}
The correlation gap of any submodular function, including the weighted rank
function of a matroid, is 
at most $1-\sfrac{1}{e}$.
\end{lemma}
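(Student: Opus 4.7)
The plan is to reduce the statement to the classical inequality between the multilinear extension and the concave closure of a submodular function. Fix any non-negative monotone submodular $f = \rho^{\mathbf{w}}(\cdot, \mathcal{X})$ (in particular the weighted rank function of any matroid) and any correlated distribution $\mathcal{D}$ over $2^{[n]}$ with marginal vector $\mathbf{q} = (q_1, \ldots, q_n)$. Let
\[
F(\mathbf{q}) \triangleq \sum_{S \subseteq [n]} f(S) \prod_{i \in S} q_i \prod_{i \notin S}(1 - q_i)
\]
denote the multilinear extension evaluated at $\mathbf{q}$, and let
\[
f^{+}(\mathbf{q}) \triangleq \max\left\{\sum_{S} \lambda_S f(S) : \lambda_S \geq 0,\ \sum_S \lambda_S = 1,\ \sum_S \lambda_S \mathbf{1}_S = \mathbf{q}\right\}
\]
denote the concave closure of $f$ at $\mathbf{q}$. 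By direct expansion, $\mathbb{E}_{S \sim \mathcal{I}(\mathcal{D})}[f(S)] = F(\mathbf{q})$, and by the definition of $f^{+}$ (using the convex combination induced by $\mathcal{D}$ itself), $\mathbb{E}_{S \sim \mathcal{D}}[f(S)] \leq f^{+}(\mathbf{q})$. It therefore suffices to establish the single inequality $F(\mathbf{q}) \geq (1 - 1/e)\, f^{+}(\mathbf{q})$.

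First I would prove this inequality via the continuous greedy argument of Calinescu-Chekuri-P\'al-Vondr\'ak. Consider the trajectory $\mathbf{x}(t) = (1 - e^{-t})\mathbf{q}$ for $t \geq 0$. Using the formula $\partial F / \partial x_i = \mathbb{E}\big[f(R(\mathbf{x}) \cup \{i\}) - f(R(\mathbf{x}) \setminus \{i\})\big]$ (where $R(\mathbf{x})$ is an independent random subset with marginals $\mathbf{x}$), together with submodularity of $f$ and the optimal decomposition $\mathbf{q} = \sum_S \lambda_S^{\star} \mathbf{1}_S$ achieving $f^{+}(\mathbf{q})$, one derives the differential inequality $\frac{d}{dt} F(\mathbf{x}(t)) \geq f^{+}(\mathbf{q}) - F(\mathbf{x}(t))$. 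Integrating from $t = 0$ and using $F(\mathbf{x}(0)) = f(\emptyset) \geq 0$ yields $F(\mathbf{q}) = F(\mathbf{x}(\infty)) \geq (1 - 1/e)\, f^{+}(\mathbf{q})$. Chaining the bounds gives
\[
\mathbb{E}_{S \sim \mathcal{I}(\mathcal{D})}[f(S)] \;=\; F(\mathbf{q}) \;\geq\; \left(1 - \tfrac{1}{e}\right) f^{+}(\mathbf{q}) \;\geq\; \left(1 - \tfrac{1}{e}\right) \mathbb{E}_{S \sim \mathcal{D}}[f(S)],
\]
which is the correlation gap bound in the direction used in Appendix~\ref{apx:onestagepricing}.

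The main obstacle will be the differential-inequality step. The delicate part is to lower bound each coordinate derivative $\partial F/\partial x_i$ at $\mathbf{x}(t)$ by the expected marginal gain $\mathbb{E}[f(R(\mathbf{x}(t)) \cup \{i\}) - f(R(\mathbf{x}(t)))]$, to sum weighted by the optimal decomposition $\{\lambda_S^\star\}$, and then to invoke submodularity to pass from coordinate-wise marginal gains to the gain $f(S) - f(R(\mathbf{x}(t)))$ for each $S$ in the decomposition. Aggregating yields the inequality $\frac{d}{dt} F(\mathbf{x}(t)) \geq f^{+}(\mathbf{q}) - F(\mathbf{x}(t))$. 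Because this continuous-greedy analysis is by now standard in the submodular optimization literature, the authors simply cite \citet{ADSY-10} rather than reproving it. Noting that weighted matroid rank functions are known to be monotone, non-negative and submodular (see e.g.\ \cite{sch-03}) delivers the specialization claimed in the lemma.
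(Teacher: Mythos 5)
The paper does not prove this lemma at all---it is imported as a black box from \citet{ADSY-10}---so there is no internal proof to match against. What you have done is supply the standard self-contained argument: identify the independent expectation with the multilinear extension $F(\mathbf{q})$, upper-bound the correlated expectation by the concave closure $f^{+}(\mathbf{q})$, and prove $F(\mathbf{q})\geq(1-\sfrac{1}{e})\,f^{+}(\mathbf{q})$ by the continuous-greedy differential inequality. This is a legitimate and well-known route to the correlation gap (essentially the Calinescu--Chekuri--P\'al--Vondr\'ak lemma, also the proof popularized by Yan), and you correctly pin down the direction in which the paper actually uses the bound: the definition in the appendix takes the supremum of the ratio correlated over independent, which is always at least $1$, so the lemma's literal phrasing ``at most $1-\sfrac{1}{e}$'' only makes sense as the reciprocal statement $\mathbb{E}_{S\sim\mathcal{I}(\mathcal{D})}[f(S)]\geq(1-\sfrac{1}{e})\,\mathbb{E}_{S\sim\mathcal{D}}[f(S)]$, which is exactly what you prove and what Theorem EC.4 needs. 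You are also right to restrict to \emph{monotone} non-negative submodular functions: the $(1-\sfrac{1}{e})$ bound fails for general non-monotone submodular functions, and monotonicity is genuinely used in your chain $f(R\cup S)\geq f(S)$; since weighted matroid rank functions are monotone, the specialization is unaffected.

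One technical slip to fix: with the trajectory $\mathbf{x}(t)=(1-e^{-t})\mathbf{q}$ you have $\mathbf{x}'(t)=e^{-t}\mathbf{q}$, so the correct differential inequality is
\begin{equation*}
\tfrac{d}{dt}F(\mathbf{x}(t))\;=\;e^{-t}\,\mathbf{q}\cdot\nabla F(\mathbf{x}(t))\;\geq\;e^{-t}\bigl(f^{+}(\mathbf{q})-F(\mathbf{x}(t))\bigr),
\end{equation*}
not the one you wrote without the $e^{-t}$ factor; the stronger inequality you state would integrate to $F(\mathbf{q})\geq f^{+}(\mathbf{q})$ as $t\to\infty$, which is false in general. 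The fix is cosmetic: either keep your trajectory and carry the $e^{-t}$ factor (substituting $s=1-e^{-t}$ recovers $F(\mathbf{q})\geq(1-\sfrac{1}{e})f^{+}(\mathbf{q})$ at $s=1$), or use the linear parametrization $\mathbf{x}(t)=t\mathbf{q}$ on $[0,1]$, for which $\tfrac{d}{dt}F(\mathbf{x}(t))=\mathbf{q}\cdot\nabla F(\mathbf{x}(t))\geq f^{+}(\mathbf{q})-F(\mathbf{x}(t))$ does hold and integrates to the same bound. With that correction the argument is complete.
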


\begin{theorem}
\label{thm:single stage pricing}
In single-stage joint matching and 
pricing for demand efficiency,
\Cref{alg:matching and pricing}
is $(1-\sfrac{1}{e})$-competitive.
\end{theorem}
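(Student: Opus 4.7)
The plan is to reduce the competitive guarantee to the correlation gap bound of \citet{ADSY-10} applied to the weighted rank function of the transversal matroid induced by the demand-supply bipartite graph. I will work with the two natural distributions on $2^{\Rp}$ arising in the problem, both having the same marginals $\{y^*_i\}_{i\in\Rp}$ coming from the optimal solution of program~\ref{eq:ex ante}.

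First, I would characterize the algorithm's expected objective. Since $\Ra = \emptyset$ and all supply weights are zero, Algorithm~\ref{alg:matching and pricing} skips its first-stage simulation/discarding loop and simply posts the price $p_i = \Thresh_i(y^*_i)$ to each $i \in \Rp$. Let $\hat{w}_i \triangleq w_i(p_i) = \Ex{\val_i \mid \val_i \geq p_i}$, and let $\rho^{\hat{\mathbf w}}(S)$ denote the weighted rank function of the transversal matroid on $\Rp$ induced by $G[\Rp,\Driver]$ with weights $\{\hat{w}_i\}$. Conditioned on the realized acceptance set $\Rpa$, the algorithm outputs a matching of total $\hat{w}$-weight equal to $\rho^{\hat{\mathbf w}}(\Rpa)$; since $\Rpa$ depends only on the acceptance events (not on the realized values), a standard conditioning argument gives $\Ex{\objr} = \Ex[\Rpa \sim \mathcal I]{\rho^{\hat{\mathbf w}}(\Rpa)}$, where $\Rpa \sim \mathcal I$ means each $i$ is included independently with probability $y^*_i$.

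Next, I would construct a correlated distribution $\mathcal D$ on $2^{\Rp}$ with the same marginals. By the integrality of the bipartite matching polytope, the ex-ante fractional matching $\{x^*_{ij}\}$ can be written as a convex combination of integral matchings in $G[\Rp,\Driver]$. Sampling an integral matching from this convex combination and letting $\Rpa'$ be its matched demand vertices defines $\mathcal D$; by construction $\Pr[i \in \Rpa'] = y^*_i$, and $\Rpa'$ is always independent in the transversal matroid, so $\rho^{\hat{\mathbf w}}(\Rpa') = \sum_{i \in \Rpa'} \hat{w}_i$ almost surely. Taking expectations yields
\begin{equation*}
\Ex[\Rpa' \sim \mathcal D]{\rho^{\hat{\mathbf w}}(\Rpa')} \;=\; \sum_{i\in\Rp} \hat{w}_i\, y^*_i \;=\; \text{ex-ante objective} \;\geq\; \optoffmp,
\end{equation*}
where the last inequality is Lemma~\ref{lem:ex ante} (using that supply and $\Ra$ terms vanish here).

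Finally, since transversal matroid rank functions are submodular, Lemma~\ref{lem:correlation gap} gives
\begin{equation*}
\Ex[\Rpa \sim \mathcal I]{\rho^{\hat{\mathbf w}}(\Rpa)} \;\geq\; \left(1-\tfrac{1}{e}\right) \Ex[\Rpa' \sim \mathcal D]{\rho^{\hat{\mathbf w}}(\Rpa')}.
\end{equation*}
Chaining the two displays proves $\Ex{\objr} \geq (1-1/e)\,\optoffmp$. The main subtlety, and the step I would take most care with, is verifying that the algorithm's expected welfare really equals $\Ex[\mathcal I]{\rho^{\hat{\mathbf w}}}$ rather than something smaller: the algorithm's matching rule depends on $\Rpa$ but not on the realized values $\val_i$, so conditioning on $\Rpa$ and using $\Ex{\val_i \mid i\in\Rpa} = \hat{w}_i$ cleanly produces the rank function; this is the conceptual bridge that lets correlation gap be invoked as a black box.
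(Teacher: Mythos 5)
Your proposal is correct and follows essentially the same route as the paper's proof: both identify the algorithm's expected welfare with the weighted rank function of the demand-side transversal matroid evaluated on an independent draw with marginals $y^*_i$, identify the objective of program~\ref{eq:ex ante} with the same rank function on a correlated draw (a distribution over integral matchings) with the same marginals, and then invoke the correlation-gap bound of \Cref{lem:correlation gap} together with \Cref{lem:ex ante}. The only difference is that you spell out the conditioning argument equating the expected realized values with the rank function and the explicit construction of the correlated distribution, both of which the paper's terser proof leaves implicit.
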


\begin{proof}{\emph{Proof of \Cref{thm:single stage pricing}.}}
Let $\assignprobs,
\{\assignvertexprobi,
\assignvertexprobj\}_{\rider \in \Rider,\driver\in\Driver}$ be the optimal solution of the convex program~\ref{eq:ex ante}.
Note that the expected weight of 
the matching generated by \Cref{alg:matching and pricing}
and the objective value of program~\ref{eq:ex ante} are equal to 
the numerator and denominator of the correlation
gap for the weighted rank function of the transversal matroid defined on the demand side,
with weights
$\weight_\rider\primed \triangleq \weight_\rider(\Thresh_\rider(\assignvertexprobi))$,
and marginal probabilities $\quant_\rider\primed\triangleq \assignvertexprobi$
for all demand vertices $\rider\in\Rp$.
Invoking \Cref{lem:correlation gap} finishes
the proof.\hfill\Halmos
\end{proof}
 \section{Two-stage Matching with General Weights}
 \label{app:general-weight}
 
In this section, we consider two 
variations of the two-stage stochastic matching problem with
weights $\{\weight_\rider\}_{\rider\in \Rider}$
on demand vertices and its generalization with 
weights $\{\weight_e\}_{e\in E}$
on edges. We show that a simple algorithm, i.e., 
\Cref{alg:greedy with discard},
attains the optimal
competitive ratio $\sfrac{1}{2}$
against the optimum offline policy.

\begin{algorithm}
\begin{algorithmic}[1]
\State{\textbf{input:} bipartite graph $G=(\Rider,\Driver,E)$, non-negative edge weights $\{w_e\}_{e\in E}$.}
\State{\textbf{output:} bipartite matching $M_1$ in the graph $G[\Ra, \Driver]$.}
\State{Let $M_1$ be the maximum weighted 
matching in the graph $G[\Ra , \Driver]$.}
\State{Discard each edge $e \in M_1$
from matching $M_1$ with probability $\sfrac{1}{2}$.
}
\State{Return $M_1$.}
\end{algorithmic}
\caption{Greedy with Random Discard}
\label{alg:greedy with discard}
\end{algorithm}

\begin{lemma}
\label{lem:prior free weighted edge}
In the two-stage matching problem with weighted
edges, 
$\Cref{alg:greedy with discard}$
is $\sfrac{1}{2}$-competitive against
the optimum offline policy.
\end{lemma}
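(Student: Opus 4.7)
The plan is to argue pointwise over each realization of $\Rpa$ and then take expectations, exploiting a charging argument against the edges of the optimum offline matching. Fix an arbitrary realization of $\Rpa$ and let $M^*$ be the edge set that $\optoff$ chooses; its weight equals $\maxmatch(\Ra\cup\Rpa,\Driver)$. Partition $M^* = M^*_A \cup M^*_B$, where $M^*_A$ collects the edges of $M^*$ whose demand endpoint lies in $\Ra$ and $M^*_B$ collects those whose demand endpoint lies in $\Rpa$. By linearity it suffices to show
$\expect{w(M_1)} \geq \tfrac12 w(M^*_A)$ and $\expect{w(M_2)} \geq \tfrac12 w(M^*_B)$, conditional on this realization.

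For the first-stage bound, note that $M^*_A$ is itself a feasible matching in $G[\Ra,\Driver]$, so the maximum weighted matching that \Cref{alg:greedy with discard} computes before discarding has weight at least $w(M^*_A)$. Since each of its edges is retained with probability $\tfrac12$, the expected weight of the retained matching $M_1$ is at least $\tfrac12 w(M^*_A)$.

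For the second-stage bound, the key observation is that for every supply vertex $j \in \Driver$, the probability that $j$ is \emph{not} matched in $M_1$ is at least $\tfrac12$: if $j$ is unmatched in the pre-discard matching it is always available, and otherwise it is available exactly when its unique incident edge is discarded, which happens with probability $\tfrac12$. Now consider the edges of $M^*_B$: their supply endpoints are distinct, so the subset of $M^*_B$ whose supply endpoints survived the first stage forms a feasible matching between $\Rpa$ and the remaining supply. Since \Cref{alg:greedy with discard} (implicitly) finalizes $M_2$ as the maximum-weight matching between $\Rpa$ and the remaining supply, $w(M_2)$ dominates the weight of this surviving subset of $M^*_B$, giving $\expect{w(M_2)} \geq \sum_{(i,j)\in M^*_B} w_{ij}\cdot \Pr[j\text{ unmatched in }M_1] \geq \tfrac12 w(M^*_B)$.

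Summing the two bounds and taking expectation over $\Rpa$ yields the $\tfrac12$ competitive ratio. The main subtlety—and the one place I would be careful—is the second-stage argument: one must justify that the surviving edges of $M^*_B$ really form a valid matching (disjoint supply endpoints from $M^*$, all demand endpoints in $\Rpa$), and that the bound on $\Pr[j\text{ unmatched in }M_1]$ uses only the marginal probability $\tfrac12$ per edge of the pre-discard matching rather than any joint distribution. Optimality of the $\tfrac12$ factor is immediate from the upper bound remark already stated in the introduction (and follows from the standard two-edge instance where one edge is adjacent to $\Ra$ and the other to $\Rpa$ with the second-stage vertex arriving with probability one).
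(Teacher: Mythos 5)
Your proposal is correct and follows essentially the same two-part argument as the paper: bound the first-stage contribution by noting the pre-discard matching dominates the offline optimum's first-stage edges and each is kept with probability $\tfrac{1}{2}$, and bound the second-stage contribution by observing each supply vertex survives the first stage with probability at least $\tfrac{1}{2}$ so the surviving portion of the offline optimum's second-stage edges is a feasible matching dominated by $M_2$. Your write-up is merely more explicit about the charging details than the paper's terse version.
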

\begin{proof}{\textsl{Proof.}}
We consider the expected total weights of
edges matched in the first stage
and the second stage, separately.
At the first stage, since 
$\Cref{alg:greedy with discard}$
finds the maximum weighted matching 
and randomly discards each edge with probability
$\sfrac{1}{2}$, 
its final $M_1$ is at least 
$\sfrac{1}{2}$-approximation to the 
total weights of edges matched in the optimum
offline policy at the first stage.
Since each supply vertex is 
matched 
at the first stage 
with probability at most $\sfrac{1}{2}$,
the maximum weighted matching with all 
remaining supply vertices and new demand vertices
at the second stage
is at least
$\sfrac{1}{2}$-approximation to the 
total weights of edges matched in the optimum
offline policy at the second stage.
Combining the approximation for both stages,
the proof is finished.\hfill\Halmos
\end{proof}

\begin{example}
\label{example:Bayesian weighted rider}
Consider the following instance of two-stage
matching with weighted
demand vertices:
There is
one supply vertex and two demand vertices.
Demand vertex 1 is at the first stage 
with weight $1$.
Demand vertex 2
with weight $\sfrac{1}{\epsilon}$ appears at the second stage
with probability $\epsilon$.
Both demand vertices can be matched to the 
supply vertex.
\end{example}

\begin{lemma}
\label{lem:Bayesian weighted rider}
In the two-stage matching problem with weighted
demand vertices, no policy can obtain a competitive ratio better than $\tfrac{1}{2}$ against the optimum offline
policy.
\end{lemma}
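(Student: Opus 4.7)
The plan is to analyze the instance described in Example~\ref{example:Bayesian weighted rider} and show that as $\epsilon \to 0$ the ratio of any policy's expected weight to the optimum offline benchmark approaches $\tfrac{1}{2}$.

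First I would evaluate the optimum offline benchmark. Since the omniscient policy knows in advance whether demand vertex 2 arrives, it simply assigns the unique supply vertex to whichever demand vertex is heavier: with probability $\epsilon$ it matches vertex 2 and collects $\tfrac{1}{\epsilon}$, and with probability $1-\epsilon$ it matches vertex 1 and collects $1$. Hence
\[
\optoff \;=\; \epsilon \cdot \tfrac{1}{\epsilon} + (1-\epsilon)\cdot 1 \;=\; 2-\epsilon.
\]

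Second, I would upper-bound the expected weight of an arbitrary (possibly randomized and adaptive) policy $\alg$. Let $p \in [0,1]$ denote the probability that $\alg$ matches demand vertex 1 in the first stage. In that event the supply is consumed and the second stage contributes nothing. With the complementary probability $1-p$ the supply remains, in which case the best the policy can do is match vertex 2 whenever she arrives, contributing an expected weight of $\epsilon \cdot \tfrac{1}{\epsilon} = 1$. Consequently
\[
\expect{\texttt{ALG}} \;\leq\; p\cdot 1 + (1-p)\cdot 1 \;=\; 1,
\]
uniformly over all policies and all $\epsilon$. Combining the two bounds yields $\expect{\texttt{ALG}}/\optoff \leq \tfrac{1}{2-\epsilon}$, and taking $\epsilon \to 0$ gives the desired upper bound of $\tfrac{1}{2}$.

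The main subtlety, rather than an obstacle, is justifying that the randomization and adaptivity available to the policy cannot do better than the simple $p$-parameterized bound above. This is immediate here because the only decision with a nontrivial impact is whether to consume the supply in the first stage; any second-stage behavior is dominated by the greedy choice of matching vertex 2 when possible, which is already reflected in the upper bound. With this observation the argument reduces to the two computations above and the limit $\epsilon\to 0$.
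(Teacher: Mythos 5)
Your proposal is correct and uses exactly the same instance and argument as the paper (Example~\ref{example:Bayesian weighted rider}): any policy earns expected weight at most $1$ while the optimum offline earns $2-\epsilon$, giving the ratio $\tfrac{1}{2}$ in the limit. The only difference is that you spell out the $p$-parameterized upper bound on the algorithm's value, which the paper states without elaboration.
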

\begin{proof}{\textsl{Proof.}}
Consider the instance in \Cref{example:Bayesian weighted rider}.
For any policy, the expected total weights
of matched demand vertices is at most $1$.
However, in the optimum offline policy 
the expected total weights of matched demand vertices
is $\epsilon\cdot\sfrac{1}{\epsilon} + (1 - \epsilon)\cdot 1 = 2 - \epsilon$.\hfill\Halmos
\end{proof}

Clearly, the above result also implies that in the two-stage matching problem with weighted
edges, no policy can obtain a competitive ratio better than $\tfrac{1}{2}$ against the optimum offline policy.

 \section{Missing Discussions from \Cref{sec:numerical}}
\subsection{Policies used in \Cref{sec:numerical}}
\label{apx:numerical-policies}
We evaluate and compare the 
performances of six different policies/benchmarks. The performance is defined to be the expected total weight of all matched drivers at the end of the second stage:
\paragraph{--Weighted balanced utilization (\texttt{WBU}).}
This policy is \Cref{alg:weighted-skeleton}
with convex function $g(x)=\exp(x)$
discussed in \Cref{sec:matching}.
It outputs $M_1$ in the first stage
without any knowledge of 
the second stage.

\paragraph{--Submodular maximization with local search (\texttt{SM})}. 
This policy is the greedy-style algorithm 
discussed in 
\Cref{prop:cont-greedy}.
As the choice of the approximation algorithm for our particular submodular maximization problem, we implement the ``Non-Oblivious Local Search'', which is fast and easy to implement; see \citet{SVW-17} for more details. We set the precision parameter to $\epsilon = 0.001$.

\paragraph{--Submodular maximization with limited iterations of local search
(\texttt{SM-Limit})}.
This policy is the same as the local search based
algorithm \texttt{SM} for submodular maximization, with only one difference: we limit the maximum number of iterations to 5. We also set the precision parameter to $\epsilon = 0.001$ as before.
Note that in practical implementation,
it is common to restrict 
the maximum number of iterations
of local search to control the running time.

\paragraph{--Hedge and greedy (\texttt{HG}).}
This policy is \Cref{alg:online}
discussed in \Cref{sec:improvedCR}.
It first finds the better of 
\texttt{WBU} and \texttt{SM} in terms of the expected total weight of the final matching (by running several rounds of Monte Carlo simulation to simulate the second stage). It then follows the winning algorithm.

\paragraph{--Myopic greedy policy (\texttt{GR}).}
In both stages, 
this policy outputs the maximum weighted matching
between current demand vertices 
and  available supply vertices.
This algorithm also outputs $M_1$ in the first stage
without any knowledge of 
the second stage.

\paragraph{--Optimum offline.}
This is the
omniscient 
offline policy that knows the exact
realization of available second stage demands $\Rpa$,
and picks the maximum weighted matching 
between $\Ra\cup\Rpa$ and $\Driver$.




\end{document}